\definecolor{darkblue}{rgb}{0.05,0.25,0.65}
\definecolor{greenii}{RGB}{20,140,10}
\definecolor{lightgray}{rgb}{0.9,0.9,0.9}
\definecolor{orangeii}{RGB}{200,100,5}
\newcolumntype{L}[1]{>{\raggedright\let\newline\\\arraybackslash\hspace{0pt}}m{#1}}
\newcolumntype{C}[1]{>{\centering\let\newline\\\arraybackslash\hspace{0pt}}m{#1}}
\newcolumntype{R}[1]{>{\raggedleft\let\newline\\\arraybackslash\hspace{0pt}}m{#1}}
\newcommand{\mapsup}{\rotatebox[origin=c]{90}{$\mapsto$}}
\def\acts{\raisebox{1pt}{\;\rotatebox[origin=c]{90}{$\curvearrowright$}}}
\DeclareRobustCommand{\rchi}{{\mathpalette\irchi\relax}}
\newcommand{\irchi}[2]{\raisebox{\depth}{$#1\chi$}} 
\newif\if@sup
\newtoks\@sups
\def\append@sup#1{\edef\act{\noexpand\@sups={\the\@sups #1}}\act}%
\def\reset@sup{\@supfalse\@sups={}}%
\def\mk@scripts#1#2{\if #2/ \if@sup ^{\the\@sups}\fi \else%
  \ifx #1_ \if@sup ^{\the\@sups}\reset@sup \fi {}_{#2}%
  \else \append@sup#2 \@suptrue \fi%
  \expandafter\mk@scripts\fi}
\def\tensor#1#2{\reset@sup#1\mk@scripts#2_/}
\def\multiscripts#1#2#3{\reset@sup{}\mk@scripts#1_/#2%
  \reset@sup\mk@scripts#3_/}
\newbox\slashbox \setbox\slashbox=\hbox{$/$}
\def\itex@pslash#1{\setbox\@tempboxa=\hbox{$#1$}
  \@tempdima=0.5\wd\slashbox \advance\@tempdima 0.5\wd\@tempboxa
  \copy\slashbox \kern-\@tempdima \box\@tempboxa}
\def\slash{\protect\itex@pslash}
\def\clap#1{\hbox to 0pt{\hss#1\hss}}
\def\mathllap{\mathpalette\mathllapinternal}
\def\mathrlap{\mathpalette\mathrlapinternal}
\def\mathclap{\mathpalette\mathclapinternal}
\def\mathllapinternal#1#2{\llap{$\mathsurround=0pt#1{#2}$}}
\def\mathrlapinternal#1#2{\rlap{$\mathsurround=0pt#1{#2}$}}
\def\mathclapinternal#1#2{\clap{$\mathsurround=0pt#1{#2}$}}
\let\oldroot\root
\def\root#1#2{\oldroot #1 \of{#2}}
\renewcommand{\sqrt}[2][]{\oldroot #1 \of{#2}}
\DeclareSymbolFont{symbolsC}{U}{txsyc}{m}{n}
\DeclareSymbolFont{stmry}{U}{stmry}{m}{n}
\DeclareFontFamily{OMX}{MnSymbolE}{}
\DeclareSymbolFont{mnomx}{OMX}{MnSymbolE}{m}{n}
\DeclareFontShape{OMX}{MnSymbolE}{m}{n}{
    <-6>  MnSymbolE5
   <6-7>  MnSymbolE6
   <7-8>  MnSymbolE7
   <8-9>  MnSymbolE8
   <9-10> MnSymbolE9
  <10-12> MnSymbolE10
  <12->   MnSymbolE12}{}
\def\Decl@Mn@Delim#1#2#3#4{%
  \if\relax\noexpand#1%
    \let#1\undefined
  \fi
  \DeclareMathDelimiter{#1}{#2}{#3}{#4}{#3}{#4}}
\def\Decl@Mn@Open#1#2#3{\Decl@Mn@Delim{#1}{\mathopen}{#2}{#3}}
\def\Decl@Mn@Close#1#2#3{\Decl@Mn@Delim{#1}{\mathclose}{#2}{#3}}
\Decl@Mn@Open{\llangle}{mnomx}{'164}
\Decl@Mn@Close{\rrangle}{mnomx}{'171}
\Decl@Mn@Open{\lmoustache}{mnomx}{'245}
\Decl@Mn@Close{\rmoustache}{mnomx}{'244}
\DeclareRobustCommand\widecheck[1]{{\mathpalette\@widecheck{#1}}}
\def\@widecheck#1#2{%
    \setbox\z@\hbox{\m@th$#1#2$}%
    \setbox\tw@\hbox{\m@th$#1%
       \widehat{%
          \vrule\@width\z@\@height\ht\z@
          \vrule\@height\z@\@width\wd\z@}$}%
    \dp\tw@-\ht\z@
    \@tempdima\ht\z@ \advance\@tempdima2\ht\tw@ \divide\@tempdima\thr@@
    \setbox\tw@\hbox{%
       \raise\@tempdima\hbox{\scalebox{1}[-1]{\lower\@tempdima\box
\tw@}}}%
    {\ooalign{\box\tw@ \cr \box\z@}}}
\def\udots{\mathinner{\mkern2mu\raise\p@\hbox{.}
\mkern2mu\raise4\p@\hbox{.}\mkern1mu
\raise7\p@\vbox{\kern7\p@\hbox{.}}\mkern1mu}}
\def\1{{\bf 1}}
\def\<{\langle}
\def\>{\rangle}
\renewcommand{\(}{\begin{equation}}
\renewcommand{\)}{\end{equation}}
\newcommand{\bea}{\begin{eqnarray*}}
\newcommand{\eea}{\end{eqnarray*}}
\theoremstyle{italics}
\newtheorem{theorem}{Theorem}[section]
\newtheorem{lemma}[theorem]{Lemma}
\newtheorem{prop}[theorem]{Proposition}
\newtheorem{cor}[theorem]{Corollary}
\theoremstyle{definition}
\newtheorem{defn}[theorem]{Definition}
\newtheorem{example}[theorem]{Example}
\newtheorem{remark}[theorem]{Remark}
\newtheorem{note[theorem]}{Note}
\renewcommand{\emph}{\textit}
\begin{document}

\title{Twistorial Cohomotopy implies Green-Schwarz anomaly cancellation}

\author{
  Domenico Fiorenza,
  \thanks{
    Domenico Fiorenza, {\it Dipartimento di Matematica,
    Sapienza Universit{\`a} di Roma, Piazzale Aldo Moro 2, 00185 Rome, Italy.}
    {\tt fiorenza@mat.uniroma1.it}
  },
  \;\;\;
  Hisham Sati
  \thanks{
   Hisham Sati, {\it Mathematics, Division of Science,
   and Center for Quantum and Topological Systems (CQTS), NYUAD Research Institute,
   New York University Abu Dhabi, UAE},
   {\tt hsati@nyu.edu}
  },
  \;\;\;
  Urs Schreiber
  \thanks{
     Urs Schreiber, {\it Mathematics, Division of Science,
     and Center for Quantum and Topological Systems (CQTS), NYUAD Research Institute,
     New York University Abu Dhabi, UAE;
     on leave from Czech Academy of Science, Prague.}
     {\tt us13@nyu.edu}
  }
}

\maketitle

\begin{abstract}

\noindent We characterize the integral cohomology and
the rational homotopy type of the
maximal Borel-equivariantization of the combined Hopf/twistor fibration,
and find that subtle relations satisfied by the
cohomology generators are
just those that govern Ho{\v r}ava-Witten's
proposal for the extension of the Green-Schwarz mechanism
from heterotic string theory to heterotic M-theory.
We discuss how this squares with the {\it Hypothesis H}
that the elusive mathematical foundation of M-theory is based
on charge quantization in tangentially twisted unstable Cohomotopy theory.

\end{abstract}

\medskip

\tableofcontents

\newpage

\section{Introduction and overview}

  \noindent
  {\bf The Green-Schwarz mechanism in heterotic M-theory.}
  At the heart of M-theory
  (the conjectural non-perturbative completion of type IIA string theory,
  see \cite{Duff96}\cite{Duff98}\cite{Duff99})
  is the proposal \cite[(1.2)]{Witten97}\cite[(1.2)]{Witten97b}
  that the \emph{difference} of the classes of:
\begin{itemize}
\vspace{-3mm}
  \item[{\bf (i)}] the flux density $G_4$ of
  the higher gauge field of M-theory (the \emph{C-field},
  or \emph{3-index A-field} \cite{CremmerJuliaScherk78}),
\vspace{-3mm}
  \item[{\bf (ii)}] $\sfrac{1}{4}$th
  of the Pontrjagin form of the spin connection $\omega$ on
  spacetime $Y^{11}$ (e.g. \cite[\S XII.4]{KobayashiNomizu63}\cite[p. 10]{GS-KO}),
  \end{itemize}

 \vspace{-3mm}
  \noindent
   lifts to an integral class:
  \vspace{-.4cm}
  \begin{equation}
    \label{BareShiftedFluxQuantization}
    \overset{
      \mathclap{
      \raisebox{3pt}{
        \tiny
        \color{darkblue}
        \bf
        \begin{tabular}{c}
          C-field
          \\
          4-flux density
        \end{tabular}
      }
      }
    }{
     [G_4]
    }
    \;\;\;-\;\;\;
    \overset{
      \mathclap{
      \raisebox{3pt}{
        \tiny
        \color{darkblue}
        \bf
        \begin{tabular}{c}
          $\sfrac{1}{2}$ gravitational
          \\
          instanton density
        \end{tabular}
      }
      }
    }{
      \big[\tfrac{1}{4}p_1(\omega)\big]
    }
    \;\;\in
    \xymatrix{
      \overset{
        \raisebox{3pt}{
          \tiny
          \color{darkblue}
          \bf
          \begin{tabular}{c}
            integral cohomology of
            \\
            11-d spacetime
          \end{tabular}
        }
      }{
      H^4
      \big(
        Y^{11};\, \mathbb{Z}
      \big)
      }
      \ar[rr]^-{
        \mbox{
          \tiny
          \color{greenii}
          \bf
          rationalization
        }
      }
      &&
      \overset{
        \raisebox{3pt}{
          \tiny
          \color{darkblue}
          \bf
          \begin{tabular}{c}
            real cohomology of
            \\
            11-d spacetime
          \end{tabular}
        }
      }{
        H^4
        \big(
          Y^{11};\, \mathbb{R}
        \big).
      }
    }
  \end{equation}

  \vspace{-2mm}
  One motivation for this proposal
  \cite[\S 2.1]{Witten97}
  comes from heterotic M-theory
  (Ho{\v r}ava-Witten theory
  \cite{HoravaWitten95} \cite{Witten96}\cite{HoravaWitten96b}\cite{DOPW99}\cite{DOPW00}\cite{Ovrut02},
  the conjectural non-perturbative
  completion of heterotic string theory
  \cite{GHMR85}\cite{GHMR86}\cite{AGLP12}).
  Here the celebrated
  (``first superstring revolution'', see \cite{Schwarz07})
  {\it Green-Schwarz anomaly cancellation mechanism}
  \cite{GreenSchwarz84}\cite{CHSW85}
  (review in \cite[\S 2.2]{Witten99}\cite{Freed00})
  in heterotic string theory,
  which in itself is understood clearly,
  is argued to imply, upon lift to heterotic M-theory,
  that the restriction of the 4-flux $G_4$
  to an MO9-plane $X^{10}$ inside 11-dimensional spacetime $Y^{11}$ satisfies
  this relation \cite[(1.13)]{HoravaWitten96b}:\footnote{
    Our normalization convention for $G_4$ absorbs a factor of
    $\sfrac{-1}{2\pi}$ compared to \cite{Witten96}.
  }
  \vspace{-2mm}
  \begin{equation}
    \label{FluxQuantizationOnMO9}
    \overset{
      \mathclap{
      \raisebox{3pt}{
        \tiny
        \color{darkblue}
        \bf
        \begin{tabular}{c}
          C-field 4-flux density
          \\
          restricted to $X^{10}$
        \end{tabular}
      }
      }
    }{
      \big[
        G_4
      \big]\lvert_{X^{10}}
    }
    \;\;\;=\;
    \overset{
      \mathclap{
      \raisebox{3pt}{
        \tiny
        \color{darkblue}
        \bf
        \begin{tabular}{c}
          $\sfrac{1}{2}$ gravitational
          \\
          instanton density
        \end{tabular}
      }
      }
    }
    {
      \big[\tfrac{1}{4}p_1(\omega)\big]\vert_{X^{10}}
    }
    \;-\;
    \overset{
      \mathclap{
      \raisebox{4pt}{
        \tiny
        \color{darkblue}
        \bf
        \begin{tabular}{c}
          gauge instanton
          \\
          density on $X^{10}$
        \end{tabular}
      }
      }
    }{
      [c_2(A)]
    }
    \;\;
    \in
    \;
    \xymatrix{
      \overset{
        \mathclap{
        \raisebox{3pt}{
          \tiny
          \color{darkblue}
          \bf
          \begin{tabular}{c}
            integral cohomology
            \\
            of 10d MO9-plane
          \end{tabular}
        }
        }
      }{
        H^4
        \big(
          X^{10};\,
          \mathbb{Z}
        \big)
      }
      \ar[rr]^-{
        \mbox{
          \tiny
          \color{darkblue}
          \bf
          rationalization
        }
      }
      &&
      \overset{
        \mathclap{
        \raisebox{3pt}{
          \tiny
          \color{darkblue}
          \bf
          \begin{tabular}{c}
            real cohomology
            \\
            of 10d MO9-plane
          \end{tabular}
        }
        }
      }{
      H^4
      \big(
        X^{10};\,
        \mathbb{R}
      \big)
      }\,,
    }
  \end{equation}

    \vspace{-1mm}
\noindent  where $A$ (the \emph{gauge field})
  is a connection on an $E_8$-principal bundle over $X^{10}$,
  and $c_2(A)$ is its second Chern-form.
  But the summand $[c_2(A)]$ \emph{is} integral by itself: it is
  the real image of the second Chern class
  of the $E_8$-bundle.
  Therefore, \eqref{FluxQuantizationOnMO9} implies that
  \eqref{BareShiftedFluxQuantization} holds at least
  upon restriction to MO9-planes; and it
  suggests \cite{DMW00}\cite{EvslinSati}\cite{DFM03} \cite{Sati06}\cite{FSS14a} that the integral
  4-class in \eqref{BareShiftedFluxQuantization} is to be thought of
  as the second Chern class of an extension $\widetilde A$ of the $E_8$-gauge field
  from $X^{10}$ to all of $Y^{11}$:

  \vspace{-2mm}
  \begin{equation}
    \label{HeteroticMTheoryShiftedFluxQuantization}
    \overset{
      \mathclap{
      \raisebox{3pt}{
        \tiny
        \color{darkblue}
        \bf
        \begin{tabular}{c}
          C-field
          \\
          4-flux density
        \end{tabular}
      }
      }
    }{
     [G_4]
    }
    \;\;\;-\;
    \overset{
      \mathclap{
      \raisebox{3pt}{
        \tiny
        \color{darkblue}
        \bf
        \begin{tabular}{c}
          $\sfrac{1}{2}$ gravitational
          \\
          instanton density
        \end{tabular}
      }
      }
    }{
      \big[\tfrac{1}{4}p_1(\omega)\big]
    }
    \;=\;
    \overset{
      \raisebox{3pt}{
        \tiny
        \color{orangeii}
        \bf
        \begin{tabular}{c}
          gauge instanton density of
          \\
          auxilary gauge field on $X^{11}$
        \end{tabular}
      }
    }{
      \underset{ =: a}{
        -
        \underbrace{
          \big[c_2\big(\,\widetilde A \, \big)\big]
        }
      }
    }
    \;\;\in\;
    \xymatrix{
      \overset{
        \raisebox{3pt}{
          \tiny
          \color{darkblue}
          \bf
          \begin{tabular}{c}
            integral cohomology of
            \\
            11-d spacetime
          \end{tabular}
        }
      }{
      H^4
      \big(
        Y^{11};\, \mathbb{Z}
      \big)
      }
      \ar[rr]^-{
        \mbox{
          \tiny
          \color{darkblue}
          \bf
          rationalization
        }
      }
      &&
      \overset{
        \raisebox{3pt}{
          \tiny
          \color{darkblue}
          \bf
          \begin{tabular}{c}
            real cohomology of
            \\
            11-d spacetime
          \end{tabular}
        }
      }{
        H^4
        \big(
          Y^{11};\, \mathbb{R}
        \big).
      }
    }
  \end{equation}

  \noindent {\bf Open problem.}
  Despite the tight web of hints and consistency checks like these,
  actually formulating M-theory remains an open problem
  (e.g., \cite[6]{Duff96}\cite[p. 2]{HoweLambertWest97}\cite[p. 6]{Duff98}\cite[p. 2]{NicolaiHelling98}\cite[p. 330]{Duff99}\cite[12]{Moore14}\cite[p. 2]{ChesterPerlmutter18}\cite[$@$21:15]{Witten19}\cite[$@$17:14]{Duff19}).
  In particular:

  \begin{itemize}
  \vspace{-3.5mm}
  \item[{\bf (i)}] The conditions \eqref{BareShiftedFluxQuantization} and
  \eqref{FluxQuantizationOnMO9} had
  not actually been derived from any theory
  (see the comments around \cite[(1.13)]{HoravaWitten96b}
  and \cite[(2.1)]{Witten97}).

   \vspace{-3mm}
  \item[{\bf (ii)}] The ontology of the gauge field
  on $Y^{11}$ in \eqref{HeteroticMTheoryShiftedFluxQuantization}
  has remained mysterious, as no such gauge field is seen in
  11-dimensional supergravity
  \cite{CremmerJuliaScherk78}\cite{DAuriaFre82}\cite{CDF91},
  which, however, is famously argued to be the low-energy limit
  of M-theory \cite{Witten95}.
  \end{itemize}

  \noindent {\bf By Charge quantization in generalized cohomology?}
  On the other hand,
  the Green-Schwarz mechanism in perturbative string theory
  is well understood
  as an index-theoretic phenomenon, resulting from charge quantization in
  a generalized cohomology theory, namely in K-theory
  \cite{Freed00}\cite{Clingher05}\cite{Bunke11}.
  This mathematical understanding has been most
  fruitful, spawning understanding of
  elliptic genera (e.g. \cite{Han}\cite{ChenHanZhang11}),
  twisted higher bundles/gerbes (e.g. \cite{SSS09}\cite{Waldorf13}),
  Hermitian and generalized geometry (e.g. \cite{Garcia}), and more.

  There have been various proposals
  for lifting this situation to
  heterotic M-theory, understanding also
  the shifted integrality condition \eqref{HeteroticMTheoryShiftedFluxQuantization}
  as an effect of charge quantization
  in \emph{some} generalized cohomology theory
  \cite{DFM03}\cite{HopkinsSinger05} \cite{Sati05a}\cite{Sati05b}\cite{Sati05c}\cite{tcu}\cite{FSS14a}\cite{FSS14b},
  which then would control M-theory
  in generalization of how K-theory controls string theory
  (for the latter, see \cite{GS-RR} and references therein).
  However, while advancements in understanding have certainly been made,
  the situation had remained inconclusive.

\medskip

\noindent {\bf Non-abelian characters.}
Our strategy is to invoke the further generalization of generalized cohomology
to \emph{non-abelian cohomology}
\cite{Toen02}\cite{RobertsStevenson12}\cite{SSS12}\cite{NSS12a}\cite{NSS12b}\cite{FSS19b}\cite{SS20b}.
In \cref{NonAbelianDoldChernCharacter} below we discuss
how the \emph{Chern-Dold character} in generalized cohomology
(\cite{Buchstaber70}, see \cite[\S 2.1]{LindSatiWesterland16})
extends to non-abelian cohmology theories, where it is given by
passage to {\it rational homotopy theory}, here over the real numbers \cite[\S 3.2]{FSS20c}\footnote{
For general introduction and review of
rational homotopy theory \cite{Quillen69} and its Sullivan models \cite{Sullivan77}
see for instance \cite{FHT00}\cite{Hess07},
for brief discussion in our context see \cite[\S A]{FSS16a}\cite[\S 2.2]{FSS17}
and for extensive details see the companion article \cite{FSS20c}.
As in all applications to differential geometry and physics, we consider
rational homotopy theory over the real numbers \cite[Rem. 3.64]{FSS20c},
as in \cite{DGMS75}\cite{GriffithMorgan13}.
Notice that in the supergravity literature
these real Sullivan models are known as
``FDA''s (following \cite{vanNieuwenhuizen82}\cite{DAuriaFre82}); for
details and translation see
\cite{FSS13}\cite{FSS16a}\cite{FSS16b}\cite{HSS18}\cite{BMSS19}
with review in \cite{FSS19a}.
}:
\vspace{-.4cm}
\begin{equation}
  \label{ChernCharacterInGeneralizedCohomology}
  \xymatrix@R=-12pt{
    \overset{
      \mathclap{
      \raisebox{3pt}{
        \tiny
        \color{darkblue}
        \bf
        \begin{tabular}{c}
          $E$-cohomology
        \end{tabular}
      }
      }
    }{
      E^\bullet(X)
    }
    \ar@{}[d]|-{
      \rotatebox[origin=c]{-90}{
        $\simeq$
      }
    }
    \ar[rrrr]^-{
      \mbox{
        \tiny
        \color{greenii}
        \bf
        \begin{tabular}{c}
          Non-abelian
          character map
        \end{tabular}
      }
    }
    &&&&
    \overset{
      \mathclap{
      \raisebox{3pt}{
        \tiny
        \color{darkblue}
        \bf
        \begin{tabular}{c}
          Non-abelian
          real cohomology
        \end{tabular}
      }
      }
    }{
    H^\bullet
    \big(
      X;\,
      L_{\mathbb{R}} E
    \big)
    }
    \ar@{}[d]|-{
      \rotatebox[origin=c]{-90}{
        $\simeq$
      }
    }
    \\
    \pi_{-\bullet}
    \,
    \mathrm{Maps}
    \big(
      X,
      \,
      \underset{
        \mathclap{
        \mbox{
          \tiny
          \color{darkblue}
          \bf
          classifying space
        }
        }
      }{
        \underbrace{
          E
        }
      }
    \big)
    \ar[rrrr]_-{
      \mbox{
        \tiny
        \color{greenii}
        \bf
        $\mathbb{R}$-rationalization
      }
    }
    &&&&
    \pi_{-\bullet}
    \,
    \mathrm{Maps}
    \big(
      X
      ,\,
      \underset{
        \mathclap{
        \mbox{
          \tiny
          \color{darkblue}
          \bf
          \begin{tabular}{c}
            $\mathbb{R}$-rationalized
            classifying space
          \end{tabular}
        }
        }
      }{
        \underbrace{
          L_{\mathbb{R}} E
        }
      }
    \,\big)
  }
\end{equation}
\vspace{-.4cm}

For $X$ a smooth manifold, the right hand side of
\eqref{ChernCharacterInGeneralizedCohomology} is
a subquotient of the set of differential forms on $X$
(Prop. \ref{NonAbelianDeRhamTheorem});
thus the image of the nonabelian character identifies
equivalence classes of differential forms satisfying certain conditions.
If these forms are interpreted as flux densities, then
these are {\bf non-abelian charge-quantization} conditions.
For example, in the abelian case $E = \mathrm{KU}, \mathrm{KO}$, the
Chern-Dold character \eqref{ChernCharacterInGeneralizedCohomology}
reduces to the traditional Chern character in K-theory
\cite[Ex. 4.13, 4.14]{FSS20c}, and the
corresponding charge quantization condition is that thought to
hold for RR-fields/D-brane charges in type II/I string theory
(for more discussion see \cite{GS-KO}\cite{GS-RR}).

\medskip

\noindent {\bf Charge quantization in J-twisted Cohomotopy.}
The most fundamental non-abelian cohomology theory is
\emph{Cohomotopy theory}
\cite{Borsuk36}\cite{Spanier49}\cite{Peterson56}\cite{Taylor09}\cite{KMT12}
whose classifying spaces are the $n$-spheres
$S^n \simeq B(\Omega S^n)$.
Accordingly, twisted Cohomotopy is classified by
spherical fibrations, and we say {\it J-twisted Cohomotopy}
\cite{FSS19b} \cite{FSS19c}
for twisting by the unit sphere fibration in the tangent bundle.
The main theorem of \cite[3.9]{FSS19b} says that the
Chern-Dold
character \eqref{ChernCharacterInGeneralizedCohomology}
in J-twisted 4-Cohomotopy encodes
Witten's shifted C-field flux quantization condition \eqref{BareShiftedFluxQuantization}:

\vspace{-1.0cm}
\begin{equation}
  \label{ShiftedFluxQuantizationFromChernCharacterInJTwistedCohomotopy}
  \xymatrix@C=34pt{
    \overset{
      \raisebox{3pt}{
        \tiny
        \color{darkblue}
        \bf
        \begin{tabular}{c}
          J-twisted
          \\
          4-Cohomotopy
        \end{tabular}
      }
    }{
    \pi^{\tau}
      \underset{
        \mathclap{
        \raisebox{-3pt}{
          \tiny
          \color{darkblue}
          \bf
          \begin{tabular}{c}
            manifold
            \\
            with
            \\tangential
            $\mathrm{Sp}(2)$-structure $\tau$
          \end{tabular}
        }
        }
      }{
       \big(
         X
       \big)
      }
    }
    \ar[rr]^-{
      \mbox{
        \tiny
        \color{darkblue}
        \bf
        \begin{tabular}{c}
          Non-abelian
          \\
          character map
        \end{tabular}
      }
    }_-{\mathrm{ch}}
    &&
    \left\{
      \!\!\!
      {\begin{array}{c}
        G_4,
        \\
        G_7
      \end{array}}
      \!\!\!
      \in \Omega^\bullet(X)
    \,\left\vert\,
      {\begin{aligned}
        d\, \phantom{2}
        \overset{
          \mathclap{
          \raisebox{3pt}{
            \tiny
            \color{darkblue}
            \bf
            \begin{tabular}{c}
              C-field
              \\
              4-flux
            \end{tabular}
          }
          }
        }{
          G_4
        }
        & = 0
        \,,
        \;\;
        \overset{
          \raisebox{3pt}{
            \tiny
            \color{darkblue}
            \bf
            shifted flux quantization \eqref{BareShiftedFluxQuantization}
          }
        }{
          [G_4] - [\tfrac{1}{4}p_1(\omega)]
          \;\in\;
          H^4
          \big(
            X;\, \mathbb{Z}
          \big)
        }
        \\
        d\,
        2G_7
        & =
        -
        \big(
          G_4 - \tfrac{1}{4}p_1(\omega)
        \big)
        \wedge
        \big(
          G_4 + \tfrac{1}{4}p_1(\omega)
        \big)
        \\
        \mathclap{
        \!\!\!\!\!\!\!\!\!\!
        \raisebox{3pt}{
          \tiny
          \color{darkblue}
          \bf
          \begin{tabular}{c}
            dual
            \\
            7-flux
          \end{tabular}
        }
        }
        & \phantom{=}\;
        -
        \tfrac{1}{2}
        \big(
          p_2(\omega)
          -
          \tfrac{1}{4}
          \big(p_1(\omega)\big)^2
        \big)
      \end{aligned}}
    \right.
    \right\}
  }
\end{equation}
\vspace{-.2cm}

In fact, further constraints are implied, matching a whole list
of further topological conditions expected in M-theory
(see \cite[Table 1]{FSS19b}). This motivates,
following \cite[\S 2.5]{Sati13}, {\bf Hypothesis H}:
\emph{Charge quantization in M-theory happens in
J-twisted Cohomotopy theory}
\cite{FSS19b}\cite{FSS19c}\cite{SS19a}\cite{BSS19}\cite{SS19b}\cite{SS20b}.
While J-twisted Cohomotopy in degree 4 alone \eqref{ShiftedFluxQuantizationFromChernCharacterInJTwistedCohomotopy}
does not reflect the
appearance of a heterotic gauge field as in \eqref{FluxQuantizationOnMO9},
its lifting to 7-Cohomotopy, through the quaternionic Hopf fibration
$h_{\mathbb{H}}$,
does encode structure expected on heterotic M5-branes
\cite{FSS19c}\cite{FSS19d}\cite{FSS20a}\cite{FSS20b}.
Therefore we turn attention to an intermediate lift:

\medskip

\noindent {\bf Charge quantization in Twistorial Cohomotopy.}
In between J-twisted Cohomotopy in degrees 4 and 7,
when lifting along the quaternionic Hopf fibration,
we find \emph{Twistorial Cohomotopy} (\cref{TwistorialCohomotopyTheory}):
The twisted non-abelian cohomology theory
classified by the Borel-equivariantized \emph{twistor space} $\mathbb{C}P^3$
(\cref{ConstructionOfBorelEquivariantHopfTwistorFibration}).
Our main Theorem \ref{SullivanModelOfParametrizedHopfTwistorFibrations},
illustrated in \hyperlink{FigureC}{\it Figure I}, implies
(Corollary \ref{ShiftedIntegralityUnderTTheoreticChernCharacter}) that
charge quantization in Twistorial Cohomotopy
\\
\noindent {\bf (i)} makes the class of an
$\mathrm{S}\big(\mathrm{U}(1)^2\big)$-gauge field
$\widetilde A$ appear,
hence a ``heterotic line bundle'' \cite{AGLP12}\cite{BBL17}, and
\\
\noindent {\bf (ii)} enforces on this gauge field
Ho{\v r}ava-Witten's
Green-Schwarz mechanism \eqref{HeteroticMTheoryShiftedFluxQuantization}
in heterotic M-theory:

\vspace{-.4cm}
\begin{equation}
  \label{HeteroticFluxQuantizationFromTwistorialCohomotopy}
  \xymatrix@C=34pt{
    \overset{
      \raisebox{3pt}{
        \tiny
        \color{orangeii}
        \bf
        \begin{tabular}{c}
          Twistorial
          \\
          Cohomotopy
        \end{tabular}
      }
    }{
    \mathcal{T}^{\tau}
      \underset{
        \mathclap{
        \raisebox{-3pt}{
          \tiny
          \color{darkblue}
          \bf
          \begin{tabular}{c}
            manifold
            \\
            with
            \\tangential
            $\mathrm{Sp}(2)$-structure $\tau$
          \end{tabular}
        }
        }
      }{
       \big(
         X
       \big)
      }
    }
    \ar[rr]^-{
      \mbox{
        \tiny
        \color{greenii}
        \bf
        \begin{tabular}{c}
          Twisted Non-abelian
          \\
          character map
        \end{tabular}
      }
    }_-{\mathrm{ch}}
    &&
    \left\{
      \!\!\!
      {\begin{array}{c}
        F_2,
        \\
        G_4,
        \\
        G_7
      \end{array}}
      \!\!\!
      \in \Omega^\bullet(X)
    \,\left\vert\,
      {\begin{aligned}
        d\, \phantom{2}
        \overset{
          \mathclap{
          \;\;\;\;\;\;
          \raisebox{6pt}{
            \tiny
            \color{orangeii}
            \bf
            \begin{tabular}{c}
              1st Chern form of
              \\
              heterotic line bundle
            \end{tabular}
          }
          }
        }{
          F_2
        }
        &
        =
        0
        \,,
        \;\;\;\;
        \overset{
          \raisebox{3pt}{
            \tiny
            \color{orangeii}
            \bf
            \begin{tabular}{c}
              2nd Chern class of corresponding
              \\
              $S\big( \mathrm{U}(1)^{2} \big)$-gauge field
              $\widetilde A$
            \end{tabular}
          }
        }{
          -[F_2 \wedge F_2]
          \;\in\;
          H^4\big(X;\, \mathbb{Z}\big)
        }
        \\
        d\, \phantom{2}
        \overset{
          \mathclap{
          \raisebox{3pt}{
            \tiny
            \color{darkblue}
            \bf
            \begin{tabular}{c}
              C-field
              4-flux
            \end{tabular}
          }
          }
        }{
          G_4
        }
        & = 0
        \,,
        \;\;
        \overset{
          \raisebox{1.5pt}{
            \tiny
            \color{orangeii}
            \bf
            Ho{\v r}ava-Witten's Green-Schwarz mechanism \eqref{HeteroticMTheoryShiftedFluxQuantization}
          }
        }{
          [G_4] - [\tfrac{1}{4}p_1(\omega)]
          \;=\;
          [F_2 \wedge F_2]
          \;\in\;
          H^4
          \big(
            X;\, \mathbb{Z}
          \big)
        }
        \\
        d\,
        2G_7
        & =
        -
        \big(
          G_4 - \tfrac{1}{4}p_1(\omega)
        \big)
        \wedge
        \big(
          G_4 + \tfrac{1}{4}p_1(\omega)
        \big)
        \\
        \mathclap{
        \!\!\!\!\!\!\!\!\!\!
        \raisebox{3pt}{
          \tiny
          \color{darkblue}
          \bf
          \begin{tabular}{c}
            dual
            \\
            7-flux
          \end{tabular}
        }
        }
        & \phantom{=}\;
        -
        \tfrac{1}{2}
        \big(
          p_2(\omega)
          -
          \tfrac{1}{4}
          \big(p_1(\omega)\big)^2
        \big)
      \end{aligned}}
    \right.
    \right\}
  }
\end{equation}

\newpage

\noindent
Here $X$ \eqref{HeteroticFluxQuantizationFromTwistorialCohomotopy}
denotes a spacetime manifold with
$\mathrm{Sp}(2) \hookrightarrow \mathrm{Spin}(8)$-structure
\eqref{The8Manifold},
as befits backgrounds expected in M-theory compactified on 8-manifolds
(see \cite[Rem. 3.1]{FSS19b} for pointers).
This reduction is a requirement/prediction of {\it Hypothesis H}
by Prop. \ref{EquivarianceOfCombinedHopfTwistorFibration} below.
Besides the GS-anomaly cancellation presented here
and in \cite{SS20c},
this turns out to imply several M5-brane consistency conditions
\cite{FSS19c}\cite{FSS20a}\cite{SS20a}.

\medskip

\noindent
The crux of the proof of \eqref{HeteroticFluxQuantizationFromTwistorialCohomotopy}
is this cohomological analysis of the $\mathrm{Sp}(2)$-equivariantized Hopf/twistor fibration
(\cref{BorelEquivariantHopfTwistorFibration}):

\vspace{-1cm}

  \hypertarget{FigureC}{}
  \begin{equation}
    \label{TheComponents}
    \raisebox{226pt}{
    \xymatrix@R=10pt@C=0pt{
      S^7 \!\sslash\! \mathrm{Sp}(2)
      \ar[dddd]^-{
        \mathllap{
          \mbox{
            \tiny \bf
             \begin{tabular}{c}
             \color{greenii}          Borel-equivariant
              \\
              \color{greenii}         complex
              \\
              \color{greenii}          Hopf fibration
              \\
               $h_{\mathbb{C}} \sslash \mathrm{Sp}(2)$
            \end{tabular}
          }
        }
      }
      \ar@/^2.8pc/[dddddddd]|<{
 \;\;\;\;\;\;\;
                \overset{
          \mathrlap{
           \mbox{
            \tiny \bf
                        \begin{tabular}{c}
                        \phantom{A}
                        \\
                         \phantom{A}
                        \\
                         \phantom{A}
                        \\
                         \phantom{A}
                        \\
                          \phantom{A}
                        \\
                         \phantom{A}
                        \\
                         \phantom{A}
                        \\
       \color{greenii}       Borel-equivariant
              \\
       \color{greenii}       quaternionic
              \\
     \color{greenii}         Hopf fibration
              \\
              $h_{\mathbb{H}}\sslash \mathrm{Sp}(2)$
            \end{tabular}
          }
          }
        }{
        }
      }
      &
      \ar@{<->}[rr]_{\simeq}^-{
        \mathclap{
        \mbox{
          \tiny \bf
          \color{greenii}
          \begin{tabular}{c}
            coset space
            \\
            realization
          \end{tabular}
        }
        }
      }
      &&
      &
      B \mathrm{Sp}(1)
      \ar[dddd]
      &
      \ar@{<->}[rr]_-{\simeq}^-{
        \mathclap{
        \mbox{
          \tiny \bf
          \color{greenii}
          \begin{tabular}{c}
            exceptional
            \\
            isomorphism
          \end{tabular}
        }
        }
      }
      &&
      &
      B \mathrm{SU}(2)_L
      \ar@{}[r]|-{\mbox{$\times$}}
      \ar@{=}[dddd]
      &
      \;\;\;\;\;\ast\;\;\;\;\;
      \ar[dddd]^-{ B e }
      \\
      \\
      \\
      \\
      \mathbb{C}P^3 \!\sslash\! \mathrm{Sp}(2)
      \ar[dddd]^-{
        \mathllap{
          \mbox{
            \tiny \bf
                     \begin{tabular}{c}
            \color{greenii}     Borel-equivariant
              \\
       \color{greenii}          twistor fibration
              \\
              $   t_{\mathbb{H}}
        \sslash
        \mathrm{Sp}(2)$
            \end{tabular}
          }
        }
      }
      &
      \ar@{<->}[rr]^{\simeq}_-{
        \mathclap{
        \mbox{
          \tiny \bf
          \color{greenii}
          \begin{tabular}{c}
            coset space
            \\
            realization
          \end{tabular}
        }
        }
      }
      &&
      &
      B
      \big(
        \mathrm{Sp}(1) \times \mathrm{U}(1)
      \big)
      \ar[dddd]
      &
      \ar@{<->}[rr]_-{\simeq}^-{
        \mathclap{
        \mbox{
          \tiny \bf
          \color{greenii}
          \begin{tabular}{c}
            exceptional
            \\
            isomorphism
          \end{tabular}
        }
        }
      }
      &&
      &
      B \mathrm{SU}(2)_L
      \ar@{}[r]|-{\mbox{$\times$}}
      \ar@{=}[dddd]
      &
      B \mathrm{U}(1)_R
      \ar@/^2.5pc/[dddd]^-{\!\!
        B( c\, \mapsto \mathrm{diag}(c,\overline{c}))
      }
      \ar[dd]^-{
        \rotatebox[origin=c]{-90}{
          \scalebox{.7}{$
            \simeq
          $}
        }
      }
      \\
      \\
      &&&&&&&&&
       \color{orangeii}{B
      \mathrm{S}
      \big(
        \mathrm{U}(1)^2
      \big)}
      \ar@{^{(}->}[dd]
      \\
      \\
      S^4 \!\sslash\! \mathrm{Spin}(5)
      &
      \ar@{<->}[rr]_-{\simeq}^-{
        \mathclap{
        \mbox{
          \tiny \bf
          \color{greenii}
          \begin{tabular}{c}
            coset space
            \\
            realization
          \end{tabular}
        }
        }
      }
      &&
      &
      B \mathrm{Spin}(4)
      &
      \ar@{<->}[rr]_-{\simeq}^-{
        \mathclap{
        \mbox{
          \tiny \bf
          \color{greenii}
          \begin{tabular}{c}
            exceptional
            \\
            isomorphism
          \end{tabular}
        }
        }
      }
      &&
      &
      B \mathrm{SU}(2)_L
      \ar@{}[r]|-{\mbox{$\times$}}
      &
      B \mathrm{SU}(2)_R
      \\
      {\phantom{A}}
      \\
      H^4\big(
        \mathbb{C}P^3 \!\sslash\! \mathrm{Sp}(2); \mathbb{Z}
      \big)
      \ar@{<-}[dddddddd]^-{
        \mathllap{
          \mbox{
            \tiny \bf
                       \begin{tabular}{c}
            \color{greenii}   pullback in
            \\
       \color{greenii}     cohomology along
              \\
          \color{greenii}     Borel-equivariant
              \\
          \color{greenii}     twistor fibration
              \\
              \\
              $\big(
          t_{\mathbb{H}}
          \sslash
          \mathrm{Sp}(2)
        \big)^\ast$
            \end{tabular}
          }
        }
      }
      &
      \ar@{<->}[rr]^{\simeq}
      &&
      &
      H^4\big(
        B \mathrm{Sp}(1) \times \mathrm{U}(1); \mathbb{Z}
      \big)
      &
      \ar@{<->}[rr]^{\simeq}
      &&
      &
      H^4(B \mathrm{SU}(2); \mathbb{Z})
      \ar@{}[r]|-{\mbox{$\oplus$}}
      &\
      H^4(B \mathrm{U}(1); \mathbb{Z})
      \\
      &
      &
      \overset{
        \mathclap{
        \mbox{
          \tiny \bf
          \color{orangeii}
          \begin{tabular}{c}
            universal
            gauge instanton density
            \\
            in heterotic M-theory
          \end{tabular}
        }
        }
      }{
        \mathllap{-}
        a
      }
      \ar@{<-|}[dddd]|-{
        \mbox{
          \tiny \bf
          \color{orangeii}
          \begin{tabular}{c}
            Ho{\v r}ava-Witten's
            Green-Schwarz mechanism
            \\
            in heterotic M-theory
          \end{tabular}
        }
      }
      &
      =
      &
      c_1
      \cup
      c_1
      &&
      =
      &&
      &
      c^{\scalebox{.55}{$R$}}_1
      \cup
      c^{\scalebox{.55}{$R$}}_1
      \mathrlap{
        \mbox{
          \tiny \bf
          \color{orangeii}
          \begin{tabular}{c}
            2nd Chern class of
            \\
            $\mathrm{S}\big( \mathrm{U}(1)^2 \big)$-field
          \end{tabular}
        }
      }
      \ar@{<-|}[dddd]^-{\!\!
        \big(\!
          B \left(c \mapsto \mathrm{diag}(c,\bar c) \right)
        \! \big)^{\!\!\ast}
      }
      &&
      \\
      \\
      \\
      \\
      &
      &
      \overset{
        \mathclap{
        \mbox{
          \tiny \bf
          \color{darkblue}
          \begin{tabular}{c}
            shifted integral C-field flux
            \\
            relative to background flux
          \end{tabular}
        }
        }
      }{
        \widetilde \Gamma_4
        -
        \widetilde \Gamma_4^{\mathrm{vac}}
      }
      &
      =
      &
      \tfrac{1}{2}\rchi_4 - \tfrac{1}{4}p_1
      &&\mathclap{=}&&
      &
      \mathllap{-}
      c^{\scalebox{.55}{$R$}}_2
      \mathrlap{
        \mbox{
          \tiny \bf
          \color{darkblue}
          \begin{tabular}{c}
            right
            \\
            2nd Chern class
          \end{tabular}
        }
      }
      &&
      \\
      &
      &
      \overset{
        \mathclap{
        \mbox{
          \tiny \bf
          \color{darkblue}
          \begin{tabular}{c}
            universal
            \\
            shifted integral
            C-field flux
          \end{tabular}
        }
        }
      }{
        \widetilde \Gamma_4
      }
      &
      =
      &
      \tfrac{1}{2}\rchi_4 + \tfrac{1}{4}p_1
      &&\mathclap{=}&&
      c^{\scalebox{.55}{$L$}}_2
      \mathrlap{
        \mbox{
          \tiny \bf
          \color{darkblue}
          \begin{tabular}{c}
            left 2nd Chern class
          \end{tabular}
        }
      }
      &
      &
      \mathrlap{
        \!\!\!\!\!\!\!\!\!\!\!\!\!\!\!\!\!\!\!\!\!
        \scalebox{.8}{
            \cite[3.9]{FSS19b}
        }
      }
      \\
      &
      &
      \overset{
        \mbox   {
          \tiny \bf
          \color{darkblue}
          C-field
          background flux
        }
      }{
        \widetilde \Gamma_4^{\mathrm{vac}}
      }
      &
      =
      &
      \tfrac{1}{2}p_1
      &&\mathclap{=}&&
      c^{\scalebox{.55}{$L$}}_2
      \ar@{}[r]|-{\mbox{$+$}}
      &
      c^{\scalebox{.55}{$R$}}_2
      \mathrlap{
        \;\;\;
        \mbox{
          \tiny  \bf
          \color{darkblue}
          \begin{tabular}{c}
          total
          \\
          2nd Chern class
       \end{tabular}
        }
      }
      \\
      H^4\big(
        S^4 \!\sslash\! \mathrm{Spin}(5); \mathbb{Z}
      \big)
      &
      \ar@{<->}[rr]^{\simeq}
      &&
      &
      H^4\big(
        B \mathrm{Spin}(4); \mathbb{Z}
      \big)
      &
      \ar@{<->}[rr]^{\simeq}
      &&
      &
      H^4(B \mathrm{Sp}(1); \mathbb{Z})\,
      \ar@{}[r]|-{\mbox{$\oplus$}}
      &
      \,H^4(B \mathrm{Sp}(1); \mathbb{Z})
    }
    }
  \end{equation}

  \vspace{.2cm}
  {\footnotesize
\noindent {\bf Figure I.}
{\bf Integral cohomology of Borel-equivariant Hopf/twistor fibration and its
interpretation under {Hypothesis H}.}

\vspace{.2cm}
\hspace{-.8cm}
\begin{minipage}[left]{18cm}
\begin{itemize}
\vspace{-.12cm}
\item[\bf (i)] The top part shows the equivalent incarnations of the
Borel-equivariant Hopf/twistor fibration (Def. \ref{ParametrizedHopfTwistorFibration})
according to Prop. \ref{BorelEquivariantTwistorFibrationAsMapsOfClassifyingSpaces}.

\vspace{-.12cm}
\item[\bf (ii)] The bottom part shows the
corresponding identifications of the integral cohomology
generators \eqref{IntegralCohomologyOfS4ModSp2} according to
\cite[3.9]{FSS19b}.

\vspace{-.12cm}
 \item[\bf (iii)] This makes manifest, shown in the middle of the diagram,
 how these generators pull back along the
 fibration (Theorem \ref{CohomologyOfBorelEquivariantTwistorFibration}),
 which

\vspace{-.12cm}
 \item[\bf (iv)] allows to normalize the generators in the Sullivan model
 for the rational homotopy type of the fibrations, below in
 Theorem \ref{SullivanModelOfParametrizedHopfTwistorFibrations}.

\vspace{-.12cm}
 \item[\bf (v)] Blue labels indicate the interpretation
 of the bottom generators
 as universal fluxes in M-theory,
 according to \cite{FSS19b}\cite{FSS20a};

\vspace{-.12cm}
 \item[\bf (vi)]
 while orange label indicate the
 interpretation of the new top generators as universal fluxes
 in heterotic M-theory, discussed in \cref{GreenSchwarzMechanismFromJTwistedCohomotopy}.
 \end{itemize}
 \end{minipage}
 }

\newpage

\noindent {\bf Conclusion and Outlook.}

\medskip

\noindent {\bf The heterotic gauge field.}
It is remarkable that the class of a
gauge field $\widetilde A$,
which had remained mysterious in \eqref{HeteroticMTheoryShiftedFluxQuantization},
does appear from charge-quantization in Twistorial Cohomotopy,
according to \eqref{HeteroticFluxQuantizationFromTwistorialCohomotopy}.

\vspace{2mm}
\noindent {\bf (i) Missing generality?}
Of course, the gauge field in \eqref{HeteroticFluxQuantizationFromTwistorialCohomotopy}
has the abelian structure group $G = S\big( \mathrm{U}(1)^2\big)$ instead of
the non-abelian structure group $G = E_8$ that could be expected
to apply to \eqref{HeteroticMTheoryShiftedFluxQuantization}.
In terms of characteristic classes, this means that
charge quantization in Twistorial Cohomotopy constrains the
class $a = [\mathrm{c}_2(\widetilde A)] \;\in\; H^4\big( X^{11}; \mathbb{Z} \big)$,
which for $G = E_8$ may be \emph{any} element in degree four integral cohomology
(since $\tau_{{}_{11}} B E_8  \simeq_{\mathrm{whe}} \tau_{{}_{11}} K(\mathbb{Z}; 4)$,
e.g. \cite[3.2]{DFM03}),
to factor as minus a cup square of an element in degree two integral cohomology.
This might indicate that Twistorial Cohomotopy as presented does not capture full
heterotic M-theory; or that one should look for other
factorizations of the quaternionic Hopf fibration, or for variants
of the construction presented here.

\vspace{2mm}
\noindent {\bf {\bf (ii)} Or predictive constraint?}
On the other hand, it is curious to notice that
in heterotic string \emph{phenomenology} the reduction of the
heterotic gauge bundle along the inclusions

\vspace{-.4cm}
\begin{equation}
  \label{ReductionToHeteroticLineBundles}
  \big(
    \mathrm{U}(1)
  \big)^{n-1}
  \,\simeq \;
  S
  \big(
    \mathrm{U}(1)^n
  \big)
  \;\subset\;
  \mathrm{SU}(n)
  \;\subset\;
  E_8
  \,,
  \phantom{AAA}
  \mbox{for $2 \leq n \leq 5$}
  \,,
\end{equation}
\vspace{-.5cm}

\noindent
has led to a little revolution in string phenomenology \cite{AGLP11}\cite{AGLP12}.
These \emph{heterotic line bundle models} turn out
to be an abundant source of low energy theories with the \emph{exact}
field content
of the (minimally supersymmetric) standard model of particle physics
(up to decoupled and ultra-heavy fields), amenable to
effective computational classification
\cite{ACGLP13}\cite{HLLS13}\cite{BBL17}\cite{GrayWang19}
(used for $n = 4, 5$ in the observable sector,
while  our $n = 2$ is used in the hidden sector
\cite[\S 4.2]{ADO20a}\cite[\S 2.2]{ADO20b}\cite{DumitruOvrut21}\cite{DumitruOvrut22}).
Before considering the
reduction \eqref{ReductionToHeteroticLineBundles}, only a small
handful of hand-crafted semi-realistic models were known.

Notice that this works because the structure group of the
heterotic gauge bundle is part of what \emph{breaks} $E_8$ down to the
low-energy gauge group: the latter is
within the commutant of the former in
$E_8$. Therefore, realistic phenomenology does not
require $\widetilde A$ in \eqref{HeteroticMTheoryShiftedFluxQuantization}
to be in a non-abelian GUT-group -- in fact it must instead be complementary to
(be in the commutant within $E_8$ of) the low-energy gauge group
($\widetilde A$ is a background field/vev, not the dynamical gauge field
fluctuating about it);
and analysis of heterotic line bundle models
indicates that restricting $\widetilde A$ to be reduced along \eqref{ReductionToHeteroticLineBundles}
narrows in heterotic M-theory onto its phenomenologically realistic sector.

This might indicate that {\it Hypothesis H} captures not only
mathematical but also phenomenological constraints of M-theory.

\medskip
\noindent {\bf The degree-8 polynomial.}
Beyond encoding the shifted heterotic flux quantization
in the first two lines of \eqref{HeteroticFluxQuantizationFromTwistorialCohomotopy},
the third line there shows that charge-quantization in
Twistorial Cohomotopy also enforces (Corollary \ref{CohomologicalRelationInTwistorialCohomotopy})
the trivialization of this 8-class:
\begin{equation}
  \label{The8Class}
  \begin{aligned}
  I_8^H
  :=
  &
  \big(
    [G_4] - \tfrac{1}{4}p_1
  \big)
  \cup
  \big(
    [G_4] + \tfrac{1}{4}p_1
  \big)
  +
  \tfrac{1}{2}
  \big(
    p_2  - \tfrac{1}{4} p_1 \cup p_1
  \big)
  \\
  =
  \;
  &
  \big(
    [F_2 \wedge F_2]
    +
    \tfrac{1}{2}p_1
  \big)
  \cup
  [F_2 \wedge F_2]
  \;+\;
  \tfrac{1}{2}
  \big(
    p_2  - \tfrac{1}{4} p_1 \cup p_1
  \big)
  \;\;
  &
  \in \;\;
  H^8
  (
    X, \mathbb{R}
  )\;.
  \end{aligned}
\end{equation}
In the form of the first line of \eqref{The8Class},
the condition $I_8^H = 0$ is inherited from charge-quantization
in J-twisted Cohomotopy \eqref{ShiftedFluxQuantizationFromChernCharacterInJTwistedCohomotopy}.
We had shown previously that this condition guarantees
the vanishing \emph{under general conditions} \footnote{
  Both cancellations had previously been discussed only
  subject to tacit assumptions on the C-field;
  see \cite[p. 2]{FSS19b} and \cite[(6)]{SS20a}.
}
of
\\
\noindent {\bf (a)} the anomaly in the Hopf-WZ term on the M5-brane \cite{FSS19c},
\\
{\bf (b)} the total remaining anomaly of the M5-brane \cite{SS20a}.

In the form of the second line in \eqref{The8Class} --
now equivalently re-expressed in terms of the emergent heterotic gauge flux
instead of the $G_4$-flux -- this class is seen to be closely related to
the 8-class denoted $\widehat {I_8}$ in \cite[(1.10)]{HoravaWitten96b}:
Up to global and relative rescaling of $\widehat {I_8}$
(as on the bottom of \cite[p. 15]{HoravaWitten96b}) both are related by
\begin{equation}
  \label{ShiftedI8}
  I_8^H
  \;=\;
  \widehat{I_8}
  -
  \tfrac{1}{4}p_1 \cup p_1\;.
\end{equation}
It might be interesting to understand the potential significance of this relation.
Notice that

\noindent {\bf (i)} in \cite{HoravaWitten96b} there is no condition that
$\widehat{I_8}$ should vanish;

\noindent {\bf (ii)} the shift in \eqref{ShiftedI8} is what makes $I_8^H$
an integral class (using \eqref{chi8RelationOnBSp2} and \eqref{ShiftedQuantizationConditionOnG4InJTwistedTTheory});

\noindent{\bf (iii)} it is expected \cite{Moss08} that $\widehat{I_8}$
is just approximate: it receives an infinite
but unknown tower of corrections;

\noindent {\bf (iv)} while {\it Hypothesis H} suggests that
$I_8^H = 0$ is a statement about fully-fledged M-theory.

\newpage

\section{Borel-equivariant Hopf/twistor fibration}
\label{BorelEquivariantHopfTwistorFibration}

The \emph{twistor fibration}
\vspace{-.4cm}
$$
\xymatrix@C=15pt@R=9pt{ S^2 \ar[r] & \mathbb{C}P^3 \ar[d] \\ & S^4}
$$
(due to \cite[\S III.1]{Atiyah79},
see also, e.g., \cite[\S 1]{Bryant82}\cite{ArmstrongSalamon13}\cite[\S 6]{ABS19})
or \emph{Penrose fibration} (as in \cite{EellsSalamon85}),
is the
canonical map $\mathbb{C}P^3 \to \mathbb{H}P^1$
(under the identification $\mathbb{H}P^1 \simeq S^4$, recalled below as
Prop. \ref{EquivariantIdentificationOfS4WithHP1})
that sends complex lines to the quaternionic lines which they span
\cite[\S III (1.1)]{Atiyah79}.
While, as the name suggests, this is traditionally motivated
from the role of $\mathbb{C}P^3$ as a \emph{twistor space}
(a division-algebraic account is in \cite{BengtssonCederwall88}),
our interest in the twistor fibration here comes from
its appearance as an intermediate stage of the
\emph{quaternionic Hopf fibration} \cite[\S 6]{GWY83}.
We observe that it
is given by the following iterative quotienting by multiplicative groups
in the four real normed division algebras
(reals $\mathbb{R}$, complex numbers $\mathbb{C}$, quaternions $\mathbb{H}$,
octonions $\mathbb{O}$):
\vspace{-3mm}
\begin{equation}
  \label{HopfTwistorFactorization}
  \raisebox{80pt}{
  \xymatrix@R=5pt@C=20pt{
    S^1
    \ar@{}[r]|<<<<<<<{\simeq}
    \ar[drr]
    &
    \;\;\;\;\;\mathbb{C}^\times/\mathbb{R}^\times_+
    \ar[drr]
    &
    &
    &&
    \hspace{1cm}
    &
    {\begin{array}{c}
      \mbox{for}
      \\
      v \mathrlap{\,\neq\, 0}
    \end{array}}
    \ar@{}[d]|-{
      \rotatebox[origin=c]{-90}{
        $\in$
      }
    }
    \\
    &&
    S^7
    \ar[dd]|-{
      h_{\mathbb{C}}
      \phantom{\mathclap{\vert}}
    }
    \ar@{}[r]|-{\simeq}
    &
    \big(
      \mathbb{R}^8 \!\setminus\! \{0\}
     \big)\!/ \mathbb{R}^\times_+
    \ar[dd]
    \ar@{}[r]|-{\ni}
    &
    \big\{
      v \cdot t
      \left\vert
        \,
        t \in \mathbb{R}^\times_+
      \right.
      \!
    \big\}
    \ar@{|->}[dd]
    &&
    \mathbb{R}^8
    \ar@{}[dd]|-{
      \rotatebox[origin=c]{-90}{
        $\simeq_{{}_{\mathbb{R}}}$
      }
    }
    \\
    S^2
    \ar@{}[r]|<<<<<<<<{\simeq}
    \ar[drr]
    &
    \;\;\;\;\;\mathbb{H}^\times/\mathbb{C}^\times
    \ar[drr]|-{\phantom{AA}}
    &
    \ar@{}[r]|-{
      \mbox{
        \tiny
        \color{darkblue}
        \bf
        \begin{tabular}{c}
          7d complex
          \\
          Hopf fibration
        \end{tabular}
      }
    }
    &
    \\
    &&
    \mathbb{C}P^3
    \ar@{}[r]|-{\simeq}
    \ar[dd]|-{
      t_{\mathbb{H}}
      \phantom{\mathclap{\vert}}
    }
    &
    \big(
      \mathbb{C}^4 \!\setminus\! \{0\}
    \big)\!/ \mathbb{C}^\times
    \ar[dd]
    \ar@{}[r]|-{\ni}
    &
    \big\{
      v \cdot z
      \left\vert
        \,
        z \in \mathbb{C}^\times
      \right.
      \!\!
    \big\}
    \ar@{|->}[dd]
    \ar@{}[r]|<<{
      \left.
        \begin{array}{c}
          \\
          \\
          \\
          \\
          \\
          \\
          \\
          \\
        \end{array}
      \right\}
      \rotatebox[origin=c]{-90}{
        \tiny
        {\color{darkblue}
        \bf
        quaternionic
        Hopf fibration
        }
        $\mathrlap{h_{\mathbb{H}}}$
      }
    }
    &
    &
    \mathbb{C}^4
    \ar@{}[dd]|-{
      \rotatebox[origin=c]{-90}{
        $\simeq_{{}_{\mathbb{R}}}$
      }
    }
    \\
    S^4
    \ar@{}[r]|<<<<<<<<{\simeq}
    \ar[drr]|-{
      \rotatebox[origin=c]{-15}{
        \scalebox{.6}{
          $\simeq$
        }
      }
    }
    &
    \;\;\;\;\;\mathbb{O}^\times/\mathbb{H}^\times
    \ar[drr]|-{\phantom{AA}}
    &
    \ar@{}[r]|-{
      \mbox{
        \tiny
        \color{darkblue}
        \bf
        \begin{tabular}{c}
          twistor fibration
        \end{tabular}
      }
    }
    &
    \\
    &&
    \mathbb{H}P^1
    \ar@{}[r]|-{\simeq}
    \ar[dd]
    &
    \big(
      \mathbb{H}^2 \!\setminus\! \{0\}
    \big)\!/ \mathbb{\mathbb{H}}^\times
    \ar[dd]
    \ar@{}[r]|-{\ni}
    &
    \big\{
      v \cdot q
      \left\vert
        \,
        q \in \mathbb{H}^\times
      \right.
      \!\!
    \big\}
    \ar@{|->}[dd]
    &&
    \mathbb{H}^2
    \ar@{}[dd]|-{
      \rotatebox[origin=c]{-90}{
        $\simeq_{{}_{\mathbb{R}}}$
      }
    }
    \\
    {\phantom{S^4}}
    &
    {\phantom{\;\;\;\;\;\mathbb{O}^\times/\mathbb{H}^\times}}
    \\
    &&
    \ast
    \ar@{}[r]|-{\simeq}
    &
    \big(
      \mathbb{\mathbb{O}}^1 \!\setminus\! \{0\}
    \big)\!/ \mathbb{\mathbb{O}}^\times
    \ar@{}[r]|-{\ni}
    &
    \big\{
      v \cdot o
      \left\vert
        \,
        o \in \mathbb{O}^\times
      \right.
      \!\!
    \big\}
    &&
    \mathbb{O}
  }
  }
\end{equation}

\noindent Alternatively, in its coset-space realization
\vspace{-.2cm}
$$
\hspace{1cm}
\xymatrix@R=8pt@C=12pt{
  \mathrm{SU}(4)/\mathrm{U}(2)
  \ar[r]
  &
  \mathrm{SO}(5)/\mathrm{U}(2) \ar[d] \\ & \mathrm{SO}(5)/\mathrm{SO}(4) }
$$
the twistor fibration is also called \emph{Calabi-Penrose fibration}
(following \cite[\S 3]{Lawson85}, see also \cite{Loo89} and see
\cite[2.31]{Nordstrom08} for a review of Calabi's construction
\cite{Calabi67}\cite{Calabi68}).
We observe that the $\mathrm{Sp}(2)$-coset realization
(\cite[Table 1]{Onishchik60}, see \cite[Table 3]{GorbatsevichOnishchik93})
of the Hopf/twistor fibrations is given as follows
(see also \cite[(73)]{FSS19b}):
\begin{equation}
  \label{HopfTwistorFibrationCosetRealization}
  \raisebox{40pt}{
  \xymatrix{
    S^7
    \ar@{}[r]|-{ \simeq }
    \ar[d]|-{
      h_{\mathbb{C}}
      \mathrlap{
        \;\;
        \mbox{
        \tiny
        \color{darkblue}
        \bf
        \begin{tabular}{c}
          7d complex
          \\
          Hopf fibration
        \end{tabular}
        }
      }
    }
    \ar@/_1.3pc/[dd]_-{
      \mathllap{
        \mbox{
          \tiny
          \color{darkblue}
          \bf
          \begin{tabular}{c}
            quaternionic
            \\
            Hopf fibration
          \end{tabular}
        }
        \;
      }
      h_{\mathbb{H}}
    }
    &
    \mathrm{Sp}(2)/{\phantom{(}}\mathrm{Sp}(1)_L {\phantom{\times \mathrm{Sp}(1)_R}}
    \\
    \mathbb{C}P^3
    \ar@{}[r]|-{ \simeq }
    \ar[d]|-{
      t_{\mathbb{H}}
      \mathrlap{
        \;\;
        \mbox{
        \tiny
        \color{darkblue}
        \bf
        \begin{tabular}{c}
          Calabi-Penrose
          \\
          twistor fibration
        \end{tabular}
        }
      }
    }
    &
    \mathrm{Sp}(2)/\big( \mathrm{Sp}(1)_L \times \mathrm{U}(1)_R \big)
    \\
    S^4
    \ar@{}[r]|-{\simeq}
    &
    \mathrm{Sp}(2)/
    \big(
      \mathrm{Sp}(1)_L \times \mathrm{Sp}(1)_R
    \big)
    }
  }
\end{equation}
where the maps are induced by the canonical subgroup inclusions
(recalled in Example \ref{SubgroupsOfQuaternionLinearGroups}).

\medskip

 We discuss in this paper the enhancement
(Prop. \ref{EquivarianceOfCombinedHopfTwistorFibration} below)
of these classical fibrations \eqref{HopfTwistorFactorization}
\eqref{HopfTwistorFibrationCosetRealization}
to Borel-equivariant
\emph{parametrized fibrations}
(Def. \ref{ParametrizedHopfTwistorFibration} below)
over the classifying space
of the group $\mathrm{Sp}(2)$ (recalled as Def. \ref{QuaternionicGroups} below),
generalizing the analogous discussion for just the quaternionic Hopf
fibration in \cite{FSS19b}\cite{FSS19c}.
The main results are
Theorem \ref{CohomologyOfBorelEquivariantTwistorFibration}
and
Theorem \ref{SullivanModelOfParametrizedHopfTwistorFibrations} below,
which characterize the integral cohomology and the rational homotopy type
of the Borel-equivariant Hopf/twistor fibrations
\eqref{ParametrizedHopfAndTwistorFibration}
(generalizing the result for just the quaternionic
Hopf fibration from \cite[3.19]{FSS19b}).

\medskip

\subsection{Construction}
 \label{ConstructionOfBorelEquivariantHopfTwistorFibration}

Here we determine the maximal symmetry group of the
joint Hopf/twistor fibrations (Prop. \ref{EquivarianceOfCombinedHopfTwistorFibration},
Remark \ref{TwistorSpacesBreaksEquivarianceToSp2}),
construct the corresponding Borel-equivariantization
(Def. \ref{ParametrizedHopfTwistorFibration}) and characterize its
integral cohomology (Theorem \ref{CohomologyOfBorelEquivariantTwistorFibration}).

\medskip

The following isomorphisms (Prop. \ref{EquivariantIdentificationOfS4WithHP1})
are classical,
but since the third of these is rarely made explicit in the literature,
we spell out a proof:
\begin{prop}[Equivariant identification of 4-sphere with quaternionic projective space]
  \label{EquivariantIdentificationOfS4WithHP1}
  There are canonical isomorphisms

  \noindent {\bf (i)} of topological spaces:

  \vspace{-.9cm}

  \begin{equation}
    \xymatrix{
      S^4 \ar[r]^-{ \alpha }_-{\simeq}  & \mathbb{H}P^1
    };
  \end{equation}

  \noindent {\bf (ii)} of topological groups:

  \vspace{-.9cm}

  \begin{equation}
    \xymatrix{
      \mathrm{Spin}(5) \ar[r]^{\gamma}_-{\simeq} & \mathrm{Sp}(2)
    };
  \end{equation}

  \noindent {\bf (iii)} of canonical topological group actions:

  \vspace{-4mm}

  \begin{equation}
    \xymatrix{
      \mathrm{Spin}(5)
      \acts \;
      S^4
      \ar[rr]^{ (\gamma,\alpha) }_-{\simeq}
      &&
      \mathrm{Sp}(2)
      \acts \;
      \mathbb{H}P^1
    }.
  \end{equation}
\end{prop}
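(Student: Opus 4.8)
The plan is to realise both sides concretely inside the real algebra $\mathrm{Mat}_2(\mathbb{H})$ of $2\times 2$ quaternionic matrices, where the exceptional isomorphism is transparent, and then to observe that the two actions are literally the same conjugation formula.

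First, for \textbf{(i)} in the form needed later, I would present $\mathbb{H}P^1$ as the space of quaternionic lines $\ell\subset\mathbb{H}^2$ and send $\ell$ to the orthogonal projector $P_\ell\in\mathrm{Mat}_2(\mathbb{H})$ onto it for the standard Hermitian form $\langle v,w\rangle=\bar v_1 w_1+\bar v_2 w_2$, identifying $\mathbb{H}P^1$ with the Hermitian idempotents of trace $1$. Writing $\mathfrak{h}:=\{A\in\mathrm{Mat}_2(\mathbb{H}):A^\dagger=A,\ \mathrm{tr}\,A=0\}$ (a real $5$-space), the map $\ell\mapsto 2P_\ell-\mathrm{Id}$ is then a bijection onto $\{A\in\mathfrak{h}:A^2=\mathrm{Id}\}$; a one-line entrywise computation shows $A^2=(a^2+|q|^2)\mathrm{Id}$ for $A=\bigl(\begin{smallmatrix}a & q\\ \bar q & -a\end{smallmatrix}\bigr)\in\mathfrak{h}$, so the positive-definite quadratic form $Q$ determined by $A^2=Q(A)\mathrm{Id}$ makes $\mathfrak{h}\cong\mathbb{R}^5$ with $\{A^2=\mathrm{Id}\}$ its unit sphere $S^4$. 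Continuity in both directions (using compactness of $\mathbb{H}P^1$) gives the homeomorphism $\alpha\colon S^4\to\mathbb{H}P^1$.

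Next, for \textbf{(ii)}, I would let $\mathrm{Sp}(2)$ act on Hermitian matrices by $A\mapsto UAU^\dagger$; this preserves $\dagger$-symmetry, trace, and (since conjugation commutes with squaring and fixes $\mathrm{Id}$) the form $Q$, so it restricts to a homomorphism $\rho\colon\mathrm{Sp}(2)\to\mathrm{O}(\mathfrak{h},Q)\cong\mathrm{O}(5)$, landing in $\mathrm{SO}(5)$ by connectedness. Its kernel consists of the $U$ commuting with every Hermitian matrix, hence with the real algebra they generate, which is all of $\mathrm{Mat}_2(\mathbb{H})$; so $U\in\mathbb{R}\cdot\mathrm{Id}$ and, being unitary, $U=\pm\mathrm{Id}$. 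Thus $\rho$ has discrete kernel, so it is a submersion between $10$-dimensional groups, its image is open and (by compactness) closed, hence all of $\mathrm{SO}(5)$, so $\rho$ is a double cover; since $\mathrm{Sp}(2)$ is compact, connected \emph{and} simply connected it is therefore the universal cover, i.e.\ $\mathrm{Spin}(5)$, and I take $\gamma$ to be the inverse isomorphism. Finally, for \textbf{(iii)}, I would transport the canonical $\mathrm{Spin}(5)$-action on $S^4$ (through $\mathrm{SO}(5)$ on $\mathbb{R}^5\cong\mathfrak{h}$) and the canonical $\mathrm{Sp}(2)$-action on $\mathbb{H}P^1$ (through $U\cdot\ell=U\ell$) along $\gamma$ and $\alpha$; both become the single formula $A\mapsto UAU^\dagger$ on $\{A^2=\mathrm{Id}\}$, the first by definition of $\rho$ and the second by the identity $P_{U\ell}=UP_\ell U^\dagger$ (a unitary carries orthogonal projectors to orthogonal projectors), so $(\gamma,\alpha)$ is equivariant by construction.

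The main point requiring care---this being exactly the step ``rarely made explicit''---is keeping the non-commutative conventions rigidly consistent: $\mathbb{H}^2$ a \emph{right} module so $\mathrm{Sp}(2)$ acts $\mathbb{H}$-linearly on the \emph{left}, the conjugation in $\langle-,-\rangle$ placed so that $P_\ell$ is genuinely $\dagger$-Hermitian of trace $1$, and the form on $\mathfrak{h}$ defined intrinsically by $A^2=Q(A)\mathrm{Id}$ (equivalently as minus the reduced norm) so that $\mathrm{Sp}(2)$-invariance is automatic and does not hinge on a multiplicativity property of a naive quaternionic determinant. Everything else---connectedness, the dimension count, the open-and-closed surjectivity argument---is routine; the only genuinely non-formal input is simple-connectedness of $\mathrm{Sp}(2)$, which is what promotes the covering $\rho$ to the identification with $\mathrm{Spin}(5)$.
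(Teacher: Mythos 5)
Your proof is correct, and it uses the same concrete model as the paper: the real $5$-space of traceless hermitian $2\times 2$ quaternionic matrices with the conjugation action of $\mathrm{Sp}(2)=\mathrm{U}(2,\mathbb{H})$, and the passage between unit-sphere elements $A$ and rank-one orthogonal projectors $P=\tfrac{1}{2}(\mathrm{I}\pm A)$ to identify $S^4$ with $\mathbb{H}P^1$ equivariantly; your parts (i) and (iii) are essentially the paper's argument. Where you genuinely diverge is in establishing $\mathrm{Spin}(5)\simeq\mathrm{Sp}(2)$: the paper imports the Lorentzian isomorphism $\mathrm{Spin}(5,1)\simeq\mathrm{SL}(2,\mathbb{H})$ acting by conjugation on hermitian matrices (the quaternionic $2$-component spinor formalism, taken from the literature) and then identifies $\mathrm{Sp}(2)$ as the subgroup preserving the traceless Euclidean slice, equivalently acting trivially on the pure-trace part; you instead construct the representation $\rho\colon\mathrm{Sp}(2)\to\mathrm{SO}(5)$ directly, compute its kernel $\{\pm\mathrm{I}\}$ by a generation argument, obtain surjectivity from the dimension count plus the open-and-closed argument, and invoke simple-connectedness of $\mathrm{Sp}(2)$ to conclude it is the universal double cover, hence $\mathrm{Spin}(5)$. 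Your route is self-contained -- it needs neither the cited $6$d exceptional isomorphism nor the Dieudonn{\'e} determinant -- at the price of using $\pi_1(\mathrm{Sp}(2))=1$; the paper's route gets the covering property for free from the quoted $6$d statement and, more to the point for its purposes, sets up the Minkowski/Pauli-matrix picture that it reuses in its summary diagrams and in the parallel complex ($\mathrm{Spin}(3)$, $\mathbb{C}P^1$) case. One small point your stated conventions implicitly handle but which deserves a sentence: the quaternionic trace is not cyclic, so invariance of $\mathrm{tr}$ and of $Q$ under $A\mapsto U A U^{\dagger}$ should be justified via $A^2=Q(A)\,\mathrm{I}$ (as you do) together with the fact that the trace of a hermitian quaternionic matrix is real, since only the real part of the trace is conjugation-invariant.
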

\begin{proof}
Quaternionic 2-component spinor formalism
provides an isomorphism of real quadratic vector spaces
(\cite{KugoTownsend82},
streamlined review in \cite{BH09}\cite{VenanciaBatista20}\cite[\S 3.2]{FSS20b})

\vspace{-2mm}
\begin{equation}
  \label{6dMinkowskiSpacetimeAsQuaternionicMatrices}
  \xymatrix@R=-4pt{
    \big(
      \mathbb{R}^{ 5,1 },
      \eta
    \big)
    \ar[rr]^-{ \simeq }
    &&
    \Big(
      \mathrm{Mat}^{\mathrm{herm}}
      \big(
        2 \times 2,
        \mathbb{H}
      \big)
      ,
      - \mathrm{det}
    \Big)
    \\
\scalebox{0.7}{$    \left[
    \!\!
    {\begin{array}{c}
      x^0,
      \\
      x^1,
      \\
      x^2,
      \\
      \vdots
      \\
      x^5
    \end{array}}
    \!\!
    \right]
    $}
    \ar@{}[rr]|-{\longmapsto}
    &&
 \scalebox{0.7}{$    \left(\!\!\!
    {\begin{array}{cc}
      x^0 - x^1
      &
      x^2 + \mathrm{i} x^3 + \mathrm{j} x^4 + \mathrm{k} x^5
      \\
      x^2 - \mathrm{i} x^3 - \mathrm{j} x^4 - \mathrm{k} x^5
      &
      x^0 + x^1
    \end{array}}
    \!\!\! \right)
    $}
  }
\end{equation}

\vspace{-.2cm}

\noindent from
{\bf (a)} 6d Minkowski spacetime $\mathbb{R}^{5,1}$ with metric
$\eta := \mathrm{diag}(-1,+1, \cdots, +1)$
to {\bf (b)}
the vector space of 2-by-2 quaternionic matrices which are hermitian,
$A^\dagger = A$, with quadratic form the negative of the determinant
operation.
Under this identification, the canonical action of
$\mathrm{Spin}(5,1)$ on $\mathbb{R}^{5,1}$ (through that of
$\mathrm{SO}(5,1)$) translates to the conjugation action of
$\mathrm{SL}(2,\mathbb{H})$ \eqref{DefSL2H}:
\vspace{-2mm}
\begin{equation}
  \label{Spin51ByQuaternionicMatrices}
  \xymatrix@R=0pt{
    \mathrm{Spin}(5,1)
    \acts  \;
    \mathbb{R}^{5,1}
    \ar[rr]^-{ \simeq }
    &&
    \mathrm{SL}(2,\mathbb{H})
    \acts \;
    \mathrm{Mat}^{\mathrm{herm}}
    \big(
      2 \times 2,
      \mathbb{H}
    \big)
    \\
    &&
    A \mapsto G \cdot A \cdot G^\dagger
  }
\end{equation}

\vspace{-2mm}
\noindent Now consider the restriction of this situation to
the Euclidean spatial slice
$\mathbb{R}^5 \hookrightarrow \mathbb{R}^{5,1}$
determined by $x^0 = 0$.
Under the isomorphism \eqref{6dMinkowskiSpacetimeAsQuaternionicMatrices},
this clearly corresponds to restriction to the \emph{traceless} hermitian
matrices:

\vspace{-.3cm}

\begin{equation}
  \label{5dEuclideanSpaceAsQuaternionicMatrices}
  \xymatrix@R=5pt{
    \big(
      \mathbb{R}^{5},
      g
    \big)
    \ar[rr]^-{ \simeq }
    &&
    \Big(
      \mathrm{Mat}^{\mathrm{herm}}_{\mathrm{trless}}
      \big(
        2 \times 2,
        \mathbb{H}
      \big)
      ,
      - \mathrm{det}
    \Big)
  }
\end{equation}
\vspace{-.2cm}
\noindent Notice here, from direct inspection (see also \cite[Prop. 5]{BH09}), that
\begin{equation}
  \label{DeterminantOfHermitianTracelessMatrices}
  A
  \;\in\;
  \mathrm{Mat}^{\mathrm{herm}}_{\mathrm{trless}}
  \big(
    2 \times 2, \mathbb{H}
  \big)
  \phantom{AAAA}
  \Rightarrow
  \phantom{AAAA}
  A \cdot A
  \;=\;
  - \mathrm{det}(A) \cdot \mathrm{I}
  \,.
\end{equation}

\vspace{-2mm}
\noindent
Morever, the subgroup $\mathrm{Spin}(5) \subset \mathrm{Spin}(5,1)$
which fixes this subspace
corresponds under \eqref{Spin51ByQuaternionicMatrices}
to that subgroup of $\mathrm{SL}(2,\mathbb{H})$
whose conjugation operation preserves traceless matrices.
Since this means, equivalently, to act trivially on their orthogonal complement,
given by the pure trace matrices,
i.e. the real multiples of the 2-by-2 identity matrix $\mathrm{I} := \mathrm{Id}_{\mathbb{H}^2}$:
$$
  G \cdot \mathrm{I} \cdot G^\dagger
  \;=\;
  \mathrm{I}
  \phantom{AAA}
  \Leftrightarrow
  \phantom{AAA}
  G \cdot G^\dagger
  \;=\;
  \mathrm{I}
  \,,
$$
we see, using \eqref{Sp2InSL2H},
that this is the quaternionic unitary group
$\mathrm{Sp}(2) := \mathrm{U}(2,\mathbb{H})$ \eqref{Spn},
hence that \eqref{Spin51ByQuaternionicMatrices} restricts as follows:

\vspace{-4mm}

\begin{equation}
  \label{Spin5ByQuaternionicMatrices}
  \xymatrix@R=5pt{
    \mathrm{Spin}(5)
    \acts \;
    \mathbb{R}^{5}
    \ar[rr]^-{ \simeq }
    &&
    \;
    \mathrm{Sp}(2)
    \acts \;
    \mathrm{Mat}^{\mathrm{herm}}_{\mathrm{trless}}
    \big(
      2 \times 2,
      \mathbb{H}
    \big)
    \,.
  }
\end{equation}
Analogously, the further restriction to the unit sphere in
$\mathbb{R}^5$ corresponds, under \eqref{6dMinkowskiSpacetimeAsQuaternionicMatrices}
and in view of \eqref{DeterminantOfHermitianTracelessMatrices},
to those matrices which are all of:
\vspace{-2mm}
$$
  \underset{
    \mathclap{
    \raisebox{-3pt}{
      \tiny
      \color{darkblue}
      \bf
      6d Minkowski spacetime
    }
    }
  }{
    \mbox{{\bf (a)} hermitian:}
    \;\;
    A^\dagger = A
  }
  \,,
  \phantom{AAAA}
  \underset{
    \mathclap{
    \raisebox{-3pt}{
      \tiny
      \color{darkblue}
      \bf
      5d Euclidean spacetime
    }
    }
  }{
    \mbox{ {\bf (b)} traceless:}\;\;
    \mathrm{tr}(A) = 0
  }
  \,,
  \phantom{AAAA}
  \underset{
    \mathclap{
    \raisebox{-3pt}{
      \tiny
      \color{darkblue}
      \bf
      4d sphere
    }
    }
  }{
    \mbox{ {\bf (c)} unitary:}\;\;
    A \cdot A
    = 1\;.
  }
$$

\vspace{-1mm}
\noindent From (c) it follows that the matrix
$$
  P_A
    \;:=\;
  \tfrac{1}{2}
  \big(
    \mathrm{I} - A
  \big)
$$
is a projector,
$P_A \cdot P_A = P_A$; and from (b)
it follows that this projector has unit rank:
$$
  \mathrm{tr}(P_A)
  \;=\;
  \tfrac{1}{2}
  \big(\;
    \underset{
      \mathclap{= 2}
    }{
      \underbrace{
        \mathrm{tr}(\mathrm{I})
      }
    }
    -
    \underset{
      \mathclap{= 0}
    }{
      \underbrace{
        \mathrm{tr}(A)
      }
    }
 \; \big)
  \;=\;
  1
  \,.
$$

\vspace{-2mm}
\noindent
Here a unit-rank projector is one for which there exists
$v_A \in \mathbb{H}^2 \setminus \{0\}$
such that
\vspace{-1mm}
$$
  P_A
   \;=\;
   \tfrac{1}{\left\Vert v_A\right\Vert^2}
   \,
   v_A \cdot v_A^\dagger
  \,.
$$
Noticing that $P_A$,
and hence $A \,=\, \mathrm{I} - 2 P_A  $,
thus depends on $v_A$
exactly only through the quaternionic line that
it spans, we have thus found the following  identification
of the 4-sphere with quaternionic projective space:
\vspace{-1mm}
\begin{equation}
  \label{IdentificationOfThe4SphereWithQuaternionicProjectiveSpace}
  \xymatrix@R=-4pt{
    S^4
    \;\simeq\;
    S(\mathbb{R}^5)
    \ar[rr]^-{ \simeq }
    &&
    \mathrm{Mat}^{\mathrm{herm}}_{\mathrm{trless}}
    \big(
      2 \times 2,
      \mathbb{H}
    \big)
    \cap
    \mathrm{U}(2,\mathbb{H})
    \ar@{<-}[rr]^-{ \simeq }
    &&
    \mathbb{H}P^1
    \\
    &&
    \mathrm{I}
    -
    \tfrac{2}{\left\Vert v\right\Vert}
    \,
    v \cdot v^\dagger
    \ar@{}[rr]|-{\longmapsfrom}
    &&
    [v]
  }
\end{equation}

\vspace{-2mm}
\noindent Finally, under the isomorphism on the right of
\eqref{IdentificationOfThe4SphereWithQuaternionicProjectiveSpace}
the canonical $\mathrm{Sp}(2)$-action on $\mathbb{H}P^1$
\vspace{-2mm}
$$
  \xymatrix@R=-2pt{
    \mathrm{Sp}(2)
    \times
    \mathbb{H}P^1
    \ar[rr]
    &&
    \mathbb{H}P^1
    \\
    \big(
      A, [v]
    \big)
    \ar@{}[rr]|-{\longmapsto}
    &&
    \big[
      A \cdot v
    \big]
  }
$$

\vspace{-2mm}
\noindent is manifestly identified with the
conjugation action \eqref{Spin51ByQuaternionicMatrices}.
This implies the claim {(iii)}, by \eqref{Spin5ByQuaternionicMatrices}.
\hfill \end{proof}

\noindent Summarizing
\eqref{6dMinkowskiSpacetimeAsQuaternionicMatrices},
\eqref{Spin5ByQuaternionicMatrices} \&
\eqref{IdentificationOfThe4SphereWithQuaternionicProjectiveSpace}:
\vspace{-2mm}
$$
  \hspace{-8cm}
  \xymatrix@R=6pt{
    \mathllap{
      \mbox{
        \tiny
        \color{darkblue}
        \bf
        \begin{tabular}{c}
          6d Minkowski
          \\
          spacetime
        \end{tabular}
      }
      \;\;
    }
    \mathrm{Spin}(5,1)
    \;
    \underset{\mathrm{can}}{\acts}
    \;\;
    \mathbb{R}^{5,1}
    \ar@{}[r]|-{\simeq}
    &
    \mathrlap{
    \mathrm{SL}(2,\mathbb{H})
    \;
    \underset{\mathrm{adj}}{\acts}
    \;\;
    \overset{
      \mathclap{
      \raisebox{3pt}{
        \tiny
        \color{darkblue}
        \bf
        relativistic quaternionic Pauli matrices
      }
      }
    }{
    \big\{
      A \in \mathrm{Mat}_{{}_{2 \times 2}}(\mathbb{H})
      \,\big\vert\,
      A^\dagger = A
    \big\}
    }
    }
    \\
    \mathllap{
      \mbox{
        \tiny
        \color{darkblue}
        \bf
        \begin{tabular}{c}
          5d Euclidean
          \\
          space
        \end{tabular}
      }
      \;\;\;\;\,\;
    }
    \mathrm{Spin}(5)
   \;
   \underset{\mathrm{can}}{\acts}
   \;\;
    \mathbb{R}^5
    \ar@<-21pt>@{^{(}->}[u]
    \ar@{}[r]|-{\simeq}
    &
    {\phantom{\mathrm{S}}}\,
    \mathrlap{
    \underset{
      \raisebox{-14pt}{
        \scalebox{.8}{$
          = \mathrm{Sp}(2)
        $}
      }
    }{
      \underbrace{
        \mathrm{U}(2,\mathbb{H})
      }
    }
    \;
    \underset{\mathrm{adj}}{\acts}
    \;\;
    \overset{
      \raisebox{3pt}{
        \tiny
        \color{darkblue}
        \bf
        quaternionic Pauli matrices
      }
    }{
    \left\{
      A \in \mathrm{Mat}_{{}_{2 \times 2}}(\mathbb{H})
      \,\big\vert\,
        A^\dagger = A
        \,,\,
        \mathrm{tr}(A) = 0
    \right\}
    }
    }
    \\
    \mathllap{
      \mbox{
        \tiny
        \color{darkblue}
        \bf
        4-sphere
      }
      \;\;\;\;\;\;\;\;\;\;\;
    }
    \mathrm{Spin}(5)
    \;
    \underset{\mathrm{can}}{\acts}
    \;\;
    S^4
    \ar@<-21pt>@{^{(}->}[u]
    \ar@{}[r]|-{\simeq}
   &
    {\phantom{\mathrm{S}}}\,
    \mathrlap{
    \overbrace{
      \mathrm{U}(2,\mathbb{H})
    }
    \;
    \underset{\mathrm{adj}}{\acts}
    \;\;
    \overset{
      \raisebox{3pt}{
        \tiny
        {
        \color{darkblue}
        \bf
        quaternionic rank-1 projectors
        }
        $P = \tfrac{1}{2}\big(\mathrm{I} - A \big)$
      }
    }{
    \underset{
      \simeq \,
      \underset{
        \mathclap{
        \raisebox{-3pt}{
          \tiny
          \color{darkblue}
          \bf
          quaternionic projective line
        }
        }
      }{
        \mathbb{H}P^1
      }
    }{
    \underbrace{
    \left\{
      A \in \mathrm{Mat}_{{}_{2 \times 2}}(\mathbb{H})
      \,\big\vert\,
        A^\dagger = A
        \,,\,
        \mathrm{tr}(A) = 0
        \,,\,
        A^\dagger \cdot A = \mathrm{I}
    \right\}
    }
    }
    }
    }
  }
$$
Of course, there is the analogous situation over the complex numbers:
\vspace{-1mm}
$$
  \hspace{-8cm}
  \xymatrix@R=6pt{
    \mathllap{
      \mbox{
        \tiny
        \color{darkblue}
        \bf
        \begin{tabular}{c}
          4d Minkowski
          \\
          spacetime
        \end{tabular}
      }
      \;\;
    }
    \mathrm{Spin}(3,1)
    \;
    \underset{\mathrm{can}}{\acts}
    \;\;
    \mathbb{R}^{3,1}
    \ar@{}[r]|-{\simeq}
    &
    \mathrlap{
    \mathrm{SL}(2,\mathbb{C})
    \;
    \underset{\mathrm{adj}}{\acts}
    \;\;
    \overset{
      \mathclap{
      \raisebox{3pt}{
        \tiny
        \color{darkblue}
        \bf
        relativistic complex Pauli matrices
      }
      }
    }{
    \big\{
      A \in \mathrm{Mat}_{{}_{2 \times 2}}(\mathbb{C})
      \,\big\vert\,
      A^\dagger = A
    \big\}
    }
    }
    \\
    \mathllap{
      \mbox{
        \tiny
        \color{darkblue}
        \bf
        \begin{tabular}{c}
          3d Euclidean
          \\
          space
        \end{tabular}
      }
      \;\;\;\;\,\;
    }
    \mathrm{Spin}(3)
    \;
    \underset{\mathrm{can}}{\acts}
    \;\;
    \mathbb{R}^3
    \ar@<-21pt>@{^{(}->}[u]
    \ar@{}[r]|-{\simeq}
    &
    \mathrlap{
    \underset{
      \raisebox{-12pt}{
        \scalebox{.8}{$
        $}
      }
    }{
        \mathrm{SU}(2,\mathbb{C})
    }
    \;
    \underset{\mathrm{adj}}{\acts}
    \;\;
    \overset{
      \raisebox{3pt}{
        \tiny
        \color{darkblue}
        \bf
        complex Pauli matrices
      }
    }{
    \left\{
      A \in \mathrm{Mat}_{{}_{2 \times 2}}(\mathbb{C})
      \,\big\vert\,
        A^\dagger = A
        \,,\,
        \mathrm{tr}(A) = 0
    \right\}
    }
    }
    \\
    \mathllap{
      \mbox{
        \tiny
        \color{darkblue}
        \bf
        3-sphere
      }
      \;\;\;\;\;\;\;\;\;\;\;
    }
    \mathrm{Spin}(3)
    \;
    \underset{\mathrm{can}}{\acts}
    \;\;
    S^3
    \ar@<-21pt>@{^{(}->}[u]
    \ar@{}[r]|-{\simeq}
   &
    \mathrlap{
      \mathrm{SU}(2,\mathbb{C})
      \;
    \underset{\mathrm{adj}}{\acts}
    \;\;
    \overset{
      \raisebox{3pt}{
        \tiny
        {
        \color{darkblue}
        \bf
        complex rank-1 projectors
        }
        $P = \tfrac{1}{2}\big(\mathrm{I} - A \big)$
      }
    }{
    \underset{
      \simeq \,
      \underset{
        \mathclap{
        \raisebox{-3pt}{
          \tiny
          \color{darkblue}
          \bf
          complex projective line
        }
        }
      }{
        \mathbb{C}P^1
      }
    }{
    \underbrace{
    \left\{
      A \in \mathrm{Mat}_{{}_{2 \times 2}}(\mathbb{C})
      \,\big\vert\,
        A^\dagger = A
        \,,\,
        \mathrm{tr}(A) = 0
        \,,\,
        A^\dagger \cdot A = \mathrm{I}
    \right\}
    }
    }
    }
    }
  }
$$

\begin{prop}[Equivariance of combined Hopf/twistor fibrations]
  \label{EquivarianceOfCombinedHopfTwistorFibration}
  $\,$

  \vspace{-.5cm}
  \begin{itemize}
  \item[\bf (i)]
  The quaternionic Hopf fibration
  $\xymatrix@C=12pt{S^7 \ar[r]^{ h_{\mathbb{H}} } & S^4 }$
 (Diagram  \eqref{HopfTwistorFactorization})
  is equivariant
  with respect to the action of
  $\mathrm{Sp}(2) \cdot \mathrm{Sp}(1)$ (Def. \ref{QuaternionicGroups})

  \begin{itemize}
  \vspace{-3mm}
  \item[{\bf (a)}] on $S^7$, by
  \vspace{-2mm}
  \begin{equation}
    \label{ActionOnS7}
    \xymatrix@R=-2pt{
      \mathrm{Sp}(2)\cdot \mathrm{Sp}(1) \times S^7
      \ar[rr]
      &&
      S^7
      \\
      \big(
        [A, q'],
        \{v \cdot t \,\left\vert\, t \in \mathbb{R}^\times_+ \right.\}
      \big)
      \ar@{}[rr]|-{\longmapsto}
      &&
      \{
        A \cdot v  \cdot t \cdot q'
        \,\left\vert\, t \in \mathbb{R}^\times_+\right.
      \}
    }
  \end{equation}
  \vspace{-3mm}
  \item[{\bf (b)}] on $S^4$, by
  \vspace{-2mm}
  \begin{equation}
    \label{ActionOnS4}
    \xymatrix@R=-2pt{
      \mathrm{Sp}(2)\cdot \mathrm{Sp}(1) \times S^4
      \ar[rr]
      &&
      S^4
      \\
      \big(
        [A, q'],
        \{v \cdot q \,\left\vert\, q \in \mathbb{H}^\times \right.\}
      \big)
      \ar@{}[rr]|-{\longmapsto}
      &&
      \{
        A \cdot v \cdot q
        \,\left\vert\, q \in \mathbb{H}^\times \right.
      \}
    }
  \end{equation}
  \end{itemize}

  \vspace{-3mm}
  \item[\bf (ii)]
  Its factorization
  $h_{\mathbb{H}} = t_{\mathbb{H}} \circ  h_{\mathbb{C}}$ through the
  combined Hopf/twistor fibrations
  retains equivariance under the subgroup
  $\xymatrix@C=12pt{\mathrm{Sp}(2)\, \ar@{^{(}->}[r] & \mathrm{Sp}(2)\cdot \mathrm{Sp}(1) }$
  \eqref{CanonicalInclusionIntoCentralProductGroup}
  with action on $\mathbb{C}P^3$ given by
  \vspace{-3mm}
  \begin{equation}
    \label{ActionOnCP3}
    \xymatrix@R=-2pt{
      \mathrm{Sp}(2) \times \mathbb{C}P^3
      \ar[rr]
      &&
      \mathbb{C}P^3
      \\
      \big(
        A
        ,
        \{x \cdot z \,\left\vert\, z \in \mathbb{C}^\times \right.\}
      \big)
      \ar@{}[rr]|-{\longmapsto}
      &&
      \{
        A \cdot x \cdot z
        \,\left\vert\, z \in \mathbb{C}\right.
      \}
    }
  \end{equation}
  \end{itemize}
 \end{prop}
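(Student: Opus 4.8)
The plan is to work entirely within the division-algebraic presentations of diagram \eqref{HopfTwistorFactorization}, realizing $S^7 \simeq (\mathbb{H}^2 \setminus \{0\})/\mathbb{R}^\times_+$, $\mathbb{C}P^3 \simeq (\mathbb{H}^2 \setminus \{0\})/\mathbb{C}^\times$ and $S^4 \simeq \mathbb{H}P^1 \simeq (\mathbb{H}^2 \setminus \{0\})/\mathbb{H}^\times$, with the respective groups of scalars acting by \emph{right} multiplication and $\mathrm{Sp}(2)$ acting throughout by \emph{left} matrix multiplication on $\mathbb{H}^2$ (which commutes with every right scalar action). Under these identifications the maps $h_{\mathbb{C}}$, $t_{\mathbb{H}}$ and $h_{\mathbb{H}} = t_{\mathbb{H}} \circ h_{\mathbb{C}}$ are literally the quotient maps enlarging the group of scalars along the inclusions $\mathbb{R}^\times_+ \subset \mathbb{C}^\times \subset \mathbb{H}^\times$. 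Each clause then reduces to three routine checks: that the displayed formula is independent of the chosen representatives, that it obeys the axioms of a left group action, and that it intertwines the relevant quotient maps.

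For clause (i)(a): in \eqref{ActionOnS7}, independence of the representative $v$ of a positive ray holds because $\mathbb{R}^\times_+$ is central in $\mathbb{H}$, so $A \cdot (v \cdot s) \cdot t \cdot q' = A \cdot v \cdot (s t) \cdot q'$ spans the same ray; independence of the representative $(A, q')$ of the class $[A, q'] \in \mathrm{Sp}(2) \cdot \mathrm{Sp}(1)$ (Def. \ref{QuaternionicGroups}) holds because the substitution $(A, q') \mapsto (-A, -q')$ introduces two sign changes that cancel in $A \cdot v \cdot t \cdot q'$; and $A \cdot v \cdot t \cdot q' \neq 0$ since every factor is invertible. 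The action axioms follow from associativity of quaternionic matrix multiplication together with centrality of $\mathbb{R}^\times_+$. Clause (i)(b) is the same computation, with the extra observation that the $\mathrm{Sp}(1)$-factor acts trivially on $S^4 \simeq \mathbb{H}P^1$, since right multiplication by $q' \in \mathrm{Sp}(1) \subset \mathbb{H}^\times$ only changes the representative of a quaternionic line. Equivariance of $h_{\mathbb{H}}$ is then immediate: picking the representative $A \cdot v \cdot q'$ of the ray $[A, q'] \cdot \{v \cdot t\}$, its quaternionic span is $\{A \cdot v \cdot q' \cdot q \mid q \in \mathbb{H}^\times\} = \{A \cdot v \cdot q'' \mid q'' \in \mathbb{H}^\times\}$, which is exactly the image under the $S^4$-action \eqref{ActionOnS4} of $[A, q']$ of the quaternionic line $h_{\mathbb{H}}(\{v \cdot t\}) = \{v \cdot q \mid q \in \mathbb{H}^\times\}$.

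For clause (ii): the $\mathrm{Sp}(2)$-action \eqref{ActionOnCP3} is well defined because replacing $x$ by $x \cdot z_0$ with $z_0 \in \mathbb{C}^\times$ yields $A \cdot (x \cdot z_0) \cdot z = A \cdot x \cdot (z_0 z)$, where $z_0$ and $z$ commute inside the chosen subfield $\mathbb{C} \subset \mathbb{H}$, so $z_0 z$ still ranges over $\mathbb{C}^\times$; the action axioms are trivial since $\mathrm{Sp}(2)$ acts purely on the left. For the subgroup $\mathrm{Sp}(2) \hookrightarrow \mathrm{Sp}(2) \cdot \mathrm{Sp}(1)$ of \eqref{CanonicalInclusionIntoCentralProductGroup}, equivariance of $h_{\mathbb{C}}$ follows from $\{A \cdot v \cdot (t z) \mid z \in \mathbb{C}^\times\} = \{A \cdot v \cdot z' \mid z' \in \mathbb{C}^\times\}$ (using $\mathbb{R}^\times_+ \subset \mathbb{C}^\times$), and equivariance of $t_{\mathbb{H}}$ from $\{A \cdot v \cdot (z q) \mid q \in \mathbb{H}^\times\} = \{A \cdot v \cdot q' \mid q' \in \mathbb{H}^\times\}$ (using $\mathbb{C}^\times \subset \mathbb{H}^\times$); composing the two recovers the $\mathrm{Sp}(2)$-equivariance of $h_{\mathbb{H}} = t_{\mathbb{H}} \circ h_{\mathbb{C}}$ from clause (i).

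The computations themselves present no obstacle; the only points that need care are the non-commutativity of $\mathbb{H}$ — dispatched once and for all by keeping the $\mathrm{Sp}(2)$-action on the left and all scalar actions on the right, and by recalling that $\mathbb{R}$ (but neither $\mathbb{C}$ nor $\mathbb{H}$) is central in $\mathbb{H}$ — and the bookkeeping for the central-product identification $(A, q') \sim (-A, -q')$, which is harmless because the action is bilinear in $(A, q')$. The single structural point, and the reason clause (ii) can only claim $\mathrm{Sp}(2)$- rather than $\mathrm{Sp}(2) \cdot \mathrm{Sp}(1)$-equivariance, is that right multiplication by a general unit quaternion $q'$ does not preserve the family of complex lines in $\mathbb{H}^2$: it does so exactly when $q'$ normalizes $\mathbb{C} \subset \mathbb{H}$, i.e. when $q' \in \mathrm{U}(1) \cup \mathrm{U}(1)\mathrm{j}$, which is the obstruction elaborated in Remark \ref{TwistorSpacesBreaksEquivarianceToSp2}.
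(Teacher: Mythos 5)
Your proposal is correct and follows essentially the same route as the paper's own proof: working in the division-algebraic presentation of \eqref{HopfTwistorFactorization}, observing that left multiplication by $\mathrm{Sp}(2)$ commutes with all right scalar quotients, handling the right $\mathrm{Sp}(1)$-action via the centrality of $\mathbb{R}$ in $\mathbb{H}$, and noting that the central identification $(A,q')\sim(-A,-q')$ is harmless. The only difference is that you spell out the routine well-definedness and action-axiom checks which the paper dismisses as ``essentially immediate,'' and your closing remark on when $q'$ preserves complex lines matches the content of Remark \ref{TwistorSpacesBreaksEquivarianceToSp2}.
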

 \noindent In summary:
\vspace{-2mm}
$$
  \xymatrix@C=4em{
    S^7
    \ar@(ul,ur)^-{ \scalebox{0.6}{$\mathrm{Sp}(2)\cdot \mathrm{Sp}(1)$} }
    \ar[rr]^-{ h_{\mathbb{H}} }
    &&
    S^4
    \ar@(ul,ur)^-{\scalebox{0.6}{$ \mathrm{Sp}(2)\cdot \mathrm{Sp}(1)$} }
  }
  \phantom{AAA}
  \mbox{and}
  \phantom{AAAA}
  \xymatrix@C=6em{
    S^7
    \ar@(ul,ur)^-{\scalebox{0.6}{$ \mathrm{Sp}(2)$} }
    \ar[r]^-{ h_{\mathbb{C}} }
    \ar@/_1pc/[rr]_-{ h_{\mathbb{H}} }
    &
    \mathbb{C}P^3
    \ar@(ul,ur)^-{\scalebox{0.6}{$ \mathrm{Sp}(2)$} }
    \ar[r]^{ t_{\mathbb{H}} }
    &
    S^4
    \ar@(ul,ur)^-{\scalebox{0.6}{$ \mathrm{Sp}(2)$} }
  }
$$
\begin{proof}
This is essentially immediate from
the presentation of the fibrations in \eqref{HopfTwistorFactorization}
\eqref{HopfTwistorFibrationCosetRealization}:

\noindent {\bf (ii)} Diagram \eqref{HopfTwistorFactorization} makes
manifest that all maps here are equivariant
with respect to left action by $\mathrm{GL}(8,\mathbb{R})$, hence
in particular under left action by $\mathrm{Sp}(2)$, which is
also manifest from \eqref{HopfTwistorFibrationCosetRealization}:
\vspace{-2mm}
\begin{equation}
  \label{Sp2EquivarianceOfHopfTwistorFibrations}
  \xymatrix{
    \{ A \cdot v \cdot t
      \,\left\vert\,
      t \in \mathbb{R}^\times_+
      \right.
    \}
   \; \ar@{|->}[r]^-{ h_{\mathbb{C}} }
    &
    \;
    \{ A \cdot v \cdot z
      \,\left\vert\,
      z \in \mathbb{C}^\times
      \right.
    \}
 \;   \ar@{|->}[r]^-{ t_{\mathbb{H}} }
    &
    \;
    \{ A \cdot v \cdot q
      \,\left\vert\,
      q \in \mathbb{H}^\times
      \right.
    \}.
  }
\end{equation}

\vspace{-2mm}
\noindent {\bf (i)} We see that the total quaternionic
Hopf fibration is also equivariant under the right $\mathrm{Sp}(1)$-action,
due to the fact that the reals commute with the quaternions:
\vspace{-2mm}
\begin{equation}
  \label{TowardsRightSp1EquivarianceOfQuaternionicHopf}
  \xymatrix{
    \{ v \cdot t \cdot q'
      \,\left\vert\,
      t \in \mathbb{R}^\times_+
      \right.
    \}
    \;=\;
    \{ v \cdot q' \cdot t
      \,\left\vert\,
      t \in \mathbb{R}^\times_+
      \right.
    \}
   \; \ar@{|->}[rr]^-{ h_{\mathbb{H}} }
    &&
    \;
    \{ A \cdot q' \cdot q
      \,\left\vert\,
      q \in \mathbb{H}^\times
      \right.
    \}
    \;=\;
    \{ A \cdot q
      \,\left\vert\,
      q \in \mathbb{H}^\times
      \right.
    \}.
  }
\end{equation}
Moreover, since the left multiplication action by $\mathrm{Sp}(2)$
evidently commutes with the right multiplication action by $\mathrm{Sp}(1)$
and since
$-1 \in \mathrm{Sp}(n)$ is central, this
generates the claimed $\mathrm{Sp}(2)\cdot \mathrm{Sp}(1)$-action.
(In fact, this is the maximal symmetry group of $h_{\mathbb{H}}$
\cite[4.1]{GWZ86}\cite[2.20]{FSS19b}.)
\hfill
\end{proof}

\begin{remark}[Twistor space breaks equivariance to $\mathrm{Sp}(2)$]
 \label{TwistorSpacesBreaksEquivarianceToSp2}
Notice that the
factorization of the quaternionic Hopf fibration through $\mathbb{C}P^3$
is \emph{not} equivariant under the
further right $\mathrm{Sp}(1)$-action from
\eqref{ActionOnS7} and \eqref{ActionOnS4}, Indeed,
the computation analogous to \eqref{TowardsRightSp1EquivarianceOfQuaternionicHopf}
now gives
\vspace{-2mm}
$$
  \xymatrix{
    \{ v \cdot t \cdot q'
      \,\left\vert\,
      t \in \mathbb{R}^\times_+
      \right.
    \}
    \;=\;
    \{ v \cdot q' \cdot t
      \,\left\vert\,
      t \in \mathbb{R}^\times_+
      \right.
    \}
   \; \ar@{|->}[rr]^-{ h_{\mathbb{C}} }
    &&
    \;
    \{ A \cdot q' \cdot z
      \,\left\vert\,
      z \in \mathbb{C}^\times
      \right.
    \}
    \;
    \underset{
      \mathclap{
        \mbox{
          \tiny
          in general
        }
      }
    }{\mbox{\color{red}$\neq$}}\;
    \{ A \cdot z \cdot q'
      \,\left\vert\,
      z \in \mathbb{C}^\times
      \right.
    \}
  }
$$

\vspace{-2mm}
\noindent since the complex numbers do not commute with the quaternions.
Therefore, factoring the quaternionic Hopf fibration through the
twistor fibration \eqref{HopfTwistorFactorization} breaks its symmetry from
$\mathrm{Sp}(2) \cdot \mathrm{Sp}(1)$ to $\mathrm{Sp}(2)$
\eqref{CanonicalInclusionIntoCentralProductGroup}.
\end{remark}

\begin{remark}[Homotopy quotients and Borel construction]
  \label{HomotopyQuotientAndBorelConstruction}
  For $X$ a topological space equipped with a continuous
  action by a topological group $G$, the
  \emph{homotopy quotient} $X \!\sslash\! G$ is the homotopy
  type which is represented by the
  \emph{Borel space} $\big(X \times E G\big)/_{\mathrm{diag}}G$,
  where $E G$ denotes the universal $G$-principal bundle:
  \vspace{-2mm}
  \begin{equation}
    \label{HomotopyQuotientByBorelConstruction}
    \overset{
      \mathclap{
      \raisebox{3pt}{
        \tiny
        \color{darkblue}
        \bf
        homotopy quotient
      }
      }
    }{
      X \!\sslash\! G
    }
    \;\;\;\;\;\;
    \simeq_{\mathrm{whe}}\;
    \overset{
      \mathclap{
      \raisebox{3pt}{
        \tiny
        \color{darkblue}
        \bf
        Borel construction
      }
      }
    }{
      \big(X \times E G\big)/_{\mathrm{diag}}G
    }\;.
  \end{equation}

  \vspace{-2mm}
  \noindent  {\bf (i)} This construction is clearly functorial: On the right
  this is a 1-functor on the category of topological spaces
  equipped with group actions, while on the left this is an
  $\infty$-functor on the $\infty$-category $\mathrm{Groupoids}_\infty$
  equipped with $\infty$-actions,
  see \cite[\S 4]{NSS12a}\cite[\S 2.2]{SS20b}.

  \noindent {\bf (ii)} In the special case when $X = \ast$ is the point,
  the Borel space is the classifying space $B G$. With (i), this means
  that topological group homomorphisms
  $G_1 \overset{\phi}{\longrightarrow} G_2$ induce maps of classifying spaces
   \vspace{-2mm}
  \begin{equation}
    \label{DeloopingOfMaps}
    \xymatrix{
      B G_1
      \ar[r]^-{ B \phi }
      &
      B G_2
    }.
  \end{equation}
\end{remark}

\begin{defn}[Borel-equivariant Hopf/twistor fibrations]
 \label{ParametrizedHopfTwistorFibration}
We say that the \emph{$\mathrm{Sp}(2)$-Borel-equivariant Hopf-twistor fibrations}
are the image (in homotopy types of topological spaces)
of the Hopf/twistor fibrations \eqref{HopfTwistorFactorization}
under taking the homotopy quotient \eqref{HomotopyQuotientByBorelConstruction}
by their compatible
$\mathrm{Sp}(2)$-action of Prop. \ref{EquivarianceOfCombinedHopfTwistorFibration}:

\vspace{-4mm}
\begin{equation}
  \label{ParametrizedHopfAndTwistorFibration}
  \hspace{-7mm}
  \raisebox{3pt}{
  \xymatrix@C=14em{
    S^7 \!\sslash\! \mathrm{Sp}(2)
    \ar[r]|-{\;
        h_{\mathbb{C}} \sslash \mathrm{Sp}(2)
    \;}_-{
        \raisebox{-3pt}{
          \tiny
          \color{darkblue}
          \bf
            parametrized complex Hopf fibration
        }
        \phantom{\mathclap{\vert}}
    }
    \ar@/^2pc/[rr]|-{\;
      h_{\mathbb{H}} \sslash \mathrm{Sp}(2)
   \; }^-{
      \raisebox{3pt}{
        \tiny
        \color{darkblue}
        \bf
        parametrized quaternionic Hopf fibration
      }
      \phantom{\mathclap{\vert}}
    }
    \ar@/_1pc/[dr]
    &
    \mathbb{C}P^3 \!\sslash\! \mathrm{Sp}(2)
    \ar[d]
    \ar[r]|-{\;
      t_{\mathbb{H}} \sslash \mathrm{Sp}(2)
    \;}_-{
      \mbox{
        \tiny
        \color{darkblue}
        \bf
        parametrized twistor fibration
      }
      \phantom{\mathclap{\vert^\vert}}
    }
    &
    S^4 \!\sslash\! \mathrm{Sp}(2)
    \ar@/^1pc/[dl]
    \\
    &
    B
    \mathrm{Sp}(2)
  }
  }
\end{equation}

\end{defn}

\begin{lemma}[Coset spaces as homotopy fibers {\cite[2.7]{FSS19b}\cite[2.79]{SS20b}}]
  \label{CosetSpaceAsHomotopyFiber}
  Let $\xymatrix@C=12pt{H\; \ar@{^{(}->}[r]^-{i} & G}$ be an inclusion of
  topological groups.

\noindent  {\bf (i)} The homotopy type of the corresponding
  coset space $G/H$ is, equivalently, the homotopy fiber of the
  induced morphism \eqref{DeloopingOfMaps} on classifying spaces.

    \noindent {\bf (ii)} The homotopy quotient of the coset space by $G$ is
  homotopy equivalent to the classifying space of $H$:
   \vspace{-2mm}
  $$
    \xymatrix@R=1.2em{
      G/H
      \ar[rr]^-{ \mathrm{hofib}(B\i) }
      &&
      B H
      \ar@{}[r]|-{\simeq}
      \ar[d]^-{ B i }
      &
      \big(G/H\big) \!\sslash\! G
      \\
      && B G
    }
  $$
\end{lemma}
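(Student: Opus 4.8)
The plan is to reduce both claims to short manipulations of the Borel-construction models, working throughout with $EG$ together with its free $G$-action.

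For \textbf{(i)} I would first observe that, since $EG$ is contractible and the free $G$-action restricts to a free $H$-action, the quotient $EG/H$ is a model for $BH$, and that under this model the map $B i \colon BH \to BG$ is the evident projection $EG/H \twoheadrightarrow EG/G = BG$. This projection is precisely the $G/H$-fiber bundle associated to the universal principal $G$-bundle $EG \to BG$; being a locally trivial fiber bundle it is in particular a Serre fibration, so its strict fiber $G/H$ computes its homotopy fiber, giving $G/H \simeq \mathrm{hofib}(B i)$. The one point needing care is that this identification carries the canonical left $G$-action on $G/H$ to the monodromy $G$-action on $\mathrm{hofib}(B i)$; this is the classical statement that an inclusion $H \hookrightarrow G$ of topological groups induces a fiber sequence $G/H \to BH \to BG$, which I would simply invoke from the cited references \cite[2.7]{FSS19b}\cite[2.79]{SS20b}.

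For \textbf{(ii)} I would exhibit a direct homeomorphism. Modeling $(G/H)\sslash G$ as $(EG \times G/H)/{\sim}$ with $(e, xH) \sim (e g,\, g^{-1} x H)$ and $BH$ as $EG/H$, I would define $\psi \colon EG \times G/H \to EG/H$ by $\psi(e, xH) := e \cdot x \cdot H$. One checks that $\psi$ is independent of the coset representative $x$ (passing to $xh$ changes $exH$ into $exhH = exH$) and constant on diagonal orbits ($\psi(e g,\, g^{-1} x H) = e g g^{-1} x H = e x H$), so it descends to $\bar\psi \colon (G/H)\sslash G \to BH$; an inverse is given by sending the class of $e \in EG/H$ to the class of $(e,\, e_G H)$, using the base coset of $G/H$, which is well-defined since $(e h,\, e_G H) \sim (e,\, e_G H)$ for $h \in H$, and the two maps are readily seen to be mutually inverse. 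Alternatively, (ii) follows formally from (i): the homotopy fiber of any map $f \colon Z \to BG$, with its monodromy $G$-action, is the pullback of $EG \to BG$ along $f$, hence a free $G$-space over $Z$ whose homotopy quotient by $G$ is $Z$ itself; applying this with $f = B i$ and invoking (i) to identify the homotopy fiber with $G/H$ gives $(G/H)\sslash G \simeq BH$.

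The computations here are entirely routine, so there is no genuine obstacle; the one place that deserves attention is the equivariance bookkeeping in (i), i.e.\ checking that the coset-space $G$-action used throughout the paper agrees on the nose with the monodromy action on $\mathrm{hofib}(B i)$ — this is what allows the later identification diagrams to be read $G$-equivariantly and what lets part (ii) be applied as stated.
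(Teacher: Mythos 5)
Your argument is correct, and it is worth noting that the paper itself does not prove this lemma at all: it is imported by citation from \cite[2.7]{FSS19b} and \cite[2.79]{SS20b}, where it is established, respectively, by the classical fibration-sequence argument and in the abstract generality of principal $\infty$-bundles in an $\infty$-topos. Your proof is the concrete point-set version of the classical route: modelling $B H$ as $E G/H$, identifying $B i$ with the associated $G/H$-bundle $E G/H \to E G/G$ (a Serre fibration, so its strict fiber computes the homotopy fiber), and then either exhibiting the explicit homeomorphism $(E G \times G/H)/G \cong E G/H$ or deducing (ii) formally from (i). Both your verifications check out — the map $\psi(e,xH) = e x H$ is well defined and the proposed inverse works, and your deduction of (ii) from (i) via pulling back $E G \to B G$ along $B i$ is the same mechanism the cited references use. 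The only points you are implicitly assuming, and which deserve a word, are the standard point-set hypotheses making the models honest: one needs $E G \to E G/H$ to be a (numerable) principal $H$-bundle so that $E G/H$ really is a classifying space for $H$, and one needs local triviality of $E G/H \to B G$ to invoke the fiber-bundle-to-Serre-fibration step; both hold for the compact Lie subgroup inclusions ($\mathrm{Sp}(1)_L \times \mathrm{U}(1)_R \subset \mathrm{Sp}(2)$, etc.) to which the lemma is applied in this paper, e.g.\ using Milnor's model of $E G$. You are also right that the equivariance bookkeeping — that the fiber inclusion $G/H \to B H$ carries the canonical left $G$-action to the monodromy action — is the one statement best quoted from the references, exactly as you do; your direct homeomorphism in (ii) conveniently sidesteps needing it there. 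So your proposal buys an elementary, self-contained proof where the paper simply defers to the literature; the cited $\infty$-categorical formulation buys model-independence and the functoriality used elsewhere in the paper (e.g.\ in Prop.\ \ref{BorelEquivariantTwistorFibrationAsMapsOfClassifyingSpaces}).
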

The following Prop. \ref{BorelEquivariantTwistorFibrationAsMapsOfClassifyingSpaces} is the twistorial version of
\cite[Prop. 2.22]{FSS19b}.
\begin{prop}[Borel-equivariant twistor fibration as sequence of classifying spaces]
  \label{BorelEquivariantTwistorFibrationAsMapsOfClassifyingSpaces}
  The
  Borel-equivariant Hopf/twistor fibration
  (Def. \ref{ParametrizedHopfTwistorFibration})
  is homotopy equivalent to the following sequence of classifying
  spaces:
   \vspace{-2mm}
  $$
    \xymatrix@R=14pt{
      S^7 \!\sslash\! \mathrm{Sp}(2)
      \ar[rr]^-{ h_{\mathbb{C}} \sslash \mathrm{Sp}(2) }
      \ar@{}[d]|-{
        \rotatebox[origin=c]{-90}{
          $\simeq$
        }
      }
      &&
      \mathbb{C}P^3
      \!\sslash\! \mathrm{Sp}(2)
      \ar[rr]^-{ t_{\mathbb{H}} \sslash \mathrm{Sp}(2) }
      \ar@{}[d]|-{
        \rotatebox[origin=c]{-90}{
          $\simeq$
        }
      }
      &&
      S^4 \!\sslash\! \mathrm{Sp}(2)
      \ar@{}[d]|-{
        \rotatebox[origin=c]{-90}{
          $\simeq$
        }
      }
      \\
      B
      \big(
        \mathrm{Sp}(1)_L
      \big)
      \ar[rr]
      &&
      B
      \big(
        \mathrm{Sp}(1)_L \times \mathrm{U}(1)_R
      \big)
      \ar[rr]
      &&
      B
      \big(
        \mathrm{Sp}(1)_L \times \mathrm{Sp}(1)_R
      \big)
    }
  $$

   \vspace{-2mm}
\noindent  where the maps on the bottom are the deloopings \eqref{DeloopingOfMaps}
  of the canonical group inclusions (Example \ref{SubgroupsOfQuaternionLinearGroups}).
\end{prop}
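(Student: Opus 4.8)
The plan is to reduce the statement to the $\mathrm{Sp}(2)$-coset realization of the Hopf/twistor fibrations recorded in \eqref{HopfTwistorFibrationCosetRealization}, and then to apply functoriality of the Borel construction (Remark \ref{HomotopyQuotientAndBorelConstruction}) together with Lemma \ref{CosetSpaceAsHomotopyFiber}(ii). Concretely, I would produce a commuting ladder whose top row is the Borel-equivariant Hopf/twistor fibration \eqref{ParametrizedHopfAndTwistorFibration}, whose bottom row is the asserted sequence of classifying spaces, and whose vertical maps are weak homotopy equivalences.

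\emph{Step 1: realize the equivariant fibration as a sequence of coset projections.} Under the identifications $S^7 \simeq \mathrm{Sp}(2)/\mathrm{Sp}(1)_L$, $\mathbb{C}P^3 \simeq \mathrm{Sp}(2)/\big(\mathrm{Sp}(1)_L \times \mathrm{U}(1)_R\big)$, $S^4 \simeq \mathrm{Sp}(2)/\big(\mathrm{Sp}(1)_L \times \mathrm{Sp}(1)_R\big)$ of \eqref{HopfTwistorFibrationCosetRealization}, the $\mathrm{Sp}(2)$-actions of Prop. \ref{EquivarianceOfCombinedHopfTwistorFibration} are exactly the canonical left-translation actions on these coset spaces: this is immediate from \eqref{Sp2EquivarianceOfHopfTwistorFibrations}, where $\mathrm{Sp}(2)$ acts by left multiplication $v \mapsto A \cdot v$ on the homogeneous-coordinate representatives $v \in \mathbb{H}^2$. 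The same display shows that $h_{\mathbb{C}}$ and $t_{\mathbb{H}}$ become the canonical coset-projection maps attached to the chain of subgroup inclusions $\mathrm{Sp}(1)_L \hookrightarrow \mathrm{Sp}(1)_L \times \mathrm{U}(1)_R \hookrightarrow \mathrm{Sp}(1)_L \times \mathrm{Sp}(1)_R$ (Example \ref{SubgroupsOfQuaternionLinearGroups}), and these are manifestly $\mathrm{Sp}(2)$-equivariant, since they intertwine left translation.

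\emph{Step 2: pass to the homotopy quotient.} Applying the functor $(-)\!\sslash\!\mathrm{Sp}(2)$ (Remark \ref{HomotopyQuotientAndBorelConstruction}(i)) to the equivariant sequence of Step 1 yields the sequence of Borel spaces $\big(\mathrm{Sp}(2)/\mathrm{Sp}(1)_L\big)\!\sslash\!\mathrm{Sp}(2) \longrightarrow \big(\mathrm{Sp}(2)/(\mathrm{Sp}(1)_L\times\mathrm{U}(1)_R)\big)\!\sslash\!\mathrm{Sp}(2) \longrightarrow \big(\mathrm{Sp}(2)/(\mathrm{Sp}(1)_L\times\mathrm{Sp}(1)_R)\big)\!\sslash\!\mathrm{Sp}(2)$, and by Lemma \ref{CosetSpaceAsHomotopyFiber}(ii) each term is canonically equivalent to the classifying space of the corresponding stabilizer, namely $B\mathrm{Sp}(1)_L$, $B\big(\mathrm{Sp}(1)_L\times\mathrm{U}(1)_R\big)$, $B\big(\mathrm{Sp}(1)_L\times\mathrm{Sp}(1)_R\big)$, compatibly with the structure maps down to $B\mathrm{Sp}(2)$.

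\emph{Step 3 (the main point).} The one ingredient that is not purely formal — and which I expect to be the only real obstacle — is checking that, under the equivalences of Step 2, the two maps of the sequence are precisely the deloopings \eqref{DeloopingOfMaps} of the subgroup inclusions $\mathrm{Sp}(1)_L \hookrightarrow \mathrm{Sp}(1)_L\times\mathrm{U}(1)_R \hookrightarrow \mathrm{Sp}(1)_L\times\mathrm{Sp}(1)_R$, and not some twisted variants, and that everything is compatible with projection to $B\mathrm{Sp}(2)$. I would settle this via the explicit Borel model: for any closed subgroup $K \subset G$ there is $\big(G/K\big)\!\sslash\!G \simeq_{\mathrm{whe}} \big(G/K \times EG\big)/_{\mathrm{diag}}G \cong EG/K = BK$, where the last homeomorphism sends $\big([gK],e\big)$ to $[g^{-1}\cdot e]$, well defined and bijective because $G$ acts freely on $EG$. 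Under this model a nested inclusion $K \subset H \subset G$ turns the coset projection $G/K \to G/H$ into the evident quotient $EG/K \to EG/H$, which is exactly $B(K\hookrightarrow H)$, and the map $EG/K \to EG/G = BG$ into $B(K\hookrightarrow G)$. Applying this with $(K,H)=\big(\mathrm{Sp}(1)_L,\ \mathrm{Sp}(1)_L\times\mathrm{U}(1)_R\big)$ and then with $(K,H)=\big(\mathrm{Sp}(1)_L\times\mathrm{U}(1)_R,\ \mathrm{Sp}(1)_L\times\mathrm{Sp}(1)_R\big)$, always with $G=\mathrm{Sp}(2)$, identifies $h_{\mathbb{C}}\!\sslash\!\mathrm{Sp}(2)$ and $t_{\mathbb{H}}\!\sslash\!\mathrm{Sp}(2)$ with the claimed delooped inclusions and pins down the triangles over $B\mathrm{Sp}(2)$, completing the ladder. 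This is the twistorial analogue of \cite[Prop. 2.22]{FSS19b}, the only new input being the middle, twistor-space stage $\mathbb{C}P^3 \simeq \mathrm{Sp}(2)/(\mathrm{Sp}(1)_L\times\mathrm{U}(1)_R)$ from \eqref{HopfTwistorFibrationCosetRealization}.
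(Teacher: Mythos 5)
Your proposal is correct and follows essentially the same route as the paper: the paper's proof is precisely the observation that the statement follows from the $\mathrm{Sp}(2)$-coset realization \eqref{HopfTwistorFibrationCosetRealization} together with Lemma \ref{CosetSpaceAsHomotopyFiber}. Your Step 3 merely spells out, via the explicit Borel model $\big(G/K \times EG\big)/_{\mathrm{diag}}G \cong EG/K = BK$ and its naturality in nested subgroups, the identification of the induced maps with the delooped inclusions, which the paper leaves implicit in the cited lemma.
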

\begin{proof}
  With Lemma \ref{CosetSpaceAsHomotopyFiber}
  this follows from the $\mathrm{Sp}(2)$-coset space realization of the
  Hopf/twistor fibration in \eqref{HopfTwistorFibrationCosetRealization}.
\end{proof}

\begin{lemma}[Borel-equivariant Hopf/twistor fibrations are spherical]
  \label{ParametrizedHopfTwistorFibrationsAreSpherical}
  The Borel-equivariant
  Hopf/twistor fibrations \eqref{ParametrizedHopfAndTwistorFibration}
  are still spherical fibrations:
\vspace{-2mm}
  $$
    \xymatrix@C=4em{
      S^1
      \ar[r]
      \ar[d]
      \ar@{}[dr]|-{ \mbox{\tiny\rm(pb)} }
      &
      S^3
      \ar[d]
      \ar[r]
      \ar@{}[dr]|-{ \mbox{\tiny\rm(pb)} }
      &
      S^7 \!\sslash\! \mathrm{Sp}(2)
      \ar[d]^-{ h_{\mathbb{C}} \sslash \mathrm{Sp}(2) }
      \\
      \ast
      \ar[r]
      &
      S^2
      \ar@{}[dr]|-{ \mbox{\rm\tiny\rm(pb)} }
      \ar[r]
      \ar[d]
      &
      \mathbb{C}P^3 \!\sslash\! \mathrm{Sp}(2)
      \ar[d]^-{ t_{\mathbb{H}} \sslash \mathrm{Sp}(2) }
      \\
      &
      \ast
      \ar[r]
      &
      S^4 \!\sslash\! \mathrm{Sp}(2)
    }
  $$
\end{lemma}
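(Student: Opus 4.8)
The plan is to reduce the statement to a computation of homotopy fibers of maps of classifying spaces. By Prop.~\ref{BorelEquivariantTwistorFibrationAsMapsOfClassifyingSpaces} the Borel-equivariant Hopf/twistor fibration is equivalent to the sequence
$$
  \xymatrix@C=3em{
    B\mathrm{Sp}(1)_L
    \ar[r]^-{B i_1}
    &
    B\big(\mathrm{Sp}(1)_L \times \mathrm{U}(1)_R\big)
    \ar[r]^-{B i_2}
    &
    B\big(\mathrm{Sp}(1)_L \times \mathrm{Sp}(1)_R\big)
  }
$$
of deloopings of the canonical subgroup inclusions $i_1, i_2$ (Example~\ref{SubgroupsOfQuaternionLinearGroups}), so it suffices to understand the homotopy fibers of $B i_1$, $B i_2$ and of the composite $B i_2 \circ B i_1$.

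First I would apply Lemma~\ref{CosetSpaceAsHomotopyFiber}(i), which identifies the homotopy fiber of $B\iota$, for a subgroup inclusion $\iota\colon H \hookrightarrow G$, with the coset space $G/H$. This yields $\mathrm{hofib}(B i_1) \simeq \big(\mathrm{Sp}(1)_L \times \mathrm{U}(1)_R\big)/\mathrm{Sp}(1)_L \cong \mathrm{U}(1)$, then $\mathrm{hofib}(B i_2) \simeq \big(\mathrm{Sp}(1)_L \times \mathrm{Sp}(1)_R\big)/\big(\mathrm{Sp}(1)_L \times \mathrm{U}(1)_R\big) \cong \mathrm{Sp}(1)/\mathrm{U}(1)$, and $\mathrm{hofib}(B i_2 \circ B i_1) \simeq \big(\mathrm{Sp}(1)_L \times \mathrm{Sp}(1)_R\big)/\mathrm{Sp}(1)_L \cong \mathrm{Sp}(1)$. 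Using $\mathrm{U}(1) \cong S^1$, $\mathrm{Sp}(1) \cong S^3$ and $\mathrm{Sp}(1)/\mathrm{U}(1) \cong \mathbb{C}P^1 \cong S^2$, these homotopy fibers are $S^1$, $S^2$ and $S^3$, which also agree with the classical fibers of $h_{\mathbb{C}}$, $t_{\mathbb{H}}$ and $h_{\mathbb{H}}$ (as they must, the homotopy fiber of $X \sslash \mathrm{Sp}(2) \to B\mathrm{Sp}(2)$ over the basepoint recovering $X$). In particular $h_{\mathbb{C}} \sslash \mathrm{Sp}(2)$, $t_{\mathbb{H}} \sslash \mathrm{Sp}(2)$ and $h_{\mathbb{H}} \sslash \mathrm{Sp}(2)$ are spherical fibrations with fibers $S^1$, $S^2$, $S^3$.

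Finally I would assemble these three homotopy-fiber sequences into the diagram of the statement. Its outer squares are by construction the homotopy pullback squares defining $S^1$ and $S^2$ as the fibers of $h_{\mathbb{C}} \sslash \mathrm{Sp}(2)$ and $t_{\mathbb{H}} \sslash \mathrm{Sp}(2)$; pasting the right-hand column exhibits $S^3$ as the fiber of $h_{\mathbb{H}} \sslash \mathrm{Sp}(2) = \big(t_{\mathbb{H}} \sslash \mathrm{Sp}(2)\big) \circ \big(h_{\mathbb{C}} \sslash \mathrm{Sp}(2)\big)$, consistent with the third computation above. The only remaining point is to identify the induced map of fibers $S^3 \to S^2$ occurring in the top-right square: by the long homotopy fiber sequence of the composite $B i_2 \circ B i_1$ it is the quotient map $\mathrm{Sp}(1)_R \to \mathrm{Sp}(1)_R/\mathrm{U}(1)_R$, i.e.\ the complex Hopf fibration, whose fiber is $S^1$ --- which is exactly the content of the top-left square. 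I expect the only step requiring genuine care to be this last coherence point, i.e.\ that the individual homotopy pullback squares glue; this is handled cleanly by the pasting law for homotopy pullbacks once everything is realized on classifying spaces as in the first step, whereas the identifications of the coset spaces with low-dimensional spheres are routine. (Alternatively, one could proceed more directly: by Prop.~\ref{EquivarianceOfCombinedHopfTwistorFibration} the classical Hopf/twistor maps are $\mathrm{Sp}(2)$-equivariant fiber bundles, and the Borel construction of an equivariant fiber bundle over $B\mathrm{Sp}(2)$ is again a fiber bundle with the same fiber, by local triviality.)
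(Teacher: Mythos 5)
Your proposal is correct and follows essentially the same route as the paper: it identifies the Borel-equivariant fibrations with the maps of classifying spaces from Prop.~\ref{BorelEquivariantTwistorFibrationAsMapsOfClassifyingSpaces} and then computes the fibers via Lemma~\ref{CosetSpaceAsHomotopyFiber} as the coset spaces $\mathrm{U}(1)\simeq S^1$ and $\mathrm{Sp}(1)/\mathrm{U}(1)\simeq S^2$ (and $\mathrm{Sp}(1)\simeq S^3$ for the composite). Your closing alternative via local triviality of the Borel construction matches the paper's opening remark that the statement also ``follows on general grounds''.
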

\begin{proof}
  This follows on general grounds, as in {\cite[Remark 3.17]{FSS19b}}.
  More concretely, by Prop. \ref{BorelEquivariantTwistorFibrationAsMapsOfClassifyingSpaces}
  and using again Lemma \ref{CosetSpaceAsHomotopyFiber} we have:
   \vspace{-2mm}
  $$
    \raisebox{20pt}{
    \xymatrix{
      \mathrm{fib}
      \big(
        h_{\mathbb{C}} \sslash \mathrm{Sp}(2)
      \big)
      \ar[r]
      &
      S^7 \!\sslash\! \mathrm{Sp}(2)
      \ar[d]^-{ h_{\mathbb{C}} \sslash \mathrm{Sp}(2) }
      \\
      &
      \mathbb{C}P^3 \!\sslash\! \mathrm{Sp}(2)
    }
    }
    \phantom{AAAA}
    \simeq
    \phantom{AAAA}
    \raisebox{20pt}{
    \xymatrix{
      \mathrm{U}(1)
      \ar[r]
      &
      B
      \big(
        \mathrm{Sp}(1)_L
      \big)
      \ar[d]
      \\
      &
      B
      \big(
        \mathrm{Sp}(1)_L \times \mathrm{U}(1)_R
      \big)
    }
    }
  $$
  and
  $$
    \raisebox{20pt}{
    \xymatrix{
      \mathrm{fib}
      \big(
        t_{\mathbb{H}} \sslash \mathrm{Sp}(2)
      \big)
      \ar[r]
      &
      \mathbb{C}P^3 \!\sslash\! \mathrm{Sp}(2)
      \ar[d]^-{ t_{\mathbb{H}} \sslash \mathrm{Sp}(2) }
      \\
      &
      S^4 \!\sslash\! \mathrm{Sp}(2)
    }
    }
    \phantom{AAAA}
    \simeq
    \phantom{AAAA}
    \raisebox{20pt}{
    \xymatrix{
      \mathrm{SU}(2)/\mathrm{U}(1)
      \ar[r]
      &
      B
      \big(
        \mathrm{Sp}(1)_L
        \times
        \mathrm{U}(1)_R
      \big)
      \ar[d]
      \\
      &
      B
      \big(
        \mathrm{Sp}(1)_L \times \mathrm{Sp}(1)_R
      \big)
    }
    }
  $$
\hfill\end{proof}

\begin{theorem}[Integral cohomology of Borel-equivariant Hopf/twistor-fibration]
  \label{CohomologyOfBorelEquivariantTwistorFibration}
  $\,$ \\
  \noindent {\bf (i)}
  The integral cohomology of the space $S^4 \!\sslash\! \mathrm{Sp}(2)$
  in \eqref{ParametrizedHopfAndTwistorFibration} is free on two generators
  in degree 4
   \vspace{-2mm}
  \begin{equation}
    \label{IntegralCohomologyOfS4ModSp2}
    H^\bullet
    \big(
      S^4 \!\sslash\! \mathrm{Sp}(2)
      ;\,
      \mathbb{Z}
    \big)
    \;\simeq\;
    \mathbb{Z}
    \big[
      \widetilde \Gamma_4,\,
      \widetilde \Gamma_4^{\mathrm{vac}}
    \big]
  \end{equation}

   \vspace{-3mm}
\noindent  with the property that their evaluation on the fundamental
  class of the 4-sphere fiber
  $\xymatrix@C=12pt{S^4 \ar[r] & S^4 \sslash \mathrm{Sp}(2)}$
  is unity and zero, respectively:
  \begin{equation}
    \label{EvaluationOfUniversalIntregral4ClassesOn4SphereFiber}
    \big\langle
      \widetilde \Gamma_4
      ,\,
      S^4
    \big\rangle
    \;=1\;
    ,
    \phantom{AAAAAA}
    \big\langle
      \widetilde \Gamma^{\mathrm{vac}}_4
      ,\,
      S^4
    \big\rangle
    \;=\;
    0
    \,.
  \end{equation}

  \noindent {\bf (ii)}
  The integral cohomology of the space
  $\mathbb{C}P^3 \!\sslash\! \mathrm{Sp}(2)$
  in \eqref{ParametrizedHopfAndTwistorFibration} is free on two generators
  in degrees 4 and 2, respectively:
   \vspace{-1mm}
  \begin{equation}
    \label{IntegralCohomologyOfS4ModSp2}
    H^\bullet
    \big(
      \mathbb{C}P^3 \!\sslash\! \mathrm{Sp}(2)
      ;\,
      \mathbb{Z}
    \big)
    \;=\;
    \mathbb{Z}
    \big[
      \mathrm{c}_2^L,\,
      c_1^R
    \big]
    \,.
  \end{equation}

  \noindent {\bf (iii)}  The two are related in that pullback in integral cohomology
  along the Borel-equivariant twistor fibration
  (Def. \ref{ParametrizedHopfTwistorFibration})
  takes the difference of the former generators to
  the cup-square of the latter:
   \vspace{-2mm}
  \begin{equation}
    \label{PullbackInIntegralCohomologyAlongEquivariantTwistorFibration}
    \xymatrix@R=-2pt{
      H^\bullet
      \big(
        S^4 \!\sslash\! \mathrm{Sp}(2)
        ;
        \,
        \mathbb{Z}
      \big)
      \ar[rr]^-{ \scalebox{0.7}{$
        \big(
          t_{\mathbb{H}} \sslash \mathrm{Sp}(2)
        \big)^\ast
        $}
      }
      &&
      H^\bullet
      \big(
        \mathbb{C}P^3 \!\sslash\! \mathrm{Sp}(2)
        ;
        \,
        \mathbb{Z}
      \big)
      \\
      \widetilde \Gamma_4
      -
      \widetilde \Gamma_4^{\mathrm{vac}}
      \ar@{}[rr]|-{\;\;\;\;\;\;\; \longmapsto}
      &&
      - a
        \;
          :=
        \;
        c_1^R \cup c_1^R
      \\
      \widetilde \Gamma_4
      \ar@{}[rr]|-{\longmapsto}
      &&
      c_2^L
    }
  \end{equation}
\end{theorem}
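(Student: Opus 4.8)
\noindent\emph{Proof strategy.}
The plan is to transport the whole diagram, via Proposition~\ref{BorelEquivariantTwistorFibrationAsMapsOfClassifyingSpaces}, to the sequence of classifying spaces $B\mathrm{Sp}(1)_L \to B\big(\mathrm{Sp}(1)_L\times\mathrm{U}(1)_R\big) \to B\big(\mathrm{Sp}(1)_L\times\mathrm{Sp}(1)_R\big)$ --- the deloopings of the evident subgroup inclusions --- and to compute there with standard Chern classes, using $B(G\times H)\simeq BG\times BH$. Since $B\mathrm{Sp}(1)\simeq\mathbb{H}P^\infty$ has $H^\bullet\big(B\mathrm{Sp}(1);\mathbb{Z}\big)\cong\mathbb{Z}[c_2]$, polynomial on the degree-$4$ generator $c_2$ (the second Chern class of the defining $\mathrm{SU}(2)$-representation), and $B\mathrm{U}(1)\simeq\mathbb{C}P^\infty$ has $H^\bullet\big(B\mathrm{U}(1);\mathbb{Z}\big)\cong\mathbb{Z}[c_1]$, polynomial on the degree-$2$ generator $c_1$, both torsion-free and degreewise finitely generated, the Künneth theorem (no $\mathrm{Tor}$-terms) gives immediately $H^\bullet\big(\mathbb{C}P^3\sslash\mathrm{Sp}(2);\mathbb{Z}\big)\cong\mathbb{Z}\big[c_2^L,c_1^R\big]$ --- which is part~\textbf{(ii)} --- and $H^\bullet\big(S^4\sslash\mathrm{Sp}(2);\mathbb{Z}\big)\cong\mathbb{Z}\big[c_2^L,c_2^R\big]$.

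For part~\textbf{(i)} I would put $\widetilde\Gamma_4:=c_2^L$ and $\widetilde\Gamma_4^{\mathrm{vac}}:=c_2^L+c_2^R$; as the change of basis from $(c_2^L,c_2^R)$ to $(\widetilde\Gamma_4,\widetilde\Gamma_4^{\mathrm{vac}})$ is over $\mathbb{Z}$ with determinant $1$, these are again polynomial generators, which is~\eqref{IntegralCohomologyOfS4ModSp2}. For the fiber evaluations~\eqref{EvaluationOfUniversalIntregral4ClassesOn4SphereFiber} I would invoke the oriented $S^4$-bundle $S^4\to S^4\sslash\mathrm{Sp}(2)\to B\mathrm{Sp}(2)$ obtained from Lemma~\ref{CosetSpaceAsHomotopyFiber} and~\eqref{HopfTwistorFibrationCosetRealization} (applied to the block-diagonal inclusion $\mathrm{Spin}(4)\cong\mathrm{Sp}(1)_L\times\mathrm{Sp}(1)_R\hookrightarrow\mathrm{Sp}(2)\cong\mathrm{Spin}(5)$). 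Its Euler class lies in $H^5\big(B\mathrm{Sp}(2);\mathbb{Z}\big)=0$, so the Gysin sequence degenerates, in the relevant range, to a short exact sequence $0\to H^4\big(B\mathrm{Sp}(2);\mathbb{Z}\big)\xrightarrow{\,\pi^\ast\,}H^4\big(S^4\sslash\mathrm{Sp}(2);\mathbb{Z}\big)\to\mathbb{Z}\to 0$, the last map being fiberwise integration, that is, evaluation on the $S^4$-fiber. Because the defining $\mathrm{Sp}(2)$-representation restricts under the block-diagonal inclusion to the direct sum of the two defining $\mathrm{Sp}(1)$-representations, the Whitney formula sends the generator $c_2^{\mathrm{Sp}(2)}$ of $H^4\big(B\mathrm{Sp}(2);\mathbb{Z}\big)$ to $c_2^L+c_2^R=\widetilde\Gamma_4^{\mathrm{vac}}$; hence $\big\langle\widetilde\Gamma_4^{\mathrm{vac}},S^4\big\rangle=0$, and consequently $\widetilde\Gamma_4=c_2^L$ maps to a generator of $\mathbb{Z}$, which is $+1$ for the standard orientation of $S^4$ (equivalently: the fiber inclusion classifies $T S^4$, on which $c_2^L$ restricts to $\tfrac{1}{2}\rchi_4+\tfrac{1}{4}p_1(T S^4)$, with $\big\langle\rchi_4,S^4\big\rangle=\chi(S^4)=2$ and $p_1(T S^4)=0$).

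For part~\textbf{(iii)}, under Proposition~\ref{BorelEquivariantTwistorFibrationAsMapsOfClassifyingSpaces} the pullback $\big(t_{\mathbb{H}}\sslash\mathrm{Sp}(2)\big)^\ast$ becomes $(B\iota)^\ast$ for $\iota=\mathrm{id}_{\mathrm{Sp}(1)_L}\times j$, where $j\colon\mathrm{U}(1)\hookrightarrow\mathrm{Sp}(1)_R$ is the maximal-torus inclusion $c\mapsto\mathrm{diag}(c,\overline c)$. Under the Künneth splitting $(B\iota)^\ast$ is the identity on the $L$-tensor-factor, so $\big(t_{\mathbb{H}}\sslash\mathrm{Sp}(2)\big)^\ast\widetilde\Gamma_4=c_2^L$ --- the second row of~\eqref{PullbackInIntegralCohomologyAlongEquivariantTwistorFibration}. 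On the $R$-factor the defining $\mathrm{SU}(2)$-representation restricts along $j$ to $L_R\oplus L_R^\vee$ with $c_1(L_R)=:c_1^R$, so by the splitting principle $(Bj)^\ast c_2^R=c_2\big(L_R\oplus L_R^\vee\big)=c_1(L_R)\cup c_1\big(L_R^\vee\big)=-\,c_1^R\cup c_1^R$; therefore $\big(t_{\mathbb{H}}\sslash\mathrm{Sp}(2)\big)^\ast\big(\widetilde\Gamma_4-\widetilde\Gamma_4^{\mathrm{vac}}\big)=\big(t_{\mathbb{H}}\sslash\mathrm{Sp}(2)\big)^\ast\big(-c_2^R\big)=c_1^R\cup c_1^R=:-a$ --- the first row.

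The one genuinely delicate point is sign- and orientation-bookkeeping: fixing the orientation of the $S^4$-fiber (equivalently, the sign of $c_2^L$ against $\tfrac{1}{2}\rchi_4\pm\tfrac{1}{4}p_1$) so that $\big\langle\widetilde\Gamma_4,S^4\big\rangle=+1$ rather than $-1$, and carrying the minus sign in $(Bj)^\ast c_2=-\,c_1\cup c_1$ through consistently to the convention $-a:=c_1^R\cup c_1^R$. The remaining ingredients --- Künneth, the degeneration of the Gysin sequence of the tautological $S^4$-bundle over $B\mathrm{Sp}(2)$, the Whitney-sum formula, and the splitting principle --- are entirely standard.
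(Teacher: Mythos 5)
Your proposal is correct and follows essentially the same route as the paper: both pass through Prop.~\ref{BorelEquivariantTwistorFibrationAsMapsOfClassifyingSpaces} to the maps $B\big(\mathrm{Sp}(1)_L\times \mathrm{U}(1)_R\big)\to B\big(\mathrm{Sp}(1)_L\times \mathrm{Sp}(1)_R\big)$, identify $\widetilde\Gamma_4=c_2^L=\tfrac{1}{2}\rchi_4+\tfrac{1}{4}p_1$ and $\widetilde\Gamma_4^{\mathrm{vac}}=c_2^L+c_2^R=\tfrac{1}{2}p_1$, and obtain \eqref{PullbackInIntegralCohomologyAlongEquivariantTwistorFibration} from the delooped maximal-torus inclusion $c\mapsto\mathrm{diag}(c,\overline{c})$ via $c_2\mapsto -\,c_1\cup c_1$. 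The only difference is that where the paper appeals to \cite[3.9]{FSS19b}, \cite{BottCattaneo98} and \cite[Lemma 2.1]{CV98a} for the generator identifications and the fiber evaluations \eqref{EvaluationOfUniversalIntregral4ClassesOn4SphereFiber}, you re-derive these facts directly (K\"unneth, the Gysin sequence of the $S^4$-bundle over $B\mathrm{Sp}(2)$ with vanishing $H^5$, the Whitney formula, and restriction of $c_2^L$ to $T S^4$), consistently with the paper's normalization conventions.
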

\begin{proof}
  Consider Diagram \eqref{TheComponents} in
  \hyperlink{FigureC}{Figure I}. The top part shows the
  equivalence of the Borel-equivariant Hopf/twistor fibration
  to a sequence of classifying spaces, according to
  Prop. \ref{BorelEquivariantTwistorFibrationAsMapsOfClassifyingSpaces}.
  On the top right we are making fully explicit the factor-wise
  nature of the corresponding maps, using Example \ref{SubgroupsOfQuaternionLinearGroups}
  and Remark \ref{HomotopyQuotientAndBorelConstruction}.

  The bottom part of the diagram shows the corresponding identification
  of the cohomology generators according to \cite[3.9]{FSS19b}.
  This involves the observation that:

  \noindent {\bf (a)}
  Half the universal Euler 4-class
  on $B \mathrm{Spin}(4)$
  is (e.g., \cite[\S 2]{BottCattaneo98})
  the class of the fiberwise unit volume form on the
  universal $S^4$-fibration,
  under the identification from
  Prop. \ref{BorelEquivariantTwistorFibrationAsMapsOfClassifyingSpaces}:
  \vspace{-.3cm}
  \begin{equation}
    \label{HalfEuler4Class}
    \raisebox{10pt}{
    \xymatrix@R=1pt{
    1\cdot [\mathrm{dvol}]
     \!\!\!\!\!\!\!  \ar@{}[r]|-{\longmapsfrom}
      &
    \!\!\!\!\!\!\!   \tfrac{1}{2}\rchi_4
      \mathrlap{
        \;\;\in\;
        H^4
        \big(
          B \mathrm{Spin}(4;
          \,
          \mathbb{R}
        \big)
      }
      \\
      S^4 \ar[r]
      &
      S^4 \sslash \mathrm{Spin}(5)
      \ar@{}[r]|-{\simeq}
      &
      B \mathrm{Spin}(4)
    }
    }
  \end{equation}

  \vspace{-.2cm}

  \noindent {\bf (b)} The fractional Euler class by itself is not integral,
  but its shift by $\tfrac{1}{4}p_1$ (which is also not integral by itself)
  is (the rational image of)
  an integral generator $\widetilde \Gamma_4 = \tfrac{1}{2}\rchi_4 +
  \tfrac{1}{4}p_1$ (e.g. \cite[Lemma 2.1]{CV98a}).

  Together, (a) and (b)  yield the claim \eqref{EvaluationOfUniversalIntregral4ClassesOn4SphereFiber},
  and make manifest that the pullback in question is equivalently that
  of the negative of the left Chern class $- c_2^L$
  along the map on classifying space $\!\!\xymatrix{B \mathrm{U}(1)
   \ar[rr]^{B(c\mapsto \mathrm{diag}(c, \overline c))}
  && B \mathrm{SU}(2)}\!\!$, induced from the inclusion
  $
   \xymatrix@C=14pt{
     {\rm U}(1)
     \ar[r]^-{\simeq}
     &
     \mathrm{S}
     \big(
       \mathrm{U}(1)^2
     \big)
     \;
     \ar@{^{(}->}[r]
     &
     {\rm SU}(2)\,,
    }
  $
  hence is $c_1 \cup (- c_1)$, which yields the last claim \eqref{PullbackInIntegralCohomologyAlongEquivariantTwistorFibration}.
\hfill\end{proof}

\medskip

\subsection{Rational homotopy type}

We compute (in Theorem \ref{SullivanModelOfParametrizedHopfTwistorFibrations})
the relative Sullivan model for the Borel-equivariant
Hopf/twistor fibration from Def. \ref{ParametrizedHopfTwistorFibration},
with generators normalized such as to match their integral pre-images
from Theorem \ref{CohomologyOfBorelEquivariantTwistorFibration}.

\medskip

\noindent
{\bf Notation.} We use the following notation for dg-algebraic rational
homotopy theory
(following \cite{FSS16a}, exposition in \cite[\S 3]{FSS19a}
full details in \cite[\S 3]{FSS20c}):

\noindent {\bf (i)} For $X$ a
(nilpotent, e.g. simply connected) topological space, we
write $\mathrm{CE}\big( \mathfrak{l}X\big)$
for its {\it Sullivan model}, namely for the
{\it minimal}
real differential graded-commutative (dgc) algebra (``FDA'') which
is quasi-isomorphic to the piecewise polynomial de Rham complex
of $X$ (which for $X$ a smooth manifold is itself quasi-isomorphic to the
ordinary de Rham complex).

\noindent {\bf (ii)} Our notation
is meant to be suggestive of the fact that
this is the Chevalley-Eilenberg algebra $\mathrm{CE}(-)$ (\cite[Def. 3.25]{FSS19a},
in generalization of classical CE-algebras computing Lie algebra cohomology
\cite[Ex. 3.24]{FSS19a})
of an $L_\infty$-algebra (\cite[Rem. 3.45]{FSS19a}),
namely
of the real {\it Whitehead $L_\infty$-algebra} $\mathfrak{l}X$ of $X$
(\cite[Prop. 3.67]{FSS19a}):\footnote{
This passage \eqref{RationalDgCorrespondence}
through Whitehead $L_\infty$-algebras makes transparent
how it is that dgc-algebras know about homotopy types and how
dgc-algebra homomorphisms between these encode $L_\infty$-algebra
valued higher gauge fields \cite[\S 3.3]{FSS19a}, but
for the purpose of the present article the reader may ignore $L_\infty$-algebra theory
and regard the notation $\mathrm{CE}(\mathfrak{l}(-))$ as a primitive for
Sullivan models.
}
\vspace{-.3cm}
\begin{equation}
  \label{RationalDgCorrespondence}
  \begin{tikzcd}[column sep=2pt]
   \overset{
     \mathclap{
       \raisebox{6pt}
       {
         \tiny
         \color{darkblue}
         \bf
         \def\arraystretch{.9}
         \begin{tabular}{c}
           rational
           \\
           topological
         \end{tabular}
       }
     }
   }{
     \underset{
       \mathclap{
         \raisebox{-2pt}{
           \tiny
           \color{darkblue}
           \bf
           space
         }
       }
     }{
       X
     }
   }
    &\longleftrightarrow&
    \overset{
      \mathclap{
      \raisebox{6pt}{
        \tiny
        \color{darkblue}
        \bf
        \def\arraystretch{.9}
        \begin{tabular}{c}
          higher
          \\
          Whitehead
        \end{tabular}
      }
      }
    }{
      \underset{
        \raisebox{-2pt}{
          \tiny
          \color{darkblue}
          \bf
          $L_\infty$-algebra
        }
      }{
        \mathfrak{l}X
      }
    }
    &\longleftrightarrow&
    \overset{
      \mathclap{
      \raisebox{6pt}{
        \tiny
        \color{darkblue}
        \bf
        Sullivan
      }
      }
    }{
      \underset{
       \mathclap{
       \raisebox{-2pt}{
         \tiny
         \color{darkblue}
         \bf
         dgc-algebra
       }
       }
      }{
        \mathrm{CE}(\mathfrak{l}X)
      }
    }
    &\underset{\mathrm{qi}}{\simeq}&
    \Omega^\bullet_{\mathrm{PL}}(X)
  \end{tikzcd}
\end{equation}
\vspace{-.3cm}

\noindent
Moreover, we give these minimal dgc-algebras by their polynomial
generators $\omega_n$ in some degree $n$, quotiented out by their differential relations
$d \omega_n \,=\, P(...)$ for $P$ some polynomial in generators of lower degree.

\noindent {\bf (iii)} For example (e.g. \cite{Menichi13}\cite[Ex. 3.71-2]{FSS19a}),
the Sullivan models of
Eilenberg-MacLane spaces and of spheres are:
$$
  \mathrm{CE}
  \big(
    B^n \mathbb{Z}
  \big)
  \;\simeq\;
  \mathbb{R}[n]\big/(d = 0)
  \quad
  \underset{
    \mathclap{
      n = 2k+1
    }
  }{
    \simeq
  }
  \quad
  \mathrm{CE}(\mathfrak{l}S^{2k+1})
  \,,
  \;\;\;\;\;\;
  \mathrm{CE}
  \big(
    S^{2k}
  \big)
  \;\simeq\;
  \mathbb{R}[\omega_{2k}, \omega_{4k-1}]\bigg/
  \left(
    \!\!\!
    \begin{array}{ll}
      d\, \omega_{4k-1} & = - \omega_{2k} \wedge \omega_{2k}
      \\
      d\, \omega_{2k} & = 0
    \end{array}
    \!\!\!
  \right)
  \,.
$$

\begin{lemma}[Normalized Sullivan model of spherical fibrations {\cite[p. 202]{FHT00}; see \cite[2.5]{FSS19b}}]
  \label{SullivanModelOfSphericalFibration}
  Let $X$ be a topological space with Sullivan model
  $\mathrm{CE}\big( \mathfrak{l} X \big) \in \mathrm{dgcAlgebras}_{\mathbb{R}}$
  \eqref{RationalDgCorrespondence}.
  Then the relative minimal Sullivan model for  a
  $S^{n}$-fibration $\xymatrix@C=11pt{Y \ar[r] & X}$
  is of the following form:

  \noindent {\bf (i)} for $n = 2k+1$ odd:
  \vspace{-4mm}
  \begin{equation}
    \xymatrix@R=1.5em{
      S^{2k+1}
      \ar[r]
      &
      Y
      \ar[d]
      &\;\;\;\;&
      \mathrm{CE}
      \big(
        \mathfrak{l} X
      \big)
      \!\!
      \big[
        \omega_{2k+1}
      \big]
      \Big/\!
      \left(
        {\begin{aligned}
          d\,\omega_{2k+1} & = \alpha_{2k+2}
        \end{aligned}}
      \!\!\!\!\!\! \right)
      \ar@{<-^{)}}[d]
      \\
      &
      X
      &&
      \mathrm{CE}
      \big(
        \mathfrak{l}X
      \big)
    }
  \end{equation}

  \noindent {\bf (ii)} for $n = 2k$ even:
  \vspace{-5mm}
  \begin{equation}
    \label{SullivanModelForEvenDimensionalSphericalFibration}
    \xymatrix@R=1.5em{
      S^{2k}
      \ar[r]
      &
      Y
      \ar[d]
      &\;\;\;\;&
      \mathrm{CE}
      \big(
        \mathfrak{l}
        X
      \big)
      \!
      \left[
      \!\!\!\!
      {\begin{array}{c}
        \omega_{2k},
        \\
        \omega_{4k-1}
      \end{array}}
      \!\!\!\!
      \right]
      \Big/
      \left(
      {\begin{aligned}
        \!\!\!\!d \,\omega_{2k} & = 0
        \\
        d \,\omega_{4k-1} & = - \omega_{2k} \wedge \omega_{2k} + \alpha_{4k}
      \end{aligned}}
      \right)
      \ar@{<-^{)}}[d]
      \\
      &
      X
      &&
      \mathrm{CE}
      \big(
        \mathfrak{l}
        X
      \big)
    }
  \end{equation}

  \vspace{-2mm}
\noindent  for some closed $\alpha \in \mathrm{CE}\big(\mathfrak{l}X\big)$
  (which can be characterized further, see \cite[p. 202]{FHT00}\cite[2.5]{FSS19b}).

  \noindent {\bf (iii)} The differential in \eqref{SullivanModelForEvenDimensionalSphericalFibration}
  is \emph{normalized} so that the generators $\omega_{d}$ restrict
  to the unit volume forms on the respective sphere fibers
  (see \cite[Lemma 3.19]{FSS19b}):
  \begin{equation}
    \label{NormalizationOfGeneratorsInSullivanModelForSpheres}
    \langle \omega_{2k},\, S^{2k} \rangle
    \;=\;
    1
    \,,
    \phantom{AAAA}
    \langle \omega_{4k-1},\, S^{4k-1} \rangle
    \;=\;
    1
    \,.
  \end{equation}
\end{lemma}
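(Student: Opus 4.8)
The plan is to build the relative minimal model from the minimal Sullivan models of the fibre spheres together with the standard structure theorem for relative minimal models of fibrations, and then to fix the normalization of the new generators by restriction to the fibre. Recall (\cite[\S 12]{FHT00}) that the minimal Sullivan model of $S^{2k+1}$ is $\big(\Lambda(\omega_{2k+1}),\,0\big)$, a single generator in degree $2k+1$, while that of $S^{2k}$ is $\big(\Lambda(\omega_{2k},\omega_{4k-1}),\,d\big)$ with $d\omega_{2k}=0$ and $d\omega_{4k-1}=-\,\omega_{2k}\wedge\omega_{2k}$ --- the coefficient $-1$ in front of $\omega_{2k}\wedge\omega_{2k}$ being exactly the normalization \eqref{NormalizationOfGeneratorsInSullivanModelForSpheres} (cf. \cite[Lemma 3.19]{FSS19b}). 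For a fibration $S^n\to Y\to X$ with $X$ nilpotent of finite rational type, the structure theorem for relative minimal Sullivan models (\cite[\S 15]{FHT00}) produces one of the form $\mathrm{CE}(\mathfrak{l}X)\hookrightarrow\big(\mathrm{CE}(\mathfrak{l}X)\otimes\Lambda V,\,D\big)$ in which $V$ is a copy of the generating space of $\mathrm{CE}(\mathfrak{l}S^n)$ and $D$ reduces, modulo the augmentation ideal $\mathrm{CE}(\mathfrak{l}X)^{+}$, to the fibre differential; so $V$ contributes a single generator $\omega_{2k+1}$ in the odd case and two generators $\omega_{2k},\omega_{4k-1}$ in the even case.

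Next I would pin down $D$ by degree bookkeeping together with $D^2=0$. In the odd case there is no other new generator, so $D\omega_{2k+1}$ lies in $\mathrm{CE}(\mathfrak{l}X)^{2k+2}$ and is closed; setting $\alpha_{2k+2}:=D\omega_{2k+1}$ gives (i). In the even case, $D\omega_{2k}$ lies in $\mathrm{CE}(\mathfrak{l}X)^{2k+1}$, is closed, and is (a representative of) the rational Euler class of the fibration. This class vanishes --- immediately for the $S^2$-fibration occurring here, whose base is $\simeq B\mathrm{Spin}(4)$ with $\mathrm{CE}(\mathfrak{l}X)^{3}=0$, and in general because the classifying space of oriented $S^{2k}$-fibrations is rationally a $K(\mathbb{Q},4k)$ and so carries no rational cohomology in degree $2k+1$ --- so $D\omega_{2k}=0$. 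Then $D\omega_{4k-1}$ reduces to $-\,\omega_{2k}\wedge\omega_{2k}$ modulo $\mathrm{CE}(\mathfrak{l}X)^{+}$; equivalently, pulling back the relative minimal model of the universal $S^{2k}$-fibration over $B\mathrm{aut}^{+}(S^{2k})$ along the classifying map shows that no term linear in $\omega_{2k}$ occurs, so $D\omega_{4k-1}=-\,\omega_{2k}\wedge\omega_{2k}+\alpha_{4k}$ with $\alpha_{4k}\in\mathrm{CE}(\mathfrak{l}X)^{4k}$, and then $D^2=0$ forces $d\,\alpha_{4k}=0$, giving (ii). I expect the even case --- specifically the vanishing $d\omega_{2k}=0$, i.e. of the rational Euler class, and the absence of a residual linear term --- to be the only genuine subtlety; the odd case and the overall twisted-product shape are routine degree-counting once \cite[\S 15]{FHT00} is granted.

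For the normalization (iii), I would use that the fibre inclusion $S^n\hookrightarrow Y$ is modelled by the projection $\big(\mathrm{CE}(\mathfrak{l}X)\otimes\Lambda V,\,D\big)\twoheadrightarrow\big(\Lambda V,\,\overline D\big)=\mathrm{CE}(\mathfrak{l}S^n)$ sending each $\omega_d$ to the like-named generator of the sphere's model. Hence $\omega_{2k+1}$ (odd case) and $\omega_{2k}$ (even case) restrict to generators of $H^{2k+1}(S^{2k+1};\mathbb{Q})\cong\mathbb{Q}$ and $H^{2k}(S^{2k};\mathbb{Q})\cong\mathbb{Q}$ respectively, and rescaling each by the value of its de Rham period over the fibre normalizes that period to $1$; the remaining even-case generator $\omega_{4k-1}$ is then fixed by the now-normalized relation $d\omega_{4k-1}=-\,\omega_{2k}\wedge\omega_{2k}+\alpha_{4k}$, whose coefficient $-1$ on $\omega_{2k}\wedge\omega_{2k}$ is precisely the assertion $\langle\omega_{4k-1},S^{4k-1}\rangle=1$. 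Together these yield \eqref{NormalizationOfGeneratorsInSullivanModelForSpheres}, in agreement with \cite[Lemma 3.19]{FSS19b}.
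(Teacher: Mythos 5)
This lemma is not proved in the paper at all: it is recalled as a standard fact, with the proof delegated to the cited references (FHT00, p.~202, for the structure of relative minimal models of spherical fibrations, and FSS19b, Lemma 3.19, for the normalization). Your argument is essentially a correct reconstruction of what those references do: the structure theorem for relative minimal Sullivan models gives the new generators copying $\mathrm{CE}(\mathfrak{l}S^n)$, degree bookkeeping plus $D^2=0$ gives the odd case and forces $\alpha$ to be closed, and in the even case the vanishing of $D\omega_{2k}$ and of a term linear in $\omega_{2k}$ follows from the rational classification of oriented even-sphere fibrations by $B\mathrm{aut}_1(S^{2k})\simeq_{\mathbb{Q}}K(\mathbb{Q},4k)$ (or, as you note, can be absorbed by a change of generator); all of this tacitly uses the standing hypotheses (nilpotent/simply-connected base of finite type, orientable fibration), which hold in the paper's applications over $B\mathrm{Sp}(2)$.

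The one place where your write-up is looser than the cited source is part (iii) for $\omega_{4k-1}$. The given $S^{2k}$-fibration has no $S^{4k-1}$-fiber, so the statement $\langle\omega_{4k-1},S^{4k-1}\rangle=1$ only acquires content after pulling the model back along an associated $S^{4k-1}$-fibration (in the paper: $S^7\to S^7\!\sslash\!\mathrm{Sp}(2)$ over $\mathbb{C}P^3\!\sslash\!\mathrm{Sp}(2)$, resp.\ the Hopf-invariant-one map $h_{\mathbb{H}}$), and the identification ``coefficient $-1$ on $\omega_{2k}\wedge\omega_{2k}$ with $\omega_{2k}$ unit-normalized $\Leftrightarrow$ unit period of $\omega_{4k-1}$'' is exactly the rational Hopf-invariant computation carried out in FSS19b, Lemma 3.19 --- it is not automatic from the sign convention alone, as your last sentence suggests. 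Since the scale of $\omega_{4k-1}$ is what the coefficient fixes, you should either cite that computation or include the fiber-integration/Hopf-invariant argument; with that supplied, your proof is complete and matches the intended one.
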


The action of triality group automorphisms
on $\mathrm{Spin}(8)$ famously relates
three distinct conjugacy classes of subgroup inclusions of
$\mathrm{Spin}(7)$. Less widely appreciated is another triple
of subgroups of $\mathrm{Spin}(8)$ that is permuted under triality:

\begin{lemma}[Triality on central product groups in $\mathrm{Spin}(8)$ {\cite[2.17]{FSS19b}}]
  \label{TrialityOnCentralProductGroups}
  Under the triality automorphisms of $\mathrm{Spin}(8)$
  the \emph{canonical} subgroup inclusions of the central product groups
  $\mathrm{Spin}(5)\cdot \mathrm{Spin}(3)$
  and $\mathrm{Sp}(2)\cdot \mathrm{Sp}(1)$ (Def. \ref{SpecialQuaternionLinearGroup})
  turn into each other:
  \vspace{-2mm}
  \begin{equation}
    \label{CentralProductTriality}
    \xymatrix@C=4em@R=1.5em{
      \mathrm{Sp}(2)\cdot \mathrm{Sp}(1)
      \ar[d]_-{\simeq}
      \; \ar@{^{(}->}[r]^-{ i_{\mathrm{Sp}} }
      &
      \mathrm{Spin}(8)
      \ar[d]^-{ \mathrm{tri} }_-\simeq
      \\
      \mathrm{Spin}(5)\cdot \mathrm{Spin}(3)
      \; \ar@{^{(}->}[r]^-{ i_{\mathrm{Spin}} }
      &
      \mathrm{Spin}(8)
    }
  \end{equation}
\end{lemma}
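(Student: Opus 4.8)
The plan is to make the two embeddings explicit, to compute how the three inequivalent $8$-dimensional representations $\mathbf{8}_v,\mathbf{8}_s,\mathbf{8}_c$ of $\mathrm{Spin}(8)$ restrict along each, and to observe that a triality automorphism interchanges the two restriction patterns, forcing $i_{\mathrm{Sp}}$ and $i_{\mathrm{Spin}}$ into a single triality orbit of conjugacy classes. Concretely: $i_{\mathrm{Spin}}$ is the descent to central products of the block-diagonal lift $\mathrm{Spin}(5)\times\mathrm{Spin}(3)\to\mathrm{Spin}(8)$ covering $\mathrm{SO}(5)\times\mathrm{SO}(3)\hookrightarrow\mathrm{SO}(8)$ along $\mathbb{R}^8\cong\mathbb{R}^5\oplus\mathbb{R}^3$ (a short Clifford-algebra check shows the kernel of this lift is exactly the diagonal central $\mathbb{Z}/2$, so $i_{\mathrm{Spin}}$ is injective on $\mathrm{Spin}(5)\cdot\mathrm{Spin}(3)$); and $i_{\mathrm{Sp}}$ is the lift through $\mathrm{Spin}(8)\to\mathrm{SO}(8)$ of the faithful orientation-preserving action of $\mathrm{Sp}(2)\cdot\mathrm{Sp}(1)$ on $\mathbb{R}^8\cong\mathbb{H}^2$ by left $\mathrm{Sp}(2)$- and right $\mathrm{Sp}(1)$-multiplication as in \eqref{ActionOnS7} (this lift exists because the generator of $\pi_1\big(\mathrm{Sp}(2)\cdot\mathrm{Sp}(1)\big)\cong\mathbb{Z}/2$ becomes null-homotopic in $\mathrm{SO}(8)$, and it is unique since the group is connected). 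One also fixes the abstract isomorphism $\mathrm{Sp}(2)\cdot\mathrm{Sp}(1)\cong\mathrm{Spin}(5)\cdot\mathrm{Spin}(3)$ obtained by transporting the exceptional isomorphisms $\mathrm{Sp}(1)\cong\mathrm{Spin}(3)$, $\mathrm{Sp}(2)\cong\mathrm{Spin}(5)$ of Prop. \ref{EquivariantIdentificationOfS4WithHP1} through the central-product construction.

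Next I would compute the branchings. Along $i_{\mathrm{Sp}}$ the vector representation is by construction $\mathbb{H}^2$, which is \emph{irreducible} --- already $\mathrm{Sp}(2)$ acts transitively on its unit sphere. Along $i_{\mathrm{Spin}}$, the vector representation is the reducible $\mathbb{R}^5\oplus\mathbb{R}^3$, whereas the standard branching rule for spin covers of block embeddings shows that each of the two half-spinor representations of $\mathrm{Spin}(8)$ restricts to $\Delta_5\otimes_{\mathbb{C}}\Delta_3$; since the spinor representation $\Delta_5$ of $\mathrm{Spin}(5)$ is the quaternionic defining representation of $\mathrm{Sp}(2)$ and $\Delta_3$ of $\mathrm{Spin}(3)$ is that of $\mathrm{Sp}(1)$, this tensor product carries a real structure whose real form is precisely $\mathbb{H}^2$ with the left-$\mathrm{Sp}(2)$, right-$\mathrm{Sp}(1)$ action. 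Hence a half-spinor of $\mathrm{Spin}(8)$ restricted along $i_{\mathrm{Spin}}$ is, \emph{as a module}, exactly the representation of which $i_{\mathrm{Sp}}$ is the spin lift. Choosing any triality automorphism $\mathrm{tri}_0$ of $\mathrm{Spin}(8)$ that interchanges $\mathbf{8}_v$ with this half-spinor, the composites of $\mathrm{tri}_0\circ i_{\mathrm{Spin}}$ and of $i_{\mathrm{Sp}}$ (precomposed with the inverse abstract isomorphism) with the vector representation then agree up to conjugacy in $\mathrm{SO}(8)$; lifting this conjugacy to $\mathrm{Spin}(8)$ and noting that two $\mathrm{Spin}(8)$-valued lifts of conjugate $\mathrm{SO}(8)$-homomorphisms agree once they agree on the center (which they do, both being trivial there), the two homomorphisms into $\mathrm{Spin}(8)$ differ by an inner automorphism, and absorbing it into $\mathrm{tri}_0$ produces the triality automorphism $\mathrm{tri}$ making the square \eqref{CentralProductTriality} commute.

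The step I expect to be the real obstacle is the matching of \emph{module structures} rather than of mere numerical invariants: one must verify that the half-spinor of $\mathrm{Spin}(8)$ restricts along the canonical $\mathrm{Spin}(5)\cdot\mathrm{Spin}(3)$ to $\mathbb{H}^2$ with the correct left-$\mathrm{Sp}(2)$ and right-$\mathrm{Sp}(1)$ actions --- not merely to \emph{some} real $8$-dimensional irreducible --- and, relatedly, one must exclude the spurious sign in the spin lift when upgrading ``$\mathrm{SO}(8)$-conjugacy of the vector composites'' to ``$\mathrm{Spin}(8)$-conjugacy of the homomorphisms''. Both are handled cleanly in the octonionic model of $\mathrm{Spin}(8)$ acting on $\mathbb{O}\oplus\mathbb{O}\oplus\mathbb{O}\cong\mathbf{8}_v\oplus\mathbf{8}_s\oplus\mathbf{8}_c$ with triality permuting the summands, where both $\mathrm{Sp}(2)\cdot\mathrm{Sp}(1)$ and $\mathrm{Spin}(5)\cdot\mathrm{Spin}(3)$ can be written out explicitly; this is exactly the computation of \cite[2.17]{FSS19b}, which one may alternatively invoke directly.
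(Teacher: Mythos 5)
Your argument is correct in outline, but be aware that the paper itself offers no proof of this lemma: it is recalled verbatim from \cite[2.17]{FSS19b}, so the relevant comparison is with the argument given there, which proceeds by explicit identification of the two subgroups inside an octonionic/triality model of $\mathrm{Spin}(8)$. Your route is the standard representation-theoretic one instead: restrict $\mathbf{8}_v$ and the two half-spinor representations along the two canonical embeddings, observe that a triality element exchanges the two branching patterns (reducible $\mathbb{R}^5\oplus\mathbb{R}^3$ versus the irreducible real form of $\Delta_5\otimes_{\mathbb{C}}\Delta_3\cong\mathbb{H}^2$ with left-$\mathrm{Sp}(2)$/right-$\mathrm{Sp}(1)$ action), and then upgrade the resulting isomorphism of $8$-dimensional orthogonal representations to conjugacy of homomorphisms into $\mathrm{Spin}(8)$ by covering-space rigidity. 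This buys a proof that needs the octonionic model only at the single point you yourself flag, namely matching the module structure of the restricted half-spinor with $\mathbb{H}^2$, which is a short direct check (or the computation of \cite{FSS19b}).

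Three details should be tightened to make the sketch airtight. First, isomorphism of the vector-representation composites a priori gives conjugacy only in $\mathrm{O}(8)$; if the conjugating element has determinant $-1$, absorb it by composing with the outer automorphism exchanging $\mathbf{8}_s$ and $\mathbf{8}_c$, which is harmless because both half-spinors restrict identically along $i_{\mathrm{Spin}}$. Second, your closing step is cleaner stated as: two homomorphisms out of a connected group covering the same homomorphism to $\mathrm{SO}(8)$ differ by a continuous map into $\ker\big(\mathrm{Spin}(8)\to\mathrm{SO}(8)\big)\simeq\mathbb{Z}/2$ preserving the identity, hence coincide; ``agreeing on the center'' is not the operative condition. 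Third, the unproved claim that the generator of $\pi_1\big(\mathrm{Sp}(2)\cdot\mathrm{Sp}(1)\big)$ maps to zero in $\pi_1(\mathrm{SO}(8))$ follows from a one-line count: the loop $t\mapsto\big(\mathrm{diag}(e^{\pi i t},e^{\pi i t}),\,e^{\pi i t}\big)$, $t\in[0,1]$, acts on $\mathbb{H}^2$ by a full $2\pi$-rotation in exactly two orthogonal $2$-planes (fixing the complementary $4$ dimensions), and an even number of such rotations is null-homotopic in $\mathrm{SO}(8)$. With these repairs your proposal is a complete and self-contained proof of the recalled lemma.
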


\begin{lemma}[Sullivan model for $B \mathrm{Spin}(5)$ and $B \mathrm{Sp}(2)$ {\cite[2.19]{FSS19b}}]
 \label{SullivanModelForBSp2}
 Minimal sullivan models for $B \mathrm{Spin}(5)$ and $B \mathrm{Sp}(2)$,
 and their relation under triality \eqref{SullivanModelForSp2} are given,
 up to isomorphism, as follows:
 \vspace{-1mm}
 \begin{equation}
   \label{SullivanModelForSp2}
   \hspace{-2cm}
   \raisebox{26pt}{
   \xymatrix@C=3em@R=1.3em{
     B
     \mathrm{Spin}(8)
     \ar[r]^-{ B i_{\mathrm{Sp}} }
     \ar[dd]_-{ B \mathrm{tri} }^-{ \simeq }
     &
     B \mathrm{Sp}(2)
     \ar[dd]^-{\simeq}
     &&
     \mathbb{R}
     \!
     \big[
       \!\!\!
       \begin{array}{c}
         \tfrac{1}{2}p_1,
         \rchi_8
       \end{array}
       \!\!\!
     \big]
     \mathrlap{
       \; = \;
       \mathrm{CE}
       \big(
         \mathfrak{l}
         B \mathrm{Sp}(2)
       \big)
     }
     \ar@{<-}[dd]^-{
       \scalebox{.7}{$
       \begin{array}{cc}
         \tfrac{1}{2}p_1
         &
         - \rchi_8
         \mathrlap{
           + \big( \tfrac{1}{4}p_1 \big)^2
         }
         \\
         \mapsup
         &
         \mapsup
         \\
         \tfrac{1}{2}p_1
         &
         \tfrac{1}{4}p_2
       \end{array}
       $}
     }
     \\
     \\
     B \mathrm{Spin}(8)
     \ar[r]^-{ B i_{\mathrm{Spin}} }
     &
     B \mathrm{Spin}(5)
     &&
     \mathbb{R}
     \!
     \big[
       \!\!\!
       \begin{array}{c}
         \tfrac{1}{2}p_1,
         p_2
       \end{array}
       \!\!\!
     \big]
     \mathrlap{
       \;
       =:
       \mathrm{CE}
       \big(
         \mathfrak{l}
         B \mathrm{Spin}(5)
       \big)
     }
   }
   }
 \end{equation}
where (by \cite[8.1, 8.2]{CV98b}\cite[3.7]{FSS19b}):
\vspace{0mm}
\begin{equation}
  \label{chi8RelationOnBSp2}
  \tfrac{1}{2}
  \rchi_8
  \;=\;
  \tfrac{1}{4}
  \Big(
    p_2
    -
    \big(
      \tfrac{1}{2}p_1
    \big)^2
  \Big)
  \;\;
  \in
  \;
  H^4
  (
    B \mathrm{Sp}(2);
    \,
    \mathbb{R}
  )\;.
\end{equation}
\end{lemma}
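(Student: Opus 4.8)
The plan is to reduce the statement to three ingredients: (a) the standard computation of the rational cohomology rings of $B\mathrm{Sp}(2)$ and $B\mathrm{Spin}(5)$, (b) formality of classifying spaces of compact Lie groups, and (c) a Chern-root bookkeeping governed by the exceptional isomorphism and by triality.

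For (a): one has $H^\bullet(B\mathrm{Sp}(n);\mathbb{Z})\cong\mathbb{Z}[\pi_1,\dots,\pi_n]$ with $\mathrm{deg}(\pi_i)=4i$, so $H^\bullet(B\mathrm{Sp}(2);\mathbb{Q})$ is polynomial on one degree-$4$ and one degree-$8$ generator; and since $B\mathrm{Spin}(5)\to B\mathrm{SO}(5)$ is a rational homotopy equivalence (its fibre $B\mathbb{Z}/2$ being rationally trivial) while $H^\bullet(B\mathrm{SO}(5);\mathbb{Q})=\mathbb{Q}[p_1,p_2]$, the ring $H^\bullet(B\mathrm{Spin}(5);\mathbb{Q})$ is likewise polynomial on generators in degrees $4$ and $8$. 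For (b): a simply-connected space whose rational cohomology is a free graded-commutative algebra on even-degree generators is formal, and such a cohomology algebra is already a minimal Sullivan algebra; hence by uniqueness of minimal models (\cite{FHT00}), $\mathrm{CE}\big(\mathfrak{l}B\mathrm{Sp}(2)\big)$ and $\mathrm{CE}\big(\mathfrak{l}B\mathrm{Spin}(5)\big)$ are the displayed free algebras with zero differential, once one renames the generators by the classes $\tfrac12 p_1$ (the integral spin characteristic class) and $\rchi_8$, resp.\ $p_2$, normalized so as to match the integral pre-images of \hyperlink{FigureC}{Figure I} and Theorem \ref{CohomologyOfBorelEquivariantTwistorFibration}.

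It then remains to pin down the identification of generators induced by the exceptional isomorphism $\gamma\colon\mathrm{Spin}(5)\xrightarrow{\ \simeq\ }\mathrm{Sp}(2)$ of Proposition \ref{EquivariantIdentificationOfS4WithHP1}(ii). I would carry this out inside $B\mathrm{Spin}(8)$: by Lemma \ref{TrialityOnCentralProductGroups} the triality automorphism intertwines the two canonical inclusions of the central product groups, so that restricting further along $\mathrm{Sp}(2)\hookrightarrow\mathrm{Sp}(2)\cdot\mathrm{Sp}(1)$ and $\mathrm{Spin}(5)\hookrightarrow\mathrm{Spin}(5)\cdot\mathrm{Spin}(3)$ gives a square relating the pullbacks $H^\bullet(B\mathrm{Spin}(8);\mathbb{R})\to H^\bullet(B\mathrm{Sp}(2);\mathbb{R})$ and $H^\bullet(B\mathrm{Spin}(8);\mathbb{R})\to H^\bullet(B\mathrm{Spin}(5);\mathbb{R})$ through $B\mathrm{tri}$. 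Under these restrictions the defining vector representation $\mathbb{R}^8$ of $\mathrm{Spin}(8)$ becomes, on one side, the quaternionic representation $\mathbb{H}^2$ of $\mathrm{Sp}(2)$ and, on the other side (after triality), the $8$-real-dimensional spinor representation of $\mathrm{Spin}(5)$, which is $\gamma^\ast$ of the former; since the first Pontrjagin class agrees for all three $8$-dimensional representations of $\mathrm{Spin}(8)$, the class $\tfrac12 p_1$ is carried to $\tfrac12 p_1$, and the only content left is the degree-$8$ part, where $\rchi_8$ and $p_2$ live in a $2$-dimensional space spanned also by $(\tfrac12 p_1)^2$.

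The one genuinely computational step is to identify these degree-$8$ classes via Chern roots, and this is also where one must be most careful about normalization and sign conventions. Writing $\pm a,\pm b$ for the Chern roots of the standard complex rank-$4$ representation $W$ of $\mathrm{Sp}(2)$, so that $\tfrac12 p_1=a^2+b^2$ and $\rchi_8=a^2 b^2$, the quaternionic structure yields $(\mathbb{H}^2)\otimes_{\mathbb{R}}\mathbb{C}\cong W\oplus W$, whence $c(W\oplus W)=c(W)^2$ gives $p_2=c_2(W)^2+2\,c_4(W)=(\tfrac12 p_1)^2+2\,\rchi_8$; eliminating, this is exactly the relation \eqref{chi8RelationOnBSp2}, $\tfrac12\rchi_8=\tfrac14\big(p_2-(\tfrac12 p_1)^2\big)$ (matching \cite[8.1, 8.2]{CV98b}). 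On the $\mathrm{Spin}(5)$ side, the spinor weights $\tfrac12(\pm\phi_1\pm\phi_2)$ are matched with $\{\pm a,\pm b\}$ under $\gamma$, so its vector representation $\mathbb{R}^5$ has Chern roots $0,\pm(a+b),\pm(a-b)$; computing its $p_2$ from these and re-expressing it through $\rchi_8$ and $\tfrac14 p_1$ of $\mathbb{H}^2$ produces the change-of-generators formula displayed in \eqref{SullivanModelForSp2}. The main obstacle is purely bookkeeping — matching the conventions for $p_1$ on $B\mathrm{Spin}$ versus on $B\mathrm{Sp}$, and the normalization of the Euler class — so that the rational coefficients come out precisely as stated; the structural content (two polynomial generators in degrees $4$ and $8$, zero differential, and $\tfrac12 p_1$ triality-invariant) is forced by (a) and (b).
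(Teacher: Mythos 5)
Note first that the paper does not actually prove this Lemma: it is imported by citation from \cite[2.19]{FSS19b} together with the {\v C}adek--Van{\v z}ura relations \cite[8.1, 8.2]{CV98b}, so your proposal is a reconstruction rather than an alternative to an in-text argument. Your skeleton is the right one and is essentially how the cited sources proceed: $H^\bullet(B\mathrm{Sp}(2);\mathbb{Q})$ and $H^\bullet(B\mathrm{Spin}(5);\mathbb{Q})$ are polynomial on generators in degrees $4$ and $8$, such classifying spaces are formal, and a free evenly-graded algebra with zero differential is its own minimal Sullivan model. Moreover your Chern-root verification of \eqref{chi8RelationOnBSp2} is correct: with $W$ the defining $\mathbb{C}^4$-representation of $\mathrm{Sp}(2)$, $c(W\oplus W)=c(W)^2$ gives $p_2 = (\tfrac12 p_1)^2 + 2\rchi_8$ for the rank-$8$ bundle $\mathbb{H}^2$, which is exactly \eqref{chi8RelationOnBSp2}, and this is the part of the Lemma the later arguments actually invoke (Theorem \ref{SullivanModelOfParametrizedHopfTwistorFibrations}, Corollary \ref{CohomologicalRelationInTwistorialCohomotopy}).

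The gap is in the final step, which you assert rather than carry out (``computing its $p_2$ from these and re-expressing it \dots produces the change-of-generators formula displayed''). If one performs that computation with your own weight matching $\phi_1=a+b$, $\phi_2=a-b$, the vector representation $\mathbb{R}^5$ has roots $0,\pm(a+b),\pm(a-b)$, hence pulls back to
\begin{equation*}
  p_1 \;\longmapsto\; 2(a^2+b^2),
  \qquad
  p_2 \;\longmapsto\; (a^2-b^2)^2 \;=\; (a^2+b^2)^2 - 4\,a^2b^2 ,
\end{equation*}
so that, with $\tfrac12 p_1 = a^2+b^2$ and $\rchi_8 = a^2b^2$ (the normalization your check of \eqref{chi8RelationOnBSp2} has already committed you to), one finds $\tfrac12 p_2 \mapsto 2\big((\tfrac14 p_1)^2 - \rchi_8\big)$ --- \emph{twice} the label $-\rchi_8 + (\tfrac14 p_1)^2$ displayed on the arrow in \eqref{SullivanModelForSp2}. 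So your bookkeeping, as set up, does not reproduce the displayed assignment. The degree-$4$ statement is fine (triality-invariance of $p_1$ does give $\tfrac12 p_1\mapsto\tfrac12 p_1$, since $p_1(\mathbb{H}^2)=p_1(\mathbb{R}^5)=2(a^2+b^2)$), but in degree $8$ you must first pin down precisely which associated bundles the symbols $p_1, p_2, \rchi_8$ refer to on each side of the equivalence --- on $B\mathrm{Sp}(2)$ the classes of the rank-$8$ bundle $\mathbb{H}^2$ restricted along $i_{\mathrm{Sp}}$, on $B\mathrm{Spin}(5)$ those of $\mathbb{R}^5\oplus\mathbb{R}^3$ restricted along $i_{\mathrm{Spin}}$; this is exactly where the triality twist enters --- and then match the normalization of the degree-$8$ generators used in \cite[2.19]{FSS19b}, or else conclude that the displayed coefficient encodes a convention you have not identified. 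That normalization chase is precisely the nontrivial content of the Lemma, and it is the one step your sketch leaves unproved.
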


\begin{lemma}[Normalized Sullivan model for plain Hopf/twistor fibrations]
  \label{SullivanModelsForHopfTwistorFibrations}
  The minimal relative Sullivan model for the plain Hopf/twistor fibrations \eqref{HopfTwistorFactorization}
  is as follows:
  \vspace{-2mm}
\begin{equation}
  \label{SullivanModelForHopfTwistorFibrations}
  \hspace{-1cm}
  \raisebox{135pt}{
  \xymatrix@C=30pt{
    &
    S^7
    \ar[dd]_-{
      h_{\mathbb{C}}
      \phantom{\mathclap{\vert}}
    }
    &
    \mathbb{R}
    \!\!
    \left[
      \!\!\!
      {\begin{array}{c}
        h_1,
        \\
        f_2,
        \\
        h_3,
        \\
        \omega_4,
        \\
        \omega_7
      \end{array}}
      \!\!\!
    \right]
    \!\Big/\!\!
    \left(
    {\begin{aligned}
      d\,h_1 & = f_2
      \\
      d\,f_2 & = 0
      \\
      d\,h_3 & = \omega_4
      -
      f_2 \wedge f_2
      \\
      d\,\omega_4 & = 0
      \\
      d\,\omega_7
      & = - \omega_4 \wedge \omega_4
    \end{aligned}}
    \right)
    \ar[rr]^-{ \simeq }
    &\;\;\;\;\;&
    \mathbb{R}
    \left[
            \omega_7
          \right]
    \!\Big/\!
    \left(
    {\begin{aligned}
      d\,\omega_7
      & = 0
    \end{aligned}}
   \!\!\!\!\!\!\right)
    \\
    \\
    &
    \mathbb{C}P^3
    \ar[dd]_-{
      t_{\mathbb{H}}
      \phantom{\mathclap{\vert}}
    }
    &
    \mathbb{R}
    \!\!
    \left[
      \!\!\!
      {\begin{array}{c}
        f_2,
        \\
        h_3,
        \\
        \omega_4,
        \\
        \omega_7
      \end{array}}
      \!\!\!
    \right]
    \!\Big/\!
    \left(
    {\begin{aligned}
      d\,f_2 & = 0
      \\
      d\,h_3 & = \omega_4
      -
      f_2 \wedge f_2
      \\
      d\,\omega_4 & = 0
      \\
      d\,\omega_7
      & = - \omega_4 \wedge \omega_4
    \end{aligned}}
    \right)
    \ar[uurr]|-{
      \mathclap{
      \rotatebox[origin=c]{-55}{
      \scalebox{.7}{$
        \;\,
        \arraycolsep=2.2pt
        \begin{array}{cccc}
          0 & 0 & 0 & \omega_{\,7}
          \\
          \mapsup & \mapsup & \mapsup & \mapsup
          \\
          f_2 & h_3 & \omega_{\, 4} & \omega_{\, 7}
        \end{array}
      $}
      }
      }
    }
    \ar@{^{(}->}[uu]|-{
      \scalebox{.7}{$
        \arraycolsep=3pt
        \begin{array}{cccc}
          f_2 & h_3 & \omega_{\, 4} & \omega_{\, 7}
          \\
          \mapsup & \mapsup & \mapsup & \mapsup
          \\
          f_2 & h_3 & \omega_{\, 4} & \omega_{\, 7}
        \end{array}
      $}
    }
    \ar[rr]^-{ \simeq }
    &&
    \mathbb{R}
    \!\!
    \left[
      \!\!\!
      {\begin{array}{c}
        f_2,
        \\
        \omega_7
      \end{array}}
      \!\!\!
    \right]
    \!\Big/\!
    \left(
    {\begin{aligned}
      d\,f_2 & = 0
      \\
      d\,\omega_7
      & = - (f_2)^4
    \end{aligned}}
    \right)
    \\
    \\
    &
    S^4
    &
    \;\;\;\;
        \mathbb{R}
    \!\!
    \left[
      \!\!\!
      {\begin{array}{c}
        \omega_4,
        \\
        \omega_7
      \end{array}}
      \!\!\!
    \right]
    \!\Big/\!
    \left(
    {\begin{aligned}
      d\,\omega_4
      & = 0
      \\
      d\,\omega_7
      & = - \omega_4 \wedge \omega_4
    \end{aligned}}
    \right)
    \ar@{^{(}->}[uu]|-{
      \scalebox{.7}{$
        \arraycolsep=3pt
        \begin{array}{cc}
          \omega_{\, 4} & \omega_{\, 7}
          \\
          \mapsup & \mapsup
          \\
          \omega_{\, 4} & \omega_{\, 7}
        \end{array}
      $}
    }
    \ar[uurr]|-{
      \mathclap{
      \rotatebox[origin=c]{-60}{
      \scalebox{.7}{$
        \!\!\!\!\!\!\!\!\!\!\!\!\!\!\!\!\!\!
        \arraycolsep=2.2pt
        \begin{array}{cccc}
          -f_2 \!\wedge\! f_2 & \omega_{\, 7}
          \\
          \mapsup & \mapsup
          \\
          \omega_{\, 4} & \omega_{\, 7}
        \end{array}
      $}
      }
      }
    }
  }
  }
\end{equation}

\vspace{-1mm}
\noindent where the generators $\omega_4, \omega_7, f_2, h_3$ are
all normalized
according to \eqref{NormalizationOfGeneratorsInSullivanModelForSpheres},
in particular:
\vspace{-1mm}
\begin{equation}
  \label{NormalizationOfRationalGeneratorsOnPlainTwistorFibration}
  \big\langle
    \omega_4, S^4
  \big\rangle
  \;=\;
  1
  \phantom{AAA}
  \big\langle
    f_2, S^2
  \big\rangle
  \;=\;
  1
  .
\end{equation}
\end{lemma}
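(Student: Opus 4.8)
The plan is to build the two relative models one fibration at a time, working upward from $S^4$, using Lemma~\ref{SullivanModelOfSphericalFibration} to fix the \emph{shape} of each relative minimal model up to a single closed class in the base cdga, and then the integral cohomology data already available (in particular Theorem~\ref{CohomologyOfBorelEquivariantTwistorFibration}) to pin down that class and its normalisation. The bottom of the tower costs nothing: $\mathrm{CE}\big(\mathfrak{l}S^4\big) = \mathbb{R}[\omega_4,\omega_7]\big/\big(d\,\omega_4 = 0,\; d\,\omega_7 = -\omega_4\wedge\omega_4\big)$ is the minimal model of $S^4$ (Lemma~\ref{SullivanModelOfSphericalFibration}(ii) over the point), with $\omega_4,\omega_7$ already normalised as in \eqref{NormalizationOfGeneratorsInSullivanModelForSpheres}, so $\langle\omega_4,S^4\rangle = \langle\omega_7,S^7\rangle = 1$.

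\noindent\emph{Twistor fibration $t_{\mathbb{H}}\colon\mathbb{C}P^3\to S^4$.}
This is an $S^2$-fibration, so Lemma~\ref{SullivanModelOfSphericalFibration}(ii) with $n = 2$ produces a relative minimal model obtained from $\mathrm{CE}\big(\mathfrak{l}S^4\big)$ by adjoining $f_2$ in degree $2$ and $h_3$ in degree $3$ with $d\,f_2 = 0$ and $d\,h_3 = -f_2\wedge f_2 + \alpha_4$, where $\alpha_4\in\mathrm{CE}\big(\mathfrak{l}S^4\big)$ is closed of degree $4$, hence $\alpha_4 = \lambda\,\omega_4$ for some $\lambda\in\mathbb{R}$. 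Comparing the cohomology of this cdga with $H^\bullet\big(\mathbb{C}P^3;\mathbb{R}\big)\simeq\mathbb{R}[f_2]/(f_2^{4})$ already forces $\lambda\neq 0$ (for $\lambda = 0$ the fourth cohomology would have rank $2$). To obtain $\lambda = 1$ — and the sign making $\langle f_2,S^2\rangle = 1$ in \eqref{NormalizationOfRationalGeneratorsOnPlainTwistorFibration} — I use the integral identity $t_{\mathbb{H}}^\ast[\omega_4] = [f_2]\cup[f_2]$, which follows either from Theorem~\ref{CohomologyOfBorelEquivariantTwistorFibration}(iii) restricted to the $S^4$-fibre (where $\widetilde\Gamma_4\mapsto\omega_4$ and $\widetilde\Gamma_4^{\mathrm{vac}}\mapsto 0$ by \eqref{EvaluationOfUniversalIntregral4ClassesOn4SphereFiber}, while $c_1^R\mapsto f_2$), or directly from the $\mathbb{C}P^1$-bundle (projective-bundle) structure of the twistor fibration together with $H^2(S^4;\mathbb{Z}) = 0$. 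Since in the model $t_{\mathbb{H}}^\ast[\omega_4] = [\omega_4]$, while $d\,h_3 = -f_2^2 + \lambda\omega_4$ gives $[f_2]\cup[f_2] = \lambda\,[\omega_4]$, this forces $\lambda = 1$ and yields the middle row of \eqref{SullivanModelForHopfTwistorFibrations}. The quasi-isomorphism to the minimal model $\mathbb{R}[f_2,\omega_7]\big/\big(d\,f_2 = 0,\; d\,\omega_7 = -(f_2)^{4}\big)$ of $\mathbb{C}P^3$ is the obvious one, $h_3\mapsto 0$, $\omega_4\mapsto f_2\wedge f_2$, $f_2\mapsto f_2$, $\omega_7\mapsto\omega_7$; it is a chain map, and a quasi-isomorphism because it divides out the acyclic pair $\big(h_3,\; \omega_4 - f_2\wedge f_2\big)$.

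\noindent\emph{Complex Hopf fibration $h_{\mathbb{C}}\colon S^7\to\mathbb{C}P^3$, and the composite.}
This is an $S^1$-fibration, so Lemma~\ref{SullivanModelOfSphericalFibration}(i) with $n = 1$ — taken over the non-minimal but tower-compatible model $\mathbb{R}[f_2,h_3,\omega_4,\omega_7]/(\cdots)$ just obtained — adjoins $h_1$ in degree $1$ with $d\,h_1 = \alpha_2$, and by degree reasons $\alpha_2 = c\,f_2$ for some $c\in\mathbb{R}$. The total space is the $7$-sphere, so $H^1 = H^2 = 0$, forcing $c\neq 0$; and since $h_{\mathbb{C}}$ is the unit-circle bundle of the tautological complex line over $\mathbb{C}P^3$, whose Euler class is the integral generator $f_2$ of $H^2\big(\mathbb{C}P^3;\mathbb{Z}\big)$ — restricting to the unit generator on a linearly embedded $\mathbb{C}P^1$, hence on the $S^2$-fibre of $t_{\mathbb{H}}$ — the normalisation gives $c = 1$, i.e. $d\,h_1 = f_2$. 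This is the top row of \eqref{SullivanModelForHopfTwistorFibrations}; the quasi-isomorphism to $\mathbb{R}[\omega_7]\big/\big(d\,\omega_7 = 0\big)$ follows by dividing out the acyclic pairs $(h_1,f_2)$ and $\big(h_3,\; \omega_4 - f_2\wedge f_2\big)$, which leaves $\omega_7$ with $d\,\omega_7 = -\omega_4^2 \equiv 0$. Composing the two relative models gives the relative model for the quaternionic Hopf fibration $h_{\mathbb{H}} = t_{\mathbb{H}}\circ h_{\mathbb{C}}$, presented through its own Hopf sub-tower $S^1\to S^3\to S^2$.

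\noindent\emph{Assembly, and the main obstacle.}
It remains to check that the vertical inclusions and the diagonal maps in \eqref{SullivanModelForHopfTwistorFibrations} are the displayed substitutions of generators and that every square commutes — immediate from the formulas above — and to record that the remaining generators $h_1, h_3$ carry the fibre normalisations dictated by Lemma~\ref{SullivanModelOfSphericalFibration}(iii). The one genuinely delicate step is the normalisation inside the twistor fibration: showing that the coefficient of $\omega_4$ in $d\,h_3$ is \emph{exactly} $1$ (and not merely nonzero) and that all the minus signs are mutually consistent. This is precisely where the integral ring structure of $\mathbb{C}P^3$ (equivalently, Theorem~\ref{CohomologyOfBorelEquivariantTwistorFibration}) is needed; everything else reduces to bookkeeping with acyclic pairs.
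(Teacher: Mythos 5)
Your argument is correct and follows essentially the paper's strategy: iterate Lemma \ref{SullivanModelOfSphericalFibration} up the tower, note that by degree reasons the undetermined closed class must be a real multiple of $\omega_4$ (resp.\ of $f_2$), and then fix that multiple. Where you diverge is in \emph{how} the coefficient is fixed: the paper defers the determination of the unit factor in $d h_3$ to the proof of Theorem \ref{SullivanModelOfParametrizedHopfTwistorFibrations} (via \eqref{RelationsOfSullivanGeneratorsOnCP3ModSp2} and the Borel-equivariant integral cohomology of Theorem \ref{CohomologyOfBorelEquivariantTwistorFibration}), whereas you pin it down already for the plain fibration, using $t_{\mathbb{H}}^\ast[\omega_4]=[f_2]\cup[f_2]$ obtained from the projective-bundle structure of $\mathbb{C}P^3\to S^4$ (or by restricting Theorem \ref{CohomologyOfBorelEquivariantTwistorFibration} to fibres); this makes the lemma self-contained at the price of importing the classical ring structure of $H^\bullet(\mathbb{C}P^3;\mathbb{Z})$. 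Similarly you fix $d h_1=f_2$ via the Euler class of the tautological line bundle, where the paper simply rescales $h_1$ (legitimate, since no fibre normalisation of $h_1$ is claimed). One genuine slip, though not load-bearing: your claim that $\lambda=0$ would force $H^4$ of the model to have rank $2$ is false --- with $\lambda=0$ the relative model is that of the product $S^2\times S^4$, whose Betti numbers coincide with those of $\mathbb{C}P^3$, so $H^4$ still has rank $1$; what rules out $\lambda=0$ is only the ring structure ($[f_2]^2\neq 0$), and since your subsequent argument shows $\lambda=1$ outright, the erroneous rank count should simply be deleted.
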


\noindent Note that on the right in \eqref{SullivanModelForHopfTwistorFibrations}
we are showing the minimal Sullivan models of
$S^7$ and of $\mathbb{C}P^3$ by themselves
(which is classical, e.g. \cite[p. 142, 203]{FHT00}\cite[1.2, 5.3]{Menichi13}),
while on the left we are showing their Sullivan models as
fiber spaces, i.e., the relative minimal Sullivan models.

\vspace{-1mm}

\begin{proof}
  {\bf (i)} It is classical that the Sullivan model for $S^4$ is as shown
  (it is also a special case of Lemma \ref{SullivanModelOfSphericalFibration}).

\noindent  {\bf (ii)} Since $\mathbb{C}P^3 \to S^4$ is an $S^2$-fibration
  \eqref{HopfTwistorFactorization},
  Lemma \ref{SullivanModelOfSphericalFibration}
  implies from {(i)} that
  $\mathbb{C}P^3$ fibered over $S^4$ is modeled by

   \vspace{-2mm}
  $$
    \mathrm{CE}
    \big(
      \mathfrak{l}
      \mathbb{C}P^3
    \big)_{S^4}
    \;=\;
    \mathbb{R}\big[
      \omega_4, \omega_7,
      {\color{darkblue}f_2},
      {\color{darkblue}h_3}
    \big]
    \Big/
    \left(
    \begin{aligned}
      d\,\omega_4 & = 0
      \\
      d\, \omega_7 & = - \omega_4 \wedge \omega_4
      \\
      d\,f_2 & = 0
      \\
      d\,h_3
      & =
      f_2 \wedge f_2
      +
      {\color{darkblue} \alpha_4 }
    \end{aligned}
    \right)
  $$

   \vspace{-1mm}
\noindent
  for \emph{some} closed element
  $
    \alpha_4 \;\in\;
    \mathrm{CE}
    \big(
      \mathfrak{l}
      S^4
    \big)
    \,.
  $
  But in the present case, due to {(i)}, there is a unique
  such element, up to a real factor, namely $\omega_4$.
  Below in \eqref{RelationsOfSullivanGeneratorsOnCP3ModSp2}
  we find this factor to be unity.
  This yields the middle part of \eqref{SullivanModelForHopfTwistorFibrations}.

\noindent  {\bf (iii)} Since $S^7 \to \mathbb{C}P^3$ is an $S^1$-fibration
  \eqref{HopfTwistorFactorization},
  Lemma \ref{SullivanModelOfSphericalFibration} implies, via (ii),
  that $S^7$ fibered over $\mathbb{C}P^3$ is modeled by
  \vspace{-1mm}
  $$
    \mathrm{CE}
    \big(
      \mathfrak{l}
      S^7
    \big)_{\mathbb{C}P^3}
    \;=\;
    \mathbb{R}\big[
      \omega_4, \omega_7, f_2, h_3,
      {\color{darkblue} h_1}
    \big]
    \Big/
    \left(
    \begin{aligned}
      d\,\omega_4 & = 0
      \\
      d\, \omega_7 & = - \omega_4 \wedge \omega_4
      \\
      d\,f_2 & = 0
      \\
      d\,h_3
      & =
      f_2 \wedge f_2
      +
      \omega_4
      \\
      d\,f_1 & =
      {\color{darkblue} \alpha_2}
    \end{aligned}
    \right)
  $$

  \vspace{-2mm}
\noindent  for \emph{some} closed degree-2 element
  $\alpha_2 \in \mathrm{CE}\big( \mathfrak{l} \mathbb{C}P^3 \big)_{S^2}$.
  But in the present case, due to (ii), there is a unique such element,
  up to a real factor, namely $f_2$.
  Thus, by suitably rescaling $f_1$,
  we obtain $\alpha_2 = f_2$ and the claim follows.
\hfill \end{proof}

\begin{theorem}[Normalized Sullivan model of
Borel-equivariant Hopf/twistor fibrations]
  \label{SullivanModelOfParametrizedHopfTwistorFibrations}
The iterative relative Sullivan models for the
parametrized Hopf/twistor fibrations \eqref{ParametrizedHopfAndTwistorFibration}
are as follows
(here $\tfrac{1}{2}p_1, \rchi_8 \in \mathrm{CE}\big( \mathfrak{l} B \mathrm{Sp}(2) \big)$,
  via Lemma \ref{SullivanModelForBSp2}):
  \vspace{-.7cm}
\begin{equation}
  \label{SullivanModelForParametrizedHopfTwistorFibrations}
  \hspace{-5mm}
  \raisebox{135pt}{
  \xymatrix@C=17pt{
    &
    S^7 \!\sslash\! \mathrm{Sp}(2)
    \ar[dd]|>>>>>>>>>>>>{
      h_{\mathbb{C}}
      \sslash
      \mathrm{Sp}(2)
    }
    \ar[dddl]
    &&
    \mathrm{CE}
    \big(
      \mathfrak{l}
      B \mathrm{Sp}(2)
    \big)
    \!\!\!
    \left[
      \!\!\!
      {\begin{array}{c}
        h_1,
        \\
        f_2,
        \\
        h_3,
        \\
        \omega_4,
        \\
        \omega_7
      \end{array}}
      \!\!\!
    \right]
    \!\Big/\!
    \left(
    {\begin{aligned}
      d\,h_1 & = f_2
      \\
      d\,f_2 & = 0
      \\
      d\,h_3 & = \omega_4
      -
      \tfrac{1}{4}p_1
      -
      f_2 \wedge f_2
      \\
      d\,\omega_4 & = 0
      \\
      d\,\omega_7
      & = - \omega_4 \wedge \omega_4
      +
      \big(\tfrac{1}{4}p_1\big)^2
      \\
      & \phantom{=}\,
      - \rchi_8
    \end{aligned}}
    \right)
    \\
    \\
    &
    \mathbb{C}P^3 \!\sslash\! \mathrm{Sp}(2)
    \ar[dd]|-{
      t_{\mathbb{H}}
      \sslash
      \mathrm{Sp}(2)
    }
    \ar[dl]
    &&
    {\phantom{AA}}
    \mathrm{CE}
    \big(
      \mathfrak{l}
      B \mathrm{Sp}(2)
    \big)
    \!\!\!
    \left[
      \!\!\!
      {\begin{array}{c}
        f_2,
        \\
        h_3,
        \\
        \omega_4,
        \\
        \omega_7
      \end{array}}
      \!\!\!
    \right]
    \!\Big/\!
    \left(
    {\begin{aligned}
      d\,f_2 & = 0
      \\
      d\,h_3
      & = \omega_4
      -
      \tfrac{1}{4}p_1
      -
      f_2 \wedge f_2
      \\
      d\,\omega_4 & = 0
      \\
      d\,\omega_7
      & = - \omega_4 \wedge \omega_4
      +
      \big(\tfrac{1}{4}p_1\big)^2
      \\
      & \phantom{=}\;\,
      -
      \rchi_8
    \end{aligned}}
    \right)
    \ar@{^{(}->}[uu]|-{
      \scalebox{.7}{$
        \arraycolsep=3pt
        \begin{array}{cccc}
          \omega_{\, 4} & \omega_{\, 7} & f_2 & h_3
          \\
          \mapsup & \mapsup & \mapsup & \mapsup
          \\
          \omega_{\, 4} & \omega_{\, 7} & f_2 & h_3
        \end{array}
      $}
    }
    \\
    B \mathrm{Sp}(2)
    &&
    \mathrm{CE}
    \big(
      \mathfrak{l}
      B \mathrm{Sp}(2)
    \big)
    \ar@{^{(}->}[ur]
    \ar@{^{(}->}[dr]
    \ar@{^{(}->}[uuur]
    \\
    &
    S^4 \!\sslash\! \mathrm{Sp}(2)
    \ar[ul]
    &
    \;\;\;\;
    &
    \mathrm{CE}
    \big(
      \mathfrak{l}
      B \mathrm{Sp}(2)
    \big)
    \!\!\!
    \left[
      \!\!\!
      {\begin{array}{c}
        \omega_4,
        \\
        \omega_7
      \end{array}}
      \!\!\!
    \right]
    \!\Big/\!
    \left(
    {\begin{aligned}
      d\,\omega_4
      & = 0
      \\
      d\,\omega_7
      & = - \omega_4 \wedge \omega_4
      +
      \big(\tfrac{1}{4}p_1\big)^2
      \\
      & \phantom{=}\;\,
      -
      \rchi_8
    \end{aligned}}
    \right)
    \,,
    \ar@{^{(}->}[uu]|-{
      \scalebox{.7}{$
        \arraycolsep=3pt
        \begin{array}{cc}
          \omega_{\, 4} & \omega_{\, 7}
          \\
          \mapsup & \mapsup
          \\
          \omega_{\, 4} & \omega_{\, 7}
        \end{array}
      $}
    }
  }
  }
\end{equation}

\vspace{-2mm}
\noindent
where   the generators $f_2$ and $\omega_4$
represent the classes $c_1^R$ and $\tfrac{1}{2}\rchi_4$ in  \eqref{TheComponents}, respectively:
\vspace{-1mm}
\begin{equation}
  \label{IdentiftingRationlGeneratorsOnSP3ModSp2}
  [\omega_4] \;=\; \tfrac{1}{2}\rchi_4
  \;\in\;
  H^4
  \big(
    \mathbb{C}P^3 \!\sslash\! \mathrm{Sp}(2)
    ;
    \mathbb{R}
  \big)
  \,,
  \phantom{AAAAA}
  [f_2] \;=\; c_1^R
  \;\in\;
  H^2
  \big(
    \mathbb{C}P^3 \!\sslash\! \mathrm{Sp}(2)
;       \mathbb{R}
  \big).
\end{equation}
\end{theorem}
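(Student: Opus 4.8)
The plan is to obtain the three relative Sullivan models by iterated application of Lemma~\ref{SullivanModelOfSphericalFibration} along the spherical fibrations of Lemma~\ref{ParametrizedHopfTwistorFibrationsAreSpherical}, all parametrized over $B\mathrm{Sp}(2)$ with $\mathrm{CE}\big(\mathfrak{l}B\mathrm{Sp}(2)\big)=\mathbb{R}\big[\tfrac{1}{2}p_1,\rchi_8\big]$ (Lemma~\ref{SullivanModelForBSp2}). At each stage Lemma~\ref{SullivanModelOfSphericalFibration} produces a normalized relative model whose differential differs from that of the corresponding plain fibration (Lemma~\ref{SullivanModelsForHopfTwistorFibrations}) only by a closed element of the base model --- i.e.\ by a $B\mathrm{Sp}(2)$-polynomial correction of the appropriate degree; the displayed dgca morphisms among the three models are then the evident inclusions and projections produced by this iterated construction, and the whole remaining content is to pin down those finitely many correction terms, which I do by passing to cohomology and comparing with Theorem~\ref{CohomologyOfBorelEquivariantTwistorFibration}.

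\emph{Bottom row.} For the $S^4$-fibration $S^4\!\sslash\!\mathrm{Sp}(2)\to B\mathrm{Sp}(2)$, Lemma~\ref{SullivanModelOfSphericalFibration}(ii) gives normalized generators $\omega_4,\omega_7$ with $d\omega_4=0$ and $d\omega_7=-\omega_4\wedge\omega_4+\alpha_8$, where $\alpha_8$ is closed of degree $8$, hence $\alpha_8=a\big(\tfrac{1}{4}p_1\big)^2+b\,\rchi_8$. Since $S^4\!\sslash\!\mathrm{Sp}(2)\simeq B\mathrm{Spin}(4)$ (Prop.~\ref{BorelEquivariantTwistorFibrationAsMapsOfClassifyingSpaces} and the exceptional isomorphism $\mathrm{Spin}(4)\cong\mathrm{Sp}(1)\times\mathrm{Sp}(1)$), its real cohomology is a free polynomial algebra on two degree-$4$ classes; this, together with the Euler-class identity \eqref{chi8RelationOnBSp2} and the identification $[\omega_4]=\tfrac{1}{2}\rchi_4$, forces $\alpha_8=\big(\tfrac{1}{4}p_1\big)^2-\rchi_8$ and $\widetilde\Gamma_4=[\omega_4]+\tfrac{1}{4}p_1$. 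This bottom stage is precisely the previously established \cite[3.19]{FSS19b}.

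\emph{Middle row.} For the $S^2$-fibration $\mathbb{C}P^3\!\sslash\!\mathrm{Sp}(2)\to S^4\!\sslash\!\mathrm{Sp}(2)$, Lemma~\ref{SullivanModelOfSphericalFibration}(ii) adjoins normalized generators $f_2$ (closed, $\langle f_2,S^2\rangle=1$) and $h_3$ with $d h_3=-f_2\wedge f_2+\alpha_4$, where $\alpha_4$ is a closed degree-$4$ element of the bottom-row model, hence $\alpha_4=c\,\omega_4+d\cdot\tfrac{1}{4}p_1$ for scalars $c,d$. Restricting along the $S^4$-fibre inclusion (equivalently, setting the $B\mathrm{Sp}(2)$-generators to zero) must recover the plain twistor model of Lemma~\ref{SullivanModelsForHopfTwistorFibrations}, whose $\omega_4$-coefficient equals $1$ --- equivalently $[\omega_4]=[f_2\wedge f_2]$ in $H^4(\mathbb{C}P^3;\mathbb{R})$, which holds because the twistor projection $\mathbb{C}P^3\to S^4$ restricts to a degree-one map on a generic linear $\mathbb{C}P^2\subset\mathbb{C}P^3$ --- so $c=1$. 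To fix $d$, pass to the cohomology of the total space: $[d h_3]=0$ gives $[\omega_4]=[f_2\wedge f_2]-d\cdot\tfrac{1}{4}p_1$, whence the pullback along $t_{\mathbb{H}}\!\sslash\!\mathrm{Sp}(2)$ of $\widetilde\Gamma_4-\widetilde\Gamma_4^{\mathrm{vac}}=[\omega_4]-\tfrac{1}{4}p_1$ equals $[f_2\wedge f_2]-(d+1)\cdot\tfrac{1}{4}p_1$; by Theorem~\ref{CohomologyOfBorelEquivariantTwistorFibration}(iii) this pullback is $c_1^R\cup c_1^R$, and since $[f_2]=c_1^R$ is the degree-$2$ generator while $\tfrac{1}{4}p_1$ is a nonzero class of $H^4\big(\mathbb{C}P^3\!\sslash\!\mathrm{Sp}(2);\mathbb{R}\big)$ (Theorem~\ref{CohomologyOfBorelEquivariantTwistorFibration}(ii)) restricting to zero on the fibre, we are forced to $d=-1$. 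Hence $d h_3=\omega_4-\tfrac{1}{4}p_1-f_2\wedge f_2$, and the identifications \eqref{IdentiftingRationlGeneratorsOnSP3ModSp2} follow.

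\emph{Top row, and the obstacle.} For the $S^1$-fibration $S^7\!\sslash\!\mathrm{Sp}(2)\to\mathbb{C}P^3\!\sslash\!\mathrm{Sp}(2)$, Lemma~\ref{SullivanModelOfSphericalFibration}(i) adjoins a single generator $h_1$ with $d h_1=\alpha_2$ for a closed degree-$2$ element $\alpha_2$ of the middle-row model; but that space is spanned by $f_2$, so rescaling $h_1$ yields $d h_1=f_2$, exactly as in Lemma~\ref{SullivanModelsForHopfTwistorFibrations}(iii). This completes all three models, and composing the two left-hand maps recovers the parametrized quaternionic Hopf model of \cite[3.19]{FSS19b} as a consistency check. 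The main obstacle is the bottom-row stage --- determining $\alpha_8$, i.e.\ the $d\omega_7$-relation --- but this is precisely the already-established result of \cite{FSS19b}; the genuinely new step is the $S^2$-stage, where the correction $d\cdot\tfrac{1}{4}p_1$ is pinned down by feeding the integral pullback relation \eqref{PullbackInIntegralCohomologyAlongEquivariantTwistorFibration} into the rational model.
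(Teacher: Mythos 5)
Your proposal is correct and is essentially the paper's own proof: you iterate Lemma \ref{SullivanModelOfSphericalFibration} along the spherical fibrations of Lemma \ref{ParametrizedHopfTwistorFibrationsAreSpherical}, take the quaternionic bottom stage from \cite[3.19]{FSS19b}, and pin down the $\tfrac{1}{4}p_1$-coefficient in $d h_3$ by comparing with the integral pullback relation \eqref{PullbackInIntegralCohomologyAlongEquivariantTwistorFibration} of Theorem \ref{CohomologyOfBorelEquivariantTwistorFibration}, exactly as in \eqref{RelationsOfSullivanGeneratorsOnCP3ModSp2}. The only cosmetic differences are that your bottom-up iteration (with $\omega_4,\omega_7$ kept as base generators) makes the absence of a spurious $f_2\wedge f_2\wedge p_1$ term in $d\omega_7$ automatic, where the paper rules it out in a separate step using the same $S^2$-fibration lemma, and that your degree-one argument on a generic linear $\mathbb{C}P^2\subset\mathbb{C}P^3$ supplies an independent (orientation conventions aside) justification of the unit $\omega_4$-coefficient, which the paper instead extracts from the same cohomological comparison.
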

\begin{proof}
  That the composite vertical morphism,
  upon discarding the generators $h_1, f_2$ and $h_3$,
  is the minimal relative Sullivan model for $h_{\mathbb{H}}\sslash \mathrm{Sp}(2)$
  with the identification
  $[\omega_4] = \tfrac{1}{2}\rchi_4$ \eqref{IdentiftingRationlGeneratorsOnSP3ModSp2}
  is the result of \cite[3.19]{FSS19b}.

  Its factorization through
  $\mathbb{C}P^3\!\sslash\!\mathrm{Sp}(2)$
  must have
  minimal Sullivan model given by
  adjoining generators $f_2$ and $f_3$,
  by Lemma \ref{ParametrizedHopfTwistorFibrationsAreSpherical}
  with Lemma \ref{SullivanModelOfSphericalFibration}.
  The fiberwise normalization
  \eqref{NormalizationOfRationalGeneratorsOnPlainTwistorFibration}
  implies the identification $[f_2] = c_1^R$ in
  \eqref{IdentiftingRationlGeneratorsOnSP3ModSp2}, using that $c_1^R$
  pulled back along
  $S^2 = \mathrm{SU}(2)/U(1)
 \to B \mathrm{U}(1)_R$ is the unit volume generator.

  For the factorization of $h_{\mathbb{H}}\sslash \mathrm{Sp}(2)$
  through $\mathbb{C}P^3 \!\sslash\! \mathrm{Sp}(2)$
  to reproduce on fibers over $B \mathrm{Sp}(2)$
  (hence upon discarding the generators
  $\tfrac{1}{2}p_1$ and $\rchi_8$)
  the minimal Sullivan model for the
  plain Hopf/twistor fibrations from Lemma \ref{SullivanModelsForHopfTwistorFibrations}
  \emph{at least} those monomials in $f_2$ shown in
  \eqref{SullivanModelForParametrizedHopfTwistorFibrations} have to appear.
  We just have to observe that the relative coefficients in
  the differential relations for $h_3$ are as shown.
  But under the identification \eqref{IdentiftingRationlGeneratorsOnSP3ModSp2}
  we have the second logical equivalence shown here:
    \vspace{-2mm}
  \begin{equation}
    \label{RelationsOfSullivanGeneratorsOnCP3ModSp2}
    d h_3
    =
    \omega_4 - \tfrac{1}{4}p_1 - f_2 \wedge f_2
    \phantom{A}
    \Leftrightarrow
    \phantom{A}
    \big[
      \omega_4 - \tfrac{1}{4}p_1
    \big]
    =
    f_2 \wedge f_2
    \phantom{A}
    \Leftrightarrow
    \phantom{A}
    \widetilde \Gamma_4 - \widetilde \Gamma_4^{\mathrm{vac}}
    =
    c_1^R \cup c_1^R
    \phantom{A}
    \in
    H^4
    \big(
      \mathbb{C}P^3 \!\sslash\! \mathrm{Sp}(2);
      \,
      \mathbb{R}
    \big)
    \,.
  \end{equation}

  \vspace{-1mm}

  \noindent
  That the relation on the right of \eqref{RelationsOfSullivanGeneratorsOnCP3ModSp2}
  does hold
  follows immediately from
  \eqref{PullbackInIntegralCohomologyAlongEquivariantTwistorFibration}
  in Theorem \ref{CohomologyOfBorelEquivariantTwistorFibration}.

  Hence to conclude it suffices now to show that no further monomials in $f_2$
  appear on the right of \eqref{SullivanModelForParametrizedHopfTwistorFibrations}:

  First, any further monomial in $f_2$ that does appear must contain as a factor a
  \emph{basic} generator, namely a generator from
  $\mathrm{CE}(\mathfrak{l} B \mathrm{Sp}(2))$, to guarantee that it
  vanishes on fibers (where we already have the right terms).
  Since, by Lemma \ref{SullivanModelForBSp2}, the generators of
  $\mathrm{CE}\big( \mathfrak{l} B \mathrm{Sp}(2) \big)$ are in degrees 4 and 8,
  the only further term that could possibly appear, by degree reasons, is the
  blue term in the following expression:
  \vspace{-2mm}
  \begin{equation}
    \label{SpuriousTermInParametrizedHopfTwistorSullivan}
    d\,\omega_7
    \;=\;
    - \omega_4 \wedge \omega_4
    + \big(\tfrac{1}{2}p_1\big)^2
    - \rchi_8
    +
    {\color{darkblue}
      a
      \cdot
      f_2 \wedge f_2 \wedge p_1
    }
  \end{equation}

  \vspace{-2mm}
\noindent  for some coefficient $a \in \mathbb{R}$.
  But we also know that $t_{\mathbb{H}}\sslash \mathrm{Sp}(2)$
  is an $S^2$-fibration (by Lemma \ref{ParametrizedHopfTwistorFibrationsAreSpherical}),
  so that Lemma \ref{SullivanModelOfSphericalFibration}
  rules out the appearance of the blue term in \eqref{SpuriousTermInParametrizedHopfTwistorSullivan}
  (i.e., implies $a = 0$).
\hfill \end{proof}

\section{Charge quantization in Twistorial Cohomotopy}
\label{GreenSchwarzMechanismFromJTwistedCohomotopy}

After recalling (in \cref{NonAbelianDoldChernCharacter})
general non-abelian cohomology and
highlighting the non-abelian Chern-Dold character map,
we introduce (in \cref{TwistorialCohomotopyTheory}) the
twisted non-abelian cohomology to be called \emph{Twistorial Cohomotopy}
and use the results from \cref{BorelEquivariantHopfTwistorFibration}
to show (Corollary \ref{CohomologicalRelationInTwistorialCohomotopy})
that charge quantization in Twistorial Cohomotopy
implies the heteoric shifted flux quantization condition \eqref{HeteroticFluxQuantizationFromTwistorialCohomotopy}.

\medskip

\subsection{Non-abelian character map}
\label{NonAbelianDoldChernCharacter}

We recall the Chern-Dold character \eqref{ChernDoldCharacter} in generalized
cohomology and then introduce its generalization,
to a non-abelian character map \eqref{ChernDoldCharacterNonabelian}
on non-abelian cohomology.
The full technical detail is laid out in \cite{FSS20c}.

\medskip

\noindent {\bf From generalized to non-abelian cohomology.}
It is well-known, though perhaps under-appreciated,
that cohomology theory is all about
homotopy groups of mapping spaces into a
given ``coefficient space'' or ``classifying space''.
We recall this briefly for ``bare'' cohomology theories,
with the domain spaces $X$ assumed to a sufficiently nice topological space;
but the statement remains true for structured cohomology theories
such as differential and/or equivariant cohomology,
when interpreted internal to suitable higher toposes, see \cite[p. 6]{SS20b}.

\medskip

For ordinary (e.g., singular) cohomology with coefficients in an
\emph{abelian} discrete group $A$, these classifying spaces are the
Eilenberg-MacLane spaces $K(A,n)$ (e.g. \cite[\S 7.1, Cor. 12.1.20]{AGP02}):

\vspace{-.3cm}
\begin{equation}
  \label{OrdinaryCohomologyByEilenbergMacLane}
  \overset{
    \mathclap{
    \raisebox{3pt}{
      \tiny
      \color{darkblue}
      \bf
      \begin{tabular}{c}
        ordinary
        \\
        cohomology
      \end{tabular}
      }
    }
  }{
    H^n\big(X; A)
  }
  \;\simeq\;
  \overset{
    \mathclap{
    \raisebox{3pt}{
      \tiny
      \color{darkblue}
      \bf
      \begin{tabular}{c}
        homotopy classes of maps to
        \\
        Eilenberg-MacLane space
      \end{tabular}
    }
    }
  }{
  \pi_{0}
  \,
  \mathrm{Maps}
  \big(
    X,
    \,
    K(A,n)
  \big)
  }
  \,.
\end{equation}
These happen to be based loop spaces of each other,
$K(A,n) \;\simeq_{{}_{\mathrm{whe}}}\; \Omega K(A,n+1)$
(e.g. \cite[7.1.1]{AGP02}), so that
each of them is an \emph{infinite loop space} (e.g. \cite{Adams78}).

\medskip
 More generally, consider a \emph{generalized cohomology theory}\footnote{
  The term is widely used but somewhat unfortunate, since various
  \emph{further} generalizations of Whitehead's
  generalization of ordinary cohomology theories
  are relevant, such as twisted-, sheaf-, differential-,
  equivariant- and nonabelian-cohomology theories,
  as well as all their joint combinations.
}
$E^\bullet$ in the sense of \cite{Whitehead62} (see \cite{Adams74}\cite{Adams78}),
such as K-theory,
elliptic cohomology, $\mathrm{tmf}$, stable Cobordism, stable Cohomotopy, etc.
These are classified by such sequences of (pointed) spaces
which are successively
equipped with weak homotopy equivalences
exhibiting them as based loop spaces of
each other, called a \emph{spectrum} of spaces:
\vspace{-2mm}
\begin{equation}
  \label{ASpectrum}
  \{E_n\}_{n \in \mathbb{N}}
  \,,
  \;\;\;
  \mbox{s.t.}\,
  \phantom{A}
  E_n \;\simeq\; \Omega E_{n+1}\;,
\end{equation}
in that
\vspace{-1mm}
\begin{equation}
  \label{GeneralizedCohomologyByMappingIntoSpectrum}
  \overset{
    \mathclap{
    \raisebox{3pt}{
      \tiny
      \color{darkblue}
      \bf
      \begin{tabular}{c}
        generalized
        \\
        cohomology
      \end{tabular}
    }
    }
  }{
  E^n
  (
    X
  )
  }
  \;\simeq\;
  \overset{
    \mathclap{
    \raisebox{3pt}{
      \tiny
      \color{darkblue}
      \bf
      \begin{tabular}{c}
        homotopy classes of maps to
        \\
        infinite loop space
      \end{tabular}
    }
    }
  }{
  \pi_0
  \,
  \mathrm{Maps}
  (
    X,\,
    E_{n}
  )
  }\,.
\end{equation}

\vspace{-1mm}
\noindent
This is the \emph{Brown representability theorem}, see e.g.
\cite[\S III.6]{Adams74}\cite[\S 3.4]{Kochman96}.
But the right hand side of \eqref{GeneralizedCohomologyByMappingIntoSpectrum}
makes sense for $E_n$ \emph{any} space,
not necessarily part of a spectrum \eqref{ASpectrum},
and not necessarily even being a loop space.
It is not the notion of cohomology itself, but
rather only some extra \emph{properties} enjoyed by these abelian
cohomology groups (such as existence of connecting homomorphisms)
which is what is reflected in the infinite loop space structure \eqref{ASpectrum}.

\medskip
Indeed, for $G$ a well-behaved topological group, \emph{not} necessarily
abelian
(such as $G = \mathrm{U}(1),\mathrm{SU}(n), \mathrm{Sp}(n),\cdots$)
the fundamental theorem of $G$-principal bundles
(\cite[\S 19.3]{Steenrod51}, review in \cite[\S 5]{Addington07})
says that
degree-1 \emph{non-abelian cohomology} with coefficients in $G$
is represented by the classifying space $B G$ of $G$:
\vspace{-2mm}
\begin{equation}
  \label{NonabelianCohomologyInFirstDegree}
  \overset{
    \raisebox{3pt}{
      \tiny
      \color{darkblue}
      \bf
      \begin{tabular}{c}
        non-abelian cohomology with
        \\
        coefficients in topological group
      \end{tabular}
    }
  }{
    H^1(X;\, G)
  }
  \;\simeq\;
  \overset{
    \mathclap{
    \raisebox{3pt}{
      \tiny
      \color{darkblue}
      \bf
      \begin{tabular}{c}
        homotopy classes of maps
        \\
        to classifying space of group
      \end{tabular}
    }
    }
  }{
  \pi_0
  \,
  \mathrm{Maps}
  (
    X,\,
    B G
  )
  }\;.
\end{equation}

\vspace{-2mm}
\noindent If $G = A$ is abelian and discrete, then
$B A \simeq K(A,1)$ and \eqref{NonabelianCohomologyInFirstDegree}
reduces to \eqref{OrdinaryCohomologyByEilenbergMacLane},
but not otherwise.
Moreover, the May recognition theorem implies that
\emph{any} connected space $A$ is weakly homotopy equivalent to
a classifying space
$B G$, namely for $G = \Omega A$ the based loop group of $A$
(which may be rectified, up to weak homotopy equivalence, to
an actual topological group). Thereby, the traditional
equivalence \eqref{NonabelianCohomologyInFirstDegree} is
re-interpreted as an elegant general notion of \emph{non-abelian cohomology}:
\vspace{-3mm}
\begin{equation}
  \label{NonAbelianCohomology}
  \overset{
    \mathclap{
    \raisebox{3pt}{
      \tiny
      \color{darkblue}
      \bf
      \begin{tabular}{c}
        non-abelian cohomology
        \\
        with coefficients in $A$
      \end{tabular}
    }
    }
  }{
  H
  \big(
    X;
    \,
    A
  \big)
  }
  \;\;\;\; :=\;
  \overset{
    \raisebox{3pt}{
    \tiny
    \color{darkblue}
    \bf
    \begin{tabular}{c}
      homotopy-classes of
      \\
      maps to $A$
    \end{tabular}
    }
  }{
  \pi_0
  \,
  \mathrm{Maps}
  \big(
    X,\, A
  \big)
  }
  \;\;\;
  =
  \;\;\;
  \left\{
  \!\!\!
  \xymatrix{
    X
    \ar@/^1.7pc/[rr]|-{\;c\;}^-{
      \tiny
      \color{darkblue}
      \bf
      \begin{tabular}{c}
    \bf    map/cocycle
      \end{tabular}
    }_-{\ }="s"
    \ar@/_1.8pc/[rr]|-{\; c' \;}_-{
      \tiny
      \color{darkblue}
      \bf
      \begin{tabular}{c}
   \bf     map/cocycle
      \end{tabular}
    }^-{\ }="t"
    &&
    A
    \ar@{=>}|-{
      \mbox{
        \tiny
        \color{orangeii}
        \bf
        \begin{tabular}{c}
          homotopy/
          \\
          coboundary
        \end{tabular}
      }
    }
      "s"; "t"
  }
  \!\!\!
  \right\}_{\!\!\!\!\!\!\big/\mathrm{homotopy}}
\end{equation}
\vspace{-2mm}

\noindent Non-abelian cohomology in this generality is discussed
in \cite{Toen02}\cite{Jardine09}\cite{RobertsStevenson12}\cite{NSS12a}\cite{NSS12b}\cite{SS20b}.
 For example, for $X = S^n$ an $n$-sphere, we have
$S^n \simeq B\big(\Omega S^n\big)$ and the corresponding
non-abelian cohomology theory \eqref{NonabelianCohomologyInFirstDegree}
is \emph{Cohomotopy theory}
\vspace{-1mm}
$$
  \pi^n
  (
    X
  )
  \;:=\;
  \pi_0
  \,
  \mathrm{Maps}
  \big(
    X,\, S^n
  \big)
  \;\simeq\;
  H^1
  \big(
    X;\, \Omega S^n
  \big).
$$

 This  perspective
on generalized/non-abelian cohomology via classifying spaces makes
many related concepts nicely transparent, for
example the notions of \emph{twisting in cohohomology}
and of \emph{generalized Chern characters}.

\medskip

\noindent {\bf Twisted non-abelian cohomology.}
A \emph{twist} of $A$-cohomology \eqref{NonabelianCohomologyInFirstDegree}
is what is classified by a twisted parametrization of $A$
over some base space $B$ \cite[\S 4]{NSS12a}\cite[\S 2.2]{SS20b}\cite[\S 2.2]{FSS20c}:
Instead of mapping into a fixed classifying spaces, a
\emph{twisted cocycle} maps into a varying classifying space that may twist and
turn as one moves in the domain space.
In other words, a \emph{twisting} $\tau$ of $A$-cohomology theory
on some $X$ is a bundle over $X$ with typical fiber $A$,
and a $\tau$-twisted cocycle is a \emph{section} of that bundle
\cite[\S 4]{NSS12a}\cite{ABGHR14}\cite[\S 2.2]{SS20a}:
\vspace{-2mm}
\begin{equation}
  \label{TwistedNonAbelianGeneralizedCohomology}
\hspace{-1cm}
  \begin{aligned}
  \mbox{
    \tiny
    \color{darkblue} \bf
    \begin{tabular}{c}
      $\tau$-twisted
      \\
      non-abelian generalized
      \\
      $A$-cohomology theory
    \end{tabular}
  }
  \;
  A^\tau(X)
  & \;:=\;
  \left\{\!\!\!\!\!\!\!
    \raisebox{22pt}{
    \xymatrix@C=3em@R=1.5em{
      &
      \overset{
        \mathclap{
        \mbox{
          \tiny
          \color{darkblue} \bf
          \begin{tabular}{c}
            $\phantom{a}$
            \\
            $A$-fiber bundle
          \end{tabular}
        }
        }
      }{
        P
      }
      \ar[d]^-p
      \ar[rr]
      &&
      \overset{
        \mathclap{
        \mbox{
          \tiny
          \color{darkblue} \bf
          \begin{tabular}{c}
            universal
            \\
            $A$-fiber bundle
          \end{tabular}
        }
        }
      }{
      A \!\!\sslash\! \mathrm{Aut}(A)
      }
      \ar[d]
      \\
      X
      \ar@/^1.04pc/@{-->}[ur]^{
        \mbox{
          \tiny
          \color{darkblue} \bf
          \begin{tabular}{c}
            continuous
            section
            \\
            =
            twisted cocycle
          \end{tabular}
        }
      }
      \ar@{=}[r]
      &
      X
      \ar[rr]^-{\tau}_-{
               \mathclap{
          \mbox{
            \tiny
            \color{darkblue} \bf
            \begin{tabular}{c}
              classifying map
                            for $P$
            \end{tabular}
          }
          }
        }
      &&
      B \mathrm{Aut}(A)
    }
    }
  \right\}_{\!\!\!\!
       \Big/
    \sim_{
      {}_{
        \frac{
          \mathrm{homotopy}
        }
        {
          B \mathrm{Aut}(A)
        }
      }
    }
  }
  \\
  & \;\;\simeq\;
  \left\{
    \raisebox{22pt}{
    \xymatrix@C=4em{
      X
      \ar[dr]_-{
        \mathllap{
          \mbox{
            \tiny
            \color{darkblue} \bf
            twist
          }
        }
        \;
        \tau
      }^>>>>{\ }="t"
      \ar@{-->}[rr]^-{
        \mbox{
          \tiny
          \color{darkblue} \bf
          continuous function
        }
      }_-{\ }="s"
      &&
      A \!\!\sslash\! \mathrm{Aut}(A)
      \ar[dl]
      \\
      &
      B \mathrm{Aut}(A)
      \ar@{=>} "s"; "t"
      \ar@<-16pt>@{}^-{
        \rotatebox[origin=c]{48}{
          \tiny
          \color{orangeii}
          \bf
          homotopy
        }
      } "s"; "t"
    }
    }
  \right\}_{
    \!\!\!\!
    \Big/
    \sim_{
      {}_{
        \frac{
          \mathrm{homotopy}
        }
        {
          B \mathrm{Aut}(A)
        }
      }
    }
  }
  \end{aligned}
  \end{equation}
Here the equivalent formulation shown in the second line
follows because $A$-fiber bundles are themselves classified by
nonabelian $\mathrm{Aut}(A)$-cohomology
(see \cite[4.11]{NSS12a}\cite[2.92]{SS20b}),
as shown on the right of the first line.

\medskip

With a general concept of twisted non-abelian cohomology theories in hand, we turn to
discussion of their character maps. At their core, these
come from the rationalization approximation on coefficient/classifying spaces:

\medskip

\noindent {\bf Rationalization.}
For $X$ a connected nilpotent space, we write
\vspace{-3mm}
\begin{equation}
  \label{Rationalization}
  \xymatrix{
    X
    \ar[rr]^-{ \eta^{\mathbb{R}}_X }_-{
      \mbox{
        \tiny
        \begin{tabular}{c}
          \color{darkblue}
          \bf
          rationalization
          \\
          (over the real numbers)
        \end{tabular}
      }
    }
    &&
    L_{\mathbb{R}}X
  }
\end{equation}

\vspace{-3mm}
\noindent for its rationalization (e.g. \cite[1.4]{Hess07}) over the real numbers.
And we write $\mathfrak{l}X$
for the \emph{Whitehead  $L_\infty$-algebra} that is the formal dual of the
minimal Sullivan model for
$X$ \cite{BFM06}\cite[\S 2.1]{BMSS19}\cite[Prop. 3.67]{FSS20c}
(or in their rectified incarnation \cite[\S 1.0.2]{FRS13}: dg-Lie algebras
as in the original \cite{Quillen69}).

\medskip

\noindent {\bf Chern-Dold character in abelian cohomology.}
Given an abelian generalized cohomology theory $E^\bullet$
\eqref{GeneralizedCohomologyByMappingIntoSpectrum},
rationalization \eqref{Rationalization}
of its classifying spaces \eqref{ASpectrum}
induces a cohomology operation from $E$-cohomology theory to
ordinary cohomology with coefficients in the rationalized
stable homotopy groups of $E$:
\vspace{-.3cm}
\begin{equation}
  \label{ChernDoldCharacter}
  \hspace{-.2cm}
  \overset{
    \mathclap{
    \raisebox{3pt}{
      \tiny
      \color{darkblue}
      \bf
      \begin{tabular}{c}
        Chern-Dold
        \\
        character
      \end{tabular}
    }
    }
  }{
    \mathrm{ch}_E
  }
  \;:
  \xymatrix@C=20pt{
    \underset{
      \mathclap{
      \raisebox{-3pt}{
        \tiny
        \color{darkblue}
        \bf
        \begin{tabular}{c}
          generalized
          \\
          cohomology
        \end{tabular}
      }
      }
    }{
    E^n
    (
      X
    )
    }
   \;\;
   \overset{
   \mathclap{
    \raisebox{+8pt}{
      \tiny
      \color{darkblue}
      \bf
      \begin{tabular}{c}
        Brown
        \\
        represen-
        \\
        tability
      \end{tabular}
    }
    }
   }{
     \simeq
   }
   \;\;
   \pi_0
   \,
    \mathrm{Maps}
   \big(
      X,\,
      E_n
  \big)
  \ar[rr]^-{
    \mbox{
      \tiny
      \color{darkblue}
      \bf
      \begin{tabular}{c}
        rationalization
        \\
        \phantom{A}
      \end{tabular}
    }
  }_-{
    \pi_0
    \,
    \mathrm{Maps}
    (
      X,\,
      L_{\mathbb{R}}
    )
  }
  &&
  \underset{
    \mathclap{
    \;\;\;\;\;\;\;\;\;\;\;\;
    \raisebox{-3pt}{
      \tiny
      \color{darkblue}
      \bf
      rational $E$-cohomology
    }
    }
  }{
  \pi_0
  \,
  \mathrm{Maps}
  \big(
    X,\,
    L_{\mathbb{R}}E_n
  \big)
  \;=:\;
  E_{\mathbb{R}}^n
  \big(
    X
  \big)
  }
  \overset{
    \mathclap{
    \raisebox{+7pt}{
      \tiny
      \color{darkblue}
      \bf
      \begin{tabular}{c}
        Dold's
        \\
        equiv-
        \\
        alence
      \end{tabular}
    }
    }
  }{
    \;\simeq\;
  }
  \underset{k}{\bigoplus}
  \;
  \underset{
    \mathclap{
    \raisebox{0pt}{
      \tiny
      \color{darkblue}
      \bf
      \begin{tabular}{c}
        ordinary cohomology
        \\
        with coefficients in
        \\
        rationalized homotopy groups of $E$
      \end{tabular}
    }
    }
  }{
    H^{n+k}
    \big(
      X;
      \,
      \pi_k(E)\!\otimes_{{}_{\mathbb{Z}}}\! \mathbb{R}
    \big)
    \,,
    }
  }
\end{equation}

\vspace{-2mm}
\noindent where Dold's equivalence on the far right is due to
\cite[Cor. 4]{Dold65}, reviewed in \cite[\S II.3.17]{Rudyak98}.
This map \eqref{ChernDoldCharacter}
is called the \emph{Chern-Dold character map}, due to
\cite{Buchstaber70}. The modern formulation
above is made fully explicit in \cite[\S 2.1]{LindSatiWesterland16};
see also \cite[\S 4.8]{HopkinsSinger05}\cite[p. 17]{BunkeNikolaus14}\cite{GS-AHSS}.
For example:
\begin{enumerate}[\bf (i)]
\vspace{-.2cm}
\item
When $E^\bullet = H^\bullet(-;\mathbb{Z})$ is ordinary
integral cohomology, its rationalization is
$E^\bullet_{\mathbb{R}} = H^\bullet(-; \mathbb{R})$ and the
Chern-Dold character \eqref{ChernDoldCharacter} reduces to
extension of scalars from integral to real cohomology,
as in \eqref{BareShiftedFluxQuantization}, \eqref{FluxQuantizationOnMO9}.
\vspace{-.2cm}
\item
When $E^\bullet = \mathrm{KU}^\bullet$
is complex topological K-theory, its rationalization is
$\mathrm{KU}^{0,1}_{\mathbb{R}} \simeq H^{\mathrm{even},\mathrm{odd}}
(-;\mathbb{R})$ and the Chern-Dold character \eqref{ChernDoldCharacter}
reduces to the ordinary Chern character
(see \cite{GS-KO}\cite{GS-RR} for extensive discussions).
\end{enumerate}

\noindent {\bf Character map in twisted non-abelian cohomology.}
\begin{defn}[Non-abelian de Rham cohomology {\cite[\S 6.5]{SatiSchreiberStasheff08}\cite[\S 4.1]{FiorenzaSchreiberStasheff10}\cite[Def. 3.82]{FSS20c}}]
 \label[Def. 3.66]{NonAbelianDeRhamCohomology}
The non-abelian de Rham cohomology of a smooth manifold $X$
with coefficients in an $L_\infty$-algebra $\mathfrak{g}$
of finite type
is the quotient of the set of dg-algebra homomorphism from
the Chevalley-Eilenberg algebra
$\mathrm{CE}(\mathfrak{g})$ of $\mathfrak{g}$ (which is the Sullivan
model of a rational space) to the de Rham dg-algebra
$\Omega^\bullet_{\mathrm{dR}}$ of differential
forms on $X$,
quotiented by dg-algebra homotopies:
\vspace{-2mm}
\begin{equation}
  \label{NonabelianDeRhamCohomology}
  \overset{
    \mathclap{
    \raisebox{3pt}{
      \tiny
      \color{darkblue}
      \bf
      \begin{tabular}{c}
        non-abelian
        \\
        de Rham cohomology
        \\
        with coefficients in $\mathfrak{l}A$
      \end{tabular}
    }
    }
  }{
  H_{\mathrm{dR}}
  \big(
    X;\,
    \mathfrak{l}A
  \big)
  }
  \;\;:=\;\;
  \underset{
  }{
    \mathrm{Hom}
    \big(
      \overset{
        \mathclap{
        \!\!\!
        \mbox{
          \tiny
          \color{darkblue}
          \bf
          \begin{tabular}{c}
            Sullivan model/
            \\
            Chevalley-Eilenberg
            \\
            dg-algebra
          \end{tabular}
        }
        }
      }{
        \mathrm{CE}
        (\mathfrak{l}A)
      }
      \;\;\,,\;\;
      \overset{
        \mathclap{
        \,\,\,
        \mbox{
          \tiny
          \color{darkblue}
          \bf
          \begin{tabular}{c}
            de-Rham
            \\
            dg-algebra
          \end{tabular}
        }
        }
      }{
        \Omega_{\mathrm{dR}}^\bullet(X)
      }
    \big)_{\big/\!\sim}
  }
  \;\;
  =
  \;\;
  \left\{
  \!\!\!
  \xymatrix{
    \Omega^\bullet_{\mathrm{dR}}
    (
      X
    )
    \;\;\;\;\;\;\;
    \ar@{<-}@/^1.7pc/[rr]|-{\;A\;}^-{
      \mathclap{
      \mbox{
        \tiny
        \color{darkblue}
        \bf
        \begin{tabular}{c}
          dg-algebra homomorphism/
          \\
          flat $\mathfrak{l}A$-valued differential form
        \end{tabular}
      }
      }
    }_-{\ }="s"
    \ar@{<-}@/_1.7pc/[rr]|-{\;A'\;}_-{
      \mathclap{
      \mbox{
        \tiny
        \color{darkblue}
        \bf
        \begin{tabular}{c}
          dg-algebra homomorphism/
          \\
          flat $\mathfrak{l}A$-valued differential form
        \end{tabular}
      }
      }
    }^-{\ }="t"
    &&
\;\;\;\;\;    \mathrm{CE}
    \big(
      \mathfrak{l}A
    \big)
    \ar@{=>}|-{
      \mbox{
        \tiny
        \color{orangeii}
        \bf
        \begin{tabular}{c}
          dga-homotopy/
          \\
          coboundary
        \end{tabular}
      }
    }
      "s"; "t"
  }
  \!\!\!
  \right\}_{\!\!\!\big/_{\mathrm{homotopy}}}
\end{equation}
\end{defn}
\begin{example}[Recovering ordinary de Rham cohomology {\cite[Prop. 3.94]{FSS20c}}]
  \label{RecoveringOrdinarydeRhamCohomology}
  In the case that $\mathfrak{g} = \mathbb{R}[n]$ is the
  \emph{line} $L_\infty$-algebra concentrated in degree $n$,
  its Chevalley-Eilenberg algebra is the free graded-commutative
  algebra on a single generator in degree $n+1$ with vanishing
  differential; which is also the Sullivan model of
  the Eilenberg-MacLane space \eqref{OrdinaryCohomologyByEilenbergMacLane}
  in that degree:
  \begin{equation}
    \mathrm{CE}
    \big(
      \mathbb{R}[n]
    \big)
    \;=\;
    \mathbb{R}\big[c_{n_1}\big]
    \!\big/\!
    \left(
    {\begin{aligned}
      d\,c_{n+1} & = 0
    \end{aligned}}
    \!\!\!
    \right)
    \;\simeq\;
    \mathrm{CE}
    \big(
      \mathfrak{l}
      K(n+1, \mathbb{Z})
    \big)
    \,.
  \end{equation}
  Hence dg-algebra homomorphisms out of this into a de Rham algebra
  are equivalently closed differential $(n+1)$-forms:
  \begin{equation}
    \mathrm{Hom}
    \big(
      \mathrm{CE}(\mathbb{R}[1])
      ,\,
      \Omega^\bullet_{\mathrm{dR}}(X)
    \big)
    \;\simeq\;
    \Omega^n(X)_{\mathrm{cl}}
    \,,
  \end{equation}
  and dg-algebra homotopies between these are equivalently de Rham
  coboundaries. Therefore, the non-abelian de Rham cohomology \eqref{NonabelianDeRhamCohomology}
  with these coefficients reduces to ordinary de Rham cohomology
  in that degree:
  \begin{equation}
    H_{\mathrm{dR}}
    \big(
      X;
      \,
      \mathbb{R}[n]
    \big)
    \;\simeq\;
    H^{n+1}_{\mathrm{dR}}(X)
    \,.
  \end{equation}
\end{example}
\begin{prop}[Non-abelian de Rham theorem {\cite[Thm. 3.95]{FSS20c}}]
  \label{NonAbelianDeRhamTheorem}
  Let $X$ be a smooth manifold and $A$ a nilpotent
  topological space of finite rational homotopy type, hence
  with a minimal Sullivan model $\mathrm{CE}\big(\mathfrak{l}A\big)$
  for its rationalization $\mathfrak{l}A$ \eqref{Rationalization}.
    Then the non-abelian cohomology \eqref{NonAbelianCohomology}
  of $X$ with real coefficients $\mathfrak{l}A$ is equivalent to the
  non-abelian de Rham cohomology \eqref{NonabelianDeRhamCohomology}
  with coefficient in $\mathfrak{l}A$:
  \vspace{-2mm}
  \begin{equation}
  \label{EquivalenceOfNonAbelianDeRhamTheorem}
  \overset{
    \mathclap{
    \raisebox{3pt}{
      \tiny
      \color{darkblue}
      \bf
      \begin{tabular}{c}
        non-abelian
        \\
        real cohomology
      \end{tabular}
    }
    }
  }{
  H
  (
    X;
    \,
    L_{\mathbb{R}} A
  )
  }
  \;\;\simeq\;\;
  \overset{
    \raisebox{3pt}{
      \tiny
      \color{darkblue}
      \bf
      \begin{tabular}{c}
        non-abelian
        \\
        de Rham cohomology
      \end{tabular}
    }
  }{
  H_{\mathrm{dR}}
  (
    X;
    \,
    \mathfrak{l}A
  )\;.
  }
\end{equation}
\end{prop}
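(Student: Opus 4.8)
The plan is to route both sides of \eqref{EquivalenceOfNonAbelianDeRhamTheorem} through the \emph{Sullivan--de Rham adjunction} between simplicial sets (or topological spaces) and connective commutative dg-algebras over $\mathbb{R}$, with the polynomial differential forms functor $\Omega^\bullet_{\mathrm{PL}}$ on one side and the spatial realization functor $\langle-\rangle$ on the other (see \cite[\S 10]{FHT00}; the $L_\infty$/$\mathrm{CE}$-formulation in use here is that of \cite[\S 3]{FSS20c}). Because $A$ is nilpotent of finite rational homotopy type, Sullivan's fundamental theorem identifies $\langle \mathrm{CE}(\mathfrak{l}A)\rangle$ with the real rationalization $L_{\mathbb{R}}A$ from \eqref{Rationalization}, and the simplicial adjunction then provides a weak equivalence
\[
  \mathrm{Maps}\big(X,\, L_{\mathbb{R}}A\big)
  \;\simeq\;
  \mathrm{Maps}_{\mathrm{dgcAlgebras}_{\mathbb{R}}}\!\big(\mathrm{CE}(\mathfrak{l}A),\, \Omega^\bullet_{\mathrm{PL}}(X)\big),
\]
natural in the smooth manifold $X$, using that $\mathrm{CE}(\mathfrak{l}A)$ is a Sullivan, hence cofibrant, dg-algebra. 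Passing to $\pi_0$ identifies $H(X;\,L_{\mathbb{R}}A)$ with the set of dg-algebra homomorphisms $\mathrm{CE}(\mathfrak{l}A)\to\Omega^\bullet_{\mathrm{PL}}(X)$ modulo dg-algebra homotopy.

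Next I would replace the polynomial de Rham algebra by the honest smooth one. There is a natural zig-zag of quasi-isomorphisms of commutative dg-algebras connecting $\Omega^\bullet_{\mathrm{PL}}(X)$ and $\Omega^\bullet_{\mathrm{dR}}(X)$ -- both being multiplicative models for the real cochain algebra of $X$; this is the classical smooth-versus-PL comparison (\cite[\S 11]{FHT00}), produced for a manifold by a good open cover together with Mayer--Vietoris and the ordinary de Rham theorem. Since the source $\mathrm{CE}(\mathfrak{l}A)$ is cofibrant and every dg-algebra is fibrant in the relevant model structure, homotopy classes of maps out of it are invariant under quasi-isomorphisms of the target, so
\[
  \big[\mathrm{CE}(\mathfrak{l}A),\, \Omega^\bullet_{\mathrm{PL}}(X)\big]
  \;\cong\;
  \big[\mathrm{CE}(\mathfrak{l}A),\, \Omega^\bullet_{\mathrm{dR}}(X)\big].
\]

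It then remains to match the intrinsic homotopy relation on these maps with the dga-homotopy/coboundary relation used in \eqref{NonabelianDeRhamCohomology}. With source cofibrant and target fibrant, left homotopy is already an equivalence relation realized in a single step, and the standard cylinder on a Sullivan algebra -- tensoring with polynomial forms on the $1$-simplex $\Delta^1$ -- or, dually, the path object $\Omega^\bullet_{\mathrm{dR}}(X)\otimes\Omega^\bullet(\Delta^1)$, reproduces exactly that relation. Chaining the three identifications yields
\[
  H\big(X;\, L_{\mathbb{R}}A\big)
  \;\cong\;
  \mathrm{Hom}\big(\mathrm{CE}(\mathfrak{l}A),\, \Omega^\bullet_{\mathrm{dR}}(X)\big)\big/\!\sim
  \;=\;
  H_{\mathrm{dR}}\big(X;\, \mathfrak{l}A\big),
\]
as claimed, with the special case $\mathfrak{g}=\mathbb{R}[n]$, $A=K(\mathbb{Z},n+1)$ of Example~\ref{RecoveringOrdinarydeRhamCohomology} recovering the ordinary de Rham theorem.

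The step I expect to be the main obstacle is the smooth-versus-PL comparison: it must be realized as an honest zig-zag of \emph{multiplicative} quasi-isomorphisms, not merely a chain-level equivalence, and natural enough in $X$ to feed into the (co)fibrancy argument above -- the standard remedy is to route both complexes through the Thom--Whitney simplicial de Rham complex of a good cover. A secondary point to watch is finiteness: the identification $\langle \mathrm{CE}(\mathfrak{l}A)\rangle\simeq L_{\mathbb{R}}A$ and the correct computation of $\pi_0$ of the mapping space genuinely use that $A$ is nilpotent of finite type, and for non-compact $X$ one either restricts to finite-type manifolds or argues by compact exhaustion, as carried out in \cite[\S 3]{FSS20c}.
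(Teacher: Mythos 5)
Your argument is correct and is essentially the paper's own proof in expanded form: the paper's proof consists of unwinding the definitions and citing the fundamental theorem of rational homotopy theory \cite[\S 9.4]{BousfieldGugenheim76}, i.e.\ exactly the derived Sullivan--de Rham adjunction you invoke, which identifies hom-sets in the homotopy categories of nilpotent finite-type rational spaces and of dgc-algebras. The additional steps you make explicit (the multiplicative PL-versus-smooth de Rham comparison and the cylinder/path-object identification of the homotopy relation) are precisely the details the paper delegates to the cited \cite[Thm. 3.85]{FSS20c}, so there is no substantive difference in route.
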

\begin{proof}
  Unwinding the definitions, the equivalence \eqref{EquivalenceOfNonAbelianDeRhamTheorem}
  reduces to the fundamental theorem of
  rational homotopy theory \cite[\S 9.4]{BousfieldGugenheim76}
  (reviewed as \cite[Prop. 2.11]{BMSS19};
  see also \cite[Cor. 1.26]{Hess07}) which identifies
  the hom-sets in the homotopy categories of {\bf a)}
  nilpotent and finite-type rational topological spaces,
  and  {\bf b)} the opposite of dgc-algebras.
  %
  %
\hfill \end{proof}

\begin{prop}[Non-abelian de Rham theorem for stable coefficients {\cite[Ex. 3.75]{FSS19a}}]
  \label{NonabelianDeRhamForStableCoefficients}
  Let $X$ be a smooth manifold, and
  $E$ an infinite-loop space \eqref{ASpectrum}.
  Then non-abelian de Rham cohomology \eqref{NonabelianDeRhamCohomology}
  of $X$ with coefficients in $\mathfrak{l}E$ is equivalent to
  the real cohomology of $X$
  with coefficients in the rationalized homotopy groups of $E$:
  \vspace{-2mm}
  \begin{equation}
    \label{StableVersionOfNonabelianDeRham}
    H_{\mathrm{dR}}
    \big(
      X;
      \,
      \mathfrak{l}E
    \big)
    \;\;
    \simeq
    \;\;
    \underset{
      k
    }{\bigoplus}
    \,
    H^{k}
    \big(
      X;
      \,
      \pi_k(E)\otimes_{{}_{\mathbb{Z}}}\mathbb{R}
    \big)
    \,.
  \end{equation}
\end{prop}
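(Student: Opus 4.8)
The plan is to reduce \eqref{StableVersionOfNonabelianDeRham} to the non-abelian de Rham theorem (Prop. \ref{NonAbelianDeRhamTheorem}) together with the line-coefficient computation of Example \ref{RecoveringOrdinarydeRhamCohomology}, the one genuinely new input being that, for $E$ an infinite-loop space, the Whitehead $L_\infty$-algebra $\mathfrak{l}E$ is \emph{abelian}. Concretely: since $\mathbb{Q}$ is a field, every rational spectrum splits as a wedge of suspensions of rational Eilenberg--MacLane spectra, so the rationalization of $E$ splits as a product of rational Eilenberg--MacLane spaces,
\[
  L_{\mathbb{R}} E
  \;\simeq\;
  \prod_{k} K\big( \pi_k(E)\otimes_{{}_{\mathbb{Z}}}\mathbb{R},\, k \big)
  \,;
\]
equivalently (by Hopf's theorem, the minimal Sullivan model of an H-space carries the zero differential), the Whitehead $L_\infty$-algebra is the abelian $L_\infty$-algebra
\[
  \mathfrak{l} E
  \;\simeq\;
  \bigoplus_{k}
  \big( \pi_k(E)\otimes_{{}_{\mathbb{Z}}}\mathbb{R} \big)[k-1]
  \,,
\]
a direct sum of shifted copies of the line $L_\infty$-algebras $\mathbb{R}[k-1]$ of Example \ref{RecoveringOrdinarydeRhamCohomology}. (Infinite-loop spaces are simple, hence nilpotent; if one wants the finite-type hypothesis of Prop. \ref{NonAbelianDeRhamTheorem} literally one may either assume $E$ of finite rational homotopy type or argue skeleton-wise --- in any case only finitely many $k$ contribute below, since $X$ is a finite-dimensional manifold.)

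Next I would propagate this direct-sum decomposition through the non-abelian de Rham cohomology of \eqref{NonabelianDeRhamCohomology}. The Chevalley--Eilenberg functor sends direct sums of $L_\infty$-algebras to tensor products of cdgas, so $\mathrm{CE}(\mathfrak{l}E) \simeq \bigotimes_k \mathrm{CE}\big( (\pi_k(E)\otimes\mathbb{R})[k-1] \big)$; and since a dg-algebra homomorphism out of a tensor product of cdgas into the graded-commutative algebra $\Omega^\bullet_{\mathrm{dR}}(X)$ --- as well as a dg-algebra homotopy between two such --- is precisely a tuple of homomorphisms (resp. homotopies) out of the tensor factors (tensor product being the coproduct of commutative algebras), the functor $H_{\mathrm{dR}}(X;\,-)$ carries the decomposition to a product:
\[
  H_{\mathrm{dR}}\big( X;\, \mathfrak{l}E \big)
  \;\simeq\;
  \prod_{k}
  H_{\mathrm{dR}}\big( X;\, (\pi_k(E)\otimes_{{}_{\mathbb{Z}}}\mathbb{R})[k-1] \big)
  \,.
\]
By Example \ref{RecoveringOrdinarydeRhamCohomology} --- applied with the coefficient vector space $\pi_k(E)\otimes\mathbb{R}$ in place of a single line, i.e. with $\dim_{\mathbb{R}}\big(\pi_k(E)\otimes\mathbb{R}\big)$ copies of $\mathbb{R}[k-1]$ --- each factor equals $H^{k}_{\mathrm{dR}}(X) \otimes_{{}_{\mathbb{R}}} \big(\pi_k(E)\otimes_{{}_{\mathbb{Z}}}\mathbb{R}\big) \simeq H^{k}\big(X;\, \pi_k(E)\otimes_{{}_{\mathbb{Z}}}\mathbb{R}\big)$, using the ordinary de Rham theorem; and the product over $k$ is a finite direct sum because $\dim X < \infty$. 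This is \eqref{StableVersionOfNonabelianDeRham}.

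The only real content is the first step, the abelianness of $\mathfrak{l}E$, which is exactly where the hypothesis ``infinite-loop space'' (rather than ``arbitrary nilpotent finite-type space'') is used; the remainder is bookkeeping about the behaviour of $H_{\mathrm{dR}}(X;-)$ on direct sums of line $L_\infty$-algebras, so I do not expect a genuine obstacle --- only mild care over finiteness (and $\lim^1$-type) issues, which dissolve because $X$ is finite-dimensional. As a cross-check, one may instead run the argument through Brown representability: Prop. \ref{NonAbelianDeRhamTheorem} identifies $H_{\mathrm{dR}}(X;\mathfrak{l}E)$ with $\pi_0\,\mathrm{Maps}(X, L_{\mathbb{R}}E)$, and the splitting of $L_{\mathbb{R}}E$ then gives $\pi_0\,\mathrm{Maps}(X, L_{\mathbb{R}}E) \simeq \prod_k H^k(X;\, \pi_k(E)\otimes\mathbb{R})$ --- which is precisely Dold's equivalence as it appears on the far right of the Chern--Dold character \eqref{ChernDoldCharacter}.
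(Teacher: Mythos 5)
Your proposal is correct and follows essentially the same route as the paper: the key input is that the minimal Sullivan model of an infinite-loop space is free on its rationalized homotopy groups with vanishing differential (the paper cites \cite[p. 143]{FHT00} and, exactly in the spirit of your rational-spectra splitting, \cite[Lemma 2.25, Prop. 2.30]{BMSS19}), after which the claim follows from Example \ref{RecoveringOrdinarydeRhamCohomology} and the ordinary de Rham theorem. Your extra bookkeeping about tensor products of CE-algebras and the finite-dimensionality of $X$ only spells out what the paper leaves implicit.
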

\begin{proof}
The minimal Sullivan model of an infinite loop space is the free graded
algebra generated by its rationalized homotopy groups,
with vanishing differential
(see \cite[p. 143]{FHT00}, or, from a broader perspective of rational
spectra, \cite[Lemma 2.25, Prop. 2.30]{BMSS19}). This implies the claim
by Example \ref{RecoveringOrdinarydeRhamCohomology},
via the ordinary de Rham theorem (e.g. \cite[10.15]{FHT00}).
\hfill \end{proof}
In conclusion:

\begin{prop}[Non-abelian de Rham theorem recovers Dold's equivalence {\cite[Prop. 4.6]{FSS20c}}]
  \label{NonAbelianDeRhamTheoremRecoversDoldEquivalence}
  Let $X$ be a smooth manifold, and
  $E$ (the connective spectrum of) an infinite-loop space \eqref{ASpectrum}.
  Then Dold's equivalence is equivalent to the restriction of the
  non-abelian de Rham theorem (Prop. \ref{NonAbelianDeRhamTheorem})
  to stable coefficients (Prop. \ref{NonabelianDeRhamForStableCoefficients}):
   \vspace{-2mm}
  \begin{equation}
    \xymatrix@C=5em{
      E_{\mathbb{R}}^n(X)
      \ar[r]^-{
          \mbox{
            \tiny
            \color{darkblue}
            \bf
            \begin{tabular}{c}
              Dold's equivalence
            \end{tabular}
          }}_-{
          \simeq
      }
      \ar@{=}[d]
      &
      \underset{k}{\bigoplus}
      \,
      H^{n+k}
      \big(
        X;
        \,
        \pi_{n+k}(E)\otimes_{{}_{\mathbb{Z}}}\mathbb{R}
      \big)
      \ar@{<-}[d]^-{
        \rotatebox[origin=c]{-90}{
          \scalebox{.6}{$\simeq$}
        }
        \mathrlap{
          \;
          \scalebox{.7}{
            \rm
            \eqref{StableVersionOfNonabelianDeRham}
          }
        }
      }
      \\
      H(X;\,L_{\mathbb{R}} E_n)
      \ar[r]_-{
                \mbox{
            \tiny
            \begin{tabular}{c}
              \color{darkblue}
              \bf
              non-abelian
              \\
              \color{darkblue}
              \bf
              de Rham theorem
              \\
              \rm
              Prop. \ref{NonAbelianDeRhamTheorem}
            \end{tabular}
          }
          }^-{
          \simeq
        }
      &
      H_{\mathrm{dR}}
      \big(
        X;
        \,
        \mathfrak{l}E_n
      \big)
    }
  \end{equation}
\end{prop}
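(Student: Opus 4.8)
The plan is to verify that the square displayed in the statement commutes; this is exactly what exhibits Dold's equivalence as the composite of the (definitional) identification $E_{\mathbb{R}}^n(X) = H(X;\, L_{\mathbb{R}} E_n)$, the non-abelian de Rham theorem (Prop.\ \ref{NonAbelianDeRhamTheorem}) for the coefficient space $A = E_n$, and the stable-coefficient identification \eqref{StableVersionOfNonabelianDeRham} from Prop.\ \ref{NonabelianDeRhamForStableCoefficients}. Since every edge of the square is built only out of rationalization, it suffices to track the single geometric input: the product decomposition of the rational coefficient space.

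First I would recall the concrete form of Dold's equivalence. Since $E$ is a connective spectrum with rational homotopy groups, its rationalization $L_{\mathbb{R}} E$ splits as a homotopy product of shifted Eilenberg--MacLane spectra indexed by those groups; applying $\Omega^\infty\Sigma^n$ gives an equivalence of spaces $L_{\mathbb{R}} E_n \;\simeq\; \prod_k K\big(\pi_k(E)\otimes_{\mathbb{Z}}\mathbb{R},\, n+k\big)$, and Dold's equivalence in \eqref{ChernDoldCharacter} is precisely the resulting decomposition $\pi_0\,\mathrm{Maps}(X, L_{\mathbb{R}} E_n) \simeq \prod_k \pi_0\,\mathrm{Maps}\big(X, K(\pi_k(E)\otimes\mathbb{R},\, n+k)\big) \simeq \bigoplus_k H^{n+k}\big(X;\, \pi_k(E)\otimes\mathbb{R}\big)$, using Brown representability on each factor (and the fact that only finitely many degrees contribute over a finite-dimensional $X$, so that the product is a direct sum). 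In other words, the top edge of the square is nothing but this product decomposition of the mapping space.

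Next I would run the composite along the other three edges. By Prop.\ \ref{NonAbelianDeRhamTheorem} the bottom map sends a class $X \to L_{\mathbb{R}}E_n$ to the corresponding homotopy class of dg-algebra homomorphisms $\mathrm{CE}(\mathfrak{l}E_n) \to \Omega^\bullet_{\mathrm{dR}}(X)$. Since $E_n$ is an infinite loop space, $\mathrm{CE}(\mathfrak{l}E_n)$ is the free graded-commutative algebra on the rationalized homotopy groups with zero differential (as in the proof of Prop.\ \ref{NonabelianDeRhamForStableCoefficients}); and because the minimal-Sullivan-model functor sends a product of spaces to the tensor product (coproduct) of dgc-algebras, the above product decomposition of $L_{\mathbb{R}}E_n$ corresponds to the factorization $\mathrm{CE}(\mathfrak{l}E_n) \cong \bigotimes_k \mathrm{CE}\big(\mathfrak{l}K(\pi_k(E)\otimes\mathbb{R},\, n+k)\big)$, each factor being the zero-differential free algebra on generators of degree $n+k$ valued in $\pi_k(E)\otimes\mathbb{R}$ (Example \ref{RecoveringOrdinarydeRhamCohomology}). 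Hence a dg-algebra homomorphism out of $\mathrm{CE}(\mathfrak{l}E_n)$ is precisely a tuple of closed differential forms of degree $n+k$ with values in $\pi_k(E)\otimes\mathbb{R}$, and its class modulo dg-homotopy is the corresponding tuple of de Rham classes; the right edge \eqref{StableVersionOfNonabelianDeRham} is exactly the factor-wise ordinary de Rham theorem applied to this tuple.

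To conclude, I would observe that both routes around the square first decompose $L_{\mathbb{R}}E_n$ into its Eilenberg--MacLane factors $K(\pi_k(E)\otimes\mathbb{R},\, n+k)$ and then, on each factor, identify $\pi_0\,\mathrm{Maps}(X, K(V,m))$ with $H^m(X; V) \simeq H^m_{\mathrm{dR}}(X)\otimes V$ --- the top route via Brown representability and Dold's splitting of the rational spectrum, the bottom route via the fundamental theorem of rational homotopy theory together with the ordinary de Rham theorem. On each factor these two identifications agree, by the classical compatibility of the de Rham theorem with the Eilenberg--MacLane representation of ordinary cohomology; and the monoidality of $\mathrm{CE}(-)$, together with the uniqueness of minimal Sullivan models, guarantees that the two indexings into factors coincide. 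Hence the square commutes and Dold's equivalence is recovered as claimed. The only genuine work is this last compatibility: one must check that the rationalization used to define Dold's equivalence agrees space-by-space with the $L_{\mathbb{R}}(-)$ appearing in Prop.\ \ref{NonAbelianDeRhamTheorem}, and that the fundamental theorem of rational homotopy theory intertwines the product decomposition of $L_{\mathbb{R}}E_n$ with the tensor decomposition of $\mathrm{CE}(\mathfrak{l}E_n)$; once this naturality is recorded, the remainder is a diagram chase.
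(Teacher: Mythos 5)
Your argument is correct and follows essentially the same route the paper takes (the paper itself defers the proof to \cite[Prop.\ 4.5]{FSS20c}, but the ingredients there and in the proofs of Prop.\ \ref{NonabelianDeRhamForStableCoefficients} and Prop.\ \ref{NonAbelianDeRhamTheorem} are exactly yours): the splitting of $L_{\mathbb{R}}E_n$ into Eilenberg--MacLane factors, dually the freeness of $\mathrm{CE}(\mathfrak{l}E_n)$ on the rationalized homotopy groups with vanishing differential, and factor-wise compatibility of Brown representability with the ordinary de Rham theorem. The only caveat, which the paper shares, is the tacit finite-type/nilpotency hypothesis needed to invoke Prop.\ \ref{NonAbelianDeRhamTheorem} for $A = E_n$, so no genuine gap relative to the paper's level of detail.
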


Therefore, we obtain the following generalization of the
Chern-Dold character \eqref{ChernDoldCharacter}:

\begin{defn}[Character map in non-abelian cohomology {\cite[Def. 4.3]{FSS20c}}]
 \label{ChernDoldCharacterInNonabelianCohomology}
Let $X$ be a smooth manifold and $A$ a nilpotent space
of finite rational type.
Then the \emph{non-abelian Chern-Dold character} on
non-abelian cohomology theory \eqref{NonAbelianCohomology}
represented by $A$
is the composite of
\\
{\bf (a)} the rationalization map \eqref{Rationalization}
on coefficients
\\
{\bf (b)} the non-abelian de Rham theorem \ref{NonAbelianDeRhamTheorem}:

\vspace{-8mm}
\begin{equation}
  \label{ChernDoldCharacterNonabelian}
\phantom{AA}
\overset{
    \mathclap{
    \raisebox{3pt}{
      \tiny
      \color{darkblue}
      \bf
      \begin{tabular}{c}
        non-abelian
        \\
        Chern-Dold
        character
      \end{tabular}
    }
    }
  }{
    \mathrm{ch}_A
  }
  \;\; \;\;\;\;\;\;  :\;\;
  \xymatrix@C=25pt{
    \underset{
      \mathclap{
      \raisebox{0pt}{
        \tiny
        \color{darkblue}
        \bf
        \begin{tabular}{c}
          non-abelian
          \\
          cohomology
          \\
          with coefficients in $A$
        \end{tabular}
      }
      }
    }{
      H(X;A)
    }
   \;
   :=
   \;
   \pi_0
   \,
    \mathrm{Maps}
   (
      X,\,
      A
  )
  \ar[rr]^-{
    \mbox{
      \tiny
      \color{darkblue}
      \bf
      \begin{tabular}{c}
        rationalization
        \\
        \phantom{A}
      \end{tabular}
    }
  }_-{
    \pi_0
    \,
    \mathrm{Maps}
    (
      X,\,
      \eta^{\mathbb{R}}_A
    )
  }
  &&
  \underset{
    \mathclap{
    \;\;\;\;\;\;\;
    \raisebox{0pt}{
      \tiny
      \color{darkblue}
      \bf
      \begin{tabular}{c}
        non-abelian cohomology
        \\
        with coefficients in $\mathfrak{l}A$
      \end{tabular}
    }
    }
  }{
  \pi_0
  \,
  \mathrm{Maps}
  (
    X,\,
    L_{\mathbb{R}}A
  )
  \;=:\;
  H
  (
    X;\,
    L_{\mathbb{R}}A
  )
  }
  \overset{
    \mathclap{
    \raisebox{+8pt}{
      \tiny
      \color{darkblue}
      \bf
      \begin{tabular}{c}
        non-abelian
        \\
        de Rham
        \\
        theorem
      \end{tabular}
    }
    }
  }{
    \;\;\simeq\;\;\;\;
  }
  \underset{
    \mathclap{
    \raisebox{-3pt}{
      \tiny
      \color{darkblue}
      \bf
      \begin{tabular}{c}
        non-abelian
        \\
        de Rham cohomology
        \\
        with coefficients in $\mathfrak{l}A$
      \end{tabular}
    }
    }
  }{
  H_{\mathrm{dR}}
  (
    X;\,
    \mathfrak{l}A
  )\;.
  }
  }
\end{equation}
\end{defn}

\noindent {\bf Character map in twisted non-abelian cohomology.}
The above constructions immediately generalize to
twisted
nonabelian cohomology \eqref{TwistedNonAbelianGeneralizedCohomology}
to yield the
twisted non-abelian Chern character cohomology operation:

\begin{defn}[Character in twisted non-abelian cohomology {\cite[Def. 5.4]{FSS20c}}]
The twisted non-abelian character map is the
non-abelian character (Def. \ref{ChernDoldCharacterInNonabelianCohomology})
in the slice over $B \mathrm{Aut}(A)$:
\vspace{-3mm}
\begin{equation}
  \label{ChernCharacterCohomologyOperation}
  \phantom{AA}
 \overset{
    \mathclap{
    \raisebox{3pt}{
      \tiny
      \color{greenii}
      \bf
      \begin{tabular}{c}
        twisted non-abelian
        \\
        Chern-Dold character
      \end{tabular}
    }
    }
  }{
    \mathrm{ch}^\tau_A
  }
  \;\;\; : \;
  \xymatrix@C=10pt{
    \underset{
      \mathclap{
      \raisebox{0pt}{
        \tiny
        \color{darkblue}
        \bf
        \begin{tabular}{c}
          $\tau$-twisted non-abelian
          \\
          cohomology
          \\
          with coefficients in $A$
        \end{tabular}
      }
      }
    }{
      H^\tau(X;A)
    }
      :=
   \pi_0
   \,
    \mathrm{Maps}_{{}_{B \mathrm{Aut}(A)}}
    (
      X,
      \,
      A
    )
  \ar[rr]^-{
    \mbox{
      \tiny
      \color{greenii}
      \bf
      \begin{tabular}{c}
        rationalization
        \\
        \phantom{A}
      \end{tabular}
    }
  }
  &&
  \underset{
    \mathclap{
    \;\;\;\;\;\;\;
    \raisebox{0pt}{
      \tiny
      \color{darkblue}
      \bf
      \begin{tabular}{c}
        $L_{\mathbb{R}}\tau$-twisted non-abelian cohomology
        \\
        with coefficients in $L_{\mathbb{R}}A$
      \end{tabular}
    }
    }
  }{
  \pi_0
  \,
  \mathrm{Maps}_{{}_{/L_{\mathbb{R}}B \mathrm{Aut}(A)}}
  (
    X,\,
    L_{\mathbb{R}}A
  )
  =:
  H^{L_{\mathbb{R}}\tau}
  \big(
    X;\,
    L_{\mathbb{R}}A
  \big)
  }
  \overset{
    \mathclap{
    \raisebox{+8pt}{
      \tiny
      \color{greenii}
      \bf
      \begin{tabular}{c}
        twisted non-abelian
        \\
        de Rham
        \\
        theorem
      \end{tabular}
    }
    }
  }{
    \;\;\simeq\;\;\;\;
  }
  \underset{
    \mathclap{
    \raisebox{-3pt}{
      \tiny
      \color{darkblue}
      \bf
      \begin{tabular}{c}
        $\mathfrak{l}\tau$-twisted non-abelian
        \\
        de Rham cohomology
        \\
        with coefficients in $\mathfrak{l}A$
      \end{tabular}
    }
    }
  }{
  H^{\tau_{\mathrm{dR}}}_{\mathrm{dR}}
  (
    X;\,
    \mathfrak{l}A
  )
  }
  }
\end{equation}
 \phantom{AAAA}
  $
  \left\{\!\!
  \raisebox{24pt}{
  \xymatrix@C=2pt{
    X
    \ar[dr]_-{\tau}
    \ar[rr]^-{c}
    &&
    A \sslash \mathrm{Aut}(A)
    \ar[dl]
    \\
    & B \mathrm{Aut}(A)
  }
  }
  \!\right\}_{\!\!\!\!\!\!\Big/\sim}
  $
  \phantom{AA}
  \xymatrix{
    \ar@{|->}[r]
    &
  }
  \phantom{AA}
  $
  \left\{\!\!
  \raisebox{28pt}{
  \xymatrix@C=-14pt{
    \Omega^\bullet_{\mathrm{dR}}(X)
    \ar@{<-}[dr]_-{\tau^\ast}
    \ar@{<-}[rr]^-{ A }
    &&
    \mathrm{CE}
    \Big(
      \mathfrak{l}_{{}_{\mathrm{B \mathrm{Aut}(A)}}}
      \big(A\sslash \mathrm{Aut}(A)\big)
    \Big)
    \\
    &
    \mathrm{CE}
    \Big(
      \mathfrak{l}
      \big(B \mathrm{Aut}(A)\big)
    \Big)
    \ar@{_{(}->}[ur]
  }
  }
  \! \right\}_{\!\!\!\!\!\!\Big/\sim}
  $
\end{defn}

\noindent
This means that the twisted character on $A$-cohomology is the
plain character on $A \sslash \mathrm{Aut}(A)$-cohomology fibered over
$B \mathrm{Aut}(A)$, hence is the fiberwise $A$-character on an
$A$-fiber $\infty$-bundle.
(The notation $\mathfrak{l}_{{}_{B}}(-)$ in \eqref{ChernCharacterCohomologyOperation}
denotes the {\it relative} Whitehead $L_\infty$-algebra
over a base $B$ \cite[Prop. 3.80]{FSS20c}, such that
$\mathrm{CE}\big( \mathfrak{l}_{B} (-) \big)$ denotes the
Sullivant minimal model {\it relative} to that of the base $B$
(\cite[Prop. 3.49]{FSS20c}),
thus ensuring that the domain on the right is still cofibrant in the co-sliced model structure,
as in \cite[proof of Prop. 3.115]{FSS20c}.)

\begin{remark}[Charge quantization by lift through character map]
  \label{ChargeQuantizationByLiftThroughChernCharacter}
  Just as for the traditional Chern character
  on K-theory (see \cite{GS-KO} for a detailed account),
  the Chern-Dold character \eqref{ChernDoldCharacter}
  is generally far from being surjective, and the
  same is true for its non-abelian \eqref{ChernDoldCharacterNonabelian}
  and its twisted non-abelian generalization
  \eqref{ChernCharacterCohomologyOperation}.

\noindent {\bf (i)}  The obstruction to lifting
  de Rham form data through the
  Chern-Dold character maps are
  \emph{integrality} conditions that disappear upon rationalization,
  hence are ``quantization'' conditions (in the original sense of
  Bohr-Sommerfeld quantization).

\noindent {\bf (ii)}
  Therefore,
  if any given differential form data lifts through the
  Chern-Dold character
  of some twisted non-abelian $A$-cohomology theory, we say that that it is
  \emph{quantized in A-theory}.

\noindent {\bf (iii)}
  In typical examples the
  differential forms in question are flux densities, encoding charges
  of physical fields, and hence we speak of
  \emph{charge-quantization in A-theory}.
  (For abelian cohomology this is discussed in \cite{Freed00}\cite{GS-RR}.)
\end{remark}

\medskip

\subsection{Twistorial Cohomotopy theory}
  \label{TwistorialCohomotopyTheory}

We now identify and study the twisted non-abelian cohomology theory whose
classifying space is the Borel-equivariant twistor fibration (Def. \ref{ParametrizedHopfTwistorFibration}).
The main result of this section is Theorem \ref{ShiftedIntegralityUnderTTheoreticChernCharacter},
which shows that charge-quantization (Remark \ref{ChargeQuantizationByLiftThroughChernCharacter})
in this \emph{Twistorial Cohomotopy} (Prop. \ref{ChernCharacterInJTwistedCohomotopy})
imposes a shifted integrality condition
\eqref{HeteroticShiftedFluxQuantizationInTTheory}
on Chern-Dold character forms (Corollary \ref{ShiftedIntegralityUnderTTheoreticChernCharacter})
matching that of \eqref{HeteroticFluxQuantizationFromTwistorialCohomotopy}.

\medskip

\noindent {\bf Tangential $\mathrm{Sp}(2)$-structure.}
Consider smooth spin 8-manifolds $X$ that are equipped with
tangential $\mathrm{Sp}(2)$-structure (e.g. \cite[4.48]{SS20b}),
hence with a homotopy-lift\footnote{
  All diagrams in the following are filled with such
  homotopies, but for ease of presentation we mostly suppress them, notationally.
} of
the classifying map of their tangent bundle to
the classifying space $B \mathrm{Sp}(2)$  of the
quaternionic unitary group (Def. \ref{QuaternionicGroups})
along its canonical inclusion $i_{\mathrm{Sp}}$ \eqref{CentralProductTriality}:

\vspace{-2mm}
\begin{equation}
  \label{The8Manifold}
  \xymatrix{
    \overset{
      \mathclap{
      \raisebox{3pt}{
        \tiny
        \color{darkblue}
        \bf
        8-manifold
      }
      }
    }{
      X
    }
    \ar[dr]
    \ar[rr]^-{
      \overset{
        \mathclap{
        \raisebox{3pt}{
          \tiny
          \color{greenii}
          \bf
          \begin{tabular}{c}
            tangential
            \\
            $\mathrm{Sp}(2)$-structure
          \end{tabular}
        }
        }
      }{
        \tau
      }
    }_-{\ }="s"
    \ar[dr]_-{
      \mathllap{
        \mbox{
          \tiny
          \color{greenii}
          \bf
          \begin{tabular}{c}
            classifying map
            \\
            of tangent bundle
          \end{tabular}
        }
        \;\;
      }
      T X
    }^>>>>>>>{\ }="t"
    &&
    B \mathrm{Sp}(2)
    \ar[dl]^-{ B i_{\mathrm{Sp}} }
    \\
    & B \mathrm{Spin}(8)
    \ar@{=>} "s"; "t"
  }
\end{equation}

  \vspace{-2mm}
\noindent In the intended applications, this spin 8-manifold \eqref{The8Manifold}
is one factor in an 11-dimensional spacetime
of the form $\mathbb{R}^{2,1} \times X$ (see \cite[\S 3]{FSS19b}).
We write $\omega$ for any affine connection on $T X$ (``spin connection'')
and write
\begin{equation}
  p_i(\omega)
  \;\in\;
  H^{2i}_{\mathrm{dR}}(X)
  \;\simeq\;
  H^{2i}(X; \mathbb{R})\;.
\end{equation}
for the induced Pontrjagin forms (e.g. \cite[p. 10]{GS-KO}).

\medskip

\noindent {\bf Associated twistor-space fibration.}
By Prop. \ref{EquivarianceOfCombinedHopfTwistorFibration},
a tangential $\mathrm{Sp}(2)$-structure \eqref{The8Manifold} induces,
via pullback of the parametrized Hopf/twistor fibration
from Def. \ref{ParametrizedHopfTwistorFibration},
an $S^4$-fibration $E$ and a $\mathbb{C}P^3$-fibration $\widetilde E$
over $X$, connected by a morphism of fibrations over $X$
which is fiberwise the plain twistor fibration $t_{\mathbb{H}}$ \eqref{HopfTwistorFactorization}:
  \vspace{-2mm}
\begin{equation}
  \label{AssociatedTwistorSpaceFibration}
  \hspace{-3cm}
  \raisebox{20pt}{
  \xymatrix@C=4.2em{
    \mathbb{C}P^3
    \ar[dr]^{ \!\!\!\!\!\!
      t_{\mathbb{H}}
      \!\!\!\!\!
      \mathrlap{
        \mbox{
          \tiny
          \color{darkblue}
          \bf
          \begin{tabular}{c}
            twistor
            \\
            fibration
          \end{tabular}
        }
      }
    }
    \ar[ddr]
   \; \ar@{^{(}->}[rr]
    &&
    \overset{
      \mathclap{
      \raisebox{3pt}{
        \tiny
        \color{darkblue}
        \bf
        \begin{tabular}{c}
          $\mathbb{C}P^3$-fibration
          \\
          over spacetime
        \end{tabular}
      }
      }
    }{
      \widetilde E
    }
   \; \ar@{^{(}->}[rr]
    \ar[ddr]|>>>>>>>>>>>>>>>>>>>{ \phantom{AA} }
    \ar[dr]
    &
    &
    \mathbb{C}P^3 \!\sslash\! \mathrm{Sp}(2)
    \mathrlap{
    \mbox{
      \tiny
      \color{darkblue}
      \bf
      \begin{tabular}{c}
        $\mathbb{C}P^3$-fibration over
        \\
        classifying space
      \end{tabular}
    }
    }
    \ar[ddr]|>>>>>>>>>>>>>>>>>>>>>>{ \phantom{AAAA} }
    \ar[dr]|-{
      \;\;\;\;\;\;\;\;
      t_{\mathbb{H}} \sslash \mathrm{Sp}(2)
      \mathrlap{
        \mbox{
          \tiny
          \color{darkblue}
          \bf
          \begin{tabular}{c}
            universal fiberwise
            \\
            twistor fibration
          \end{tabular}
        }
      }
    }
    \\
    &
    S^4
    \ar[d]
    \ar[rr]
    \ar@{}[drr]|-{\mbox{\tiny(pb)}}
    &
    &
    \overset{
      \mathrlap{
      \;\;\;\;\;
      \mbox{
        \tiny
        \color{darkblue}
        \bf
        \begin{tabular}{c}
          $S^4$-fibration
          \\
          over spacetime
        \end{tabular}
      }
      }
    }{
      E
    }
    \ar[rr]
    \ar[d]
    \ar@{}[drr]|-{ \mbox{\tiny(pb)} }
    &&
    S^4 \!\sslash\! \mathrm{Sp}(2)
    \mathrlap{
    \mbox{
      \tiny
      \color{darkblue}
      \bf
      \begin{tabular}{c}
        $S^4$-fibration over
        \\
        classifying space
      \end{tabular}
    }
    }
    \ar[d]
    \\
    &
    \{x\}
   \; \ar@{^{(}->}[rr]
    &
    &
    \underset{
      \mathllap{
        \mbox{
          \tiny
          \color{darkblue}
          \bf
          spacetime
        }
        \;\;\;\;\;
      }
    }{
      X
    }
    \ar[dr]_-{T X}
    \ar[rr]|-{\; \tau \;}_-{
      \mathclap{
      \mbox{
        \tiny
        \color{darkblue}
        \bf
        $\mathrm{Sp}(2)$-structure
      }
      }
    }
    &&
    \underset{
      \mathrlap{
        \;\;\;\;
        \mbox{
          \tiny
          \color{darkblue}
          \bf
          classifying space
        }
      }
    }{
      B \mathrm{Sp}(2)
    }
    \ar[dl]
    \\
    &&
    &
    &
    B \mathrm{Spin}(8)
  }
  }
\end{equation}

\noindent {\bf Twistorial Cohomotopy theory.}
A section $(c,a)$ of
the $\mathbb{C}P^3$-fibration $\widetilde E$ is
a cocycle in a twisted non-abelian cohomology theory
\eqref{TwistedNonAbelianGeneralizedCohomology}, which we call
\emph{Twistorial Cohomotopy theory} \footnote{
  Not to be confused with
  \emph{twistor cohomology} (see, e.g., \cite{EPW81}).
  The latter is abelian cohomology
  \emph{of} twistor space, while Twistorial Cohomotopy is
  non-abelian cohomology with
  coefficients \emph{in} (Borel-equivariantized) twistor space,
  hence with cocycles being
  maps \emph{into} twistor space.
} of $X$.
Notice that, as in \eqref{TwistedNonAbelianGeneralizedCohomology},
such a section is equivalently a lift of
the classifying map $\tau$ to the parametrized twistor space:

\vspace{-2mm}
$$
\hspace{1cm}
  \raisebox{20pt}{
  \xymatrix{
    \widetilde E
    \ar[rr]
    \ar[d]
    \ar@{}[drr]|-{\mbox{\tiny(pb)}}
    &&
    \mathbb{C}P^3 \!\sslash\! \mathrm{Sp}(2)
    \ar[d]
    \\
    X
    \ar@{-->}@/^1.3pc/[u]^-{
      \mathllap{
        \mbox{
          \tiny
          \color{greenii}
          \bf
          \begin{tabular}{c}
            section of
            \\
            $\tau$-associated
            \\
            twistor fibration
          \end{tabular}
        }
      }
      (c,a)
    }
    \ar[rr]|-{ \; \tau  \;}
    &&
    B \mathrm{Sp}(2)
  }
  }
  \phantom{AAA}
    \Leftrightarrow
  \phantom{AAAAAAAAA}
  \raisebox{20pt}{
  \xymatrix{
    &&
    \mathbb{C}P^3 \!\sslash\! \mathrm{Sp}(2)
    \ar[d]
    \\
    X
    \ar@{-->}[urr]^-{
      \mathllap{
        \mbox{
          \tiny
          \color{greenii}
          \bf
          \begin{tabular}{c}
            lift of $\tau$ to
            \\
            universally parametrized
            \\
            twistor space
          \end{tabular}
        }
      }
      (c,a)
    }
    \ar[rr]|-{\; \tau \;}
    &&
    B \mathrm{Sp}(2)
  }
  }
$$
We write
\begin{equation}
  \label{JTwistedTCohomology}
    \mathclap{
    \raisebox{0pt}{
      \tiny
      \color{darkblue}
      \bf
      {\begin{tabular}{c}
            Twistorial Cohomotopy
            \\
            of tangentially
            $\mathrm{Sp}(2)$-structured
            \\
            manifold
          \end{tabular}}
     }
     \phantom{AAAAAAAAAAAAAAA}
     }
          \mathcal{T}^\tau( X)
   \;\;:=\;\;
  \left\{\!\!\!\!
  \raisebox{20pt}{
  \xymatrix@C=4em@R=1.5em{
    &&
    \overset{
      \mathllap{
      \;\;\;
      \mbox{
        \tiny
        \color{darkblue}
        \bf
        \begin{tabular}{c}
        universally parametrized
        \\
        twistor space
        \end{tabular}
      }
      }
    }{
      \mathbb{C}P^3 \!\sslash\! \mathrm{Sp}(2)
    }
    \ar[d]
    \\
    X
    \ar@{-->}[urr]^-{
      \overset{
        \raisebox{2pt}{
          \tiny
          \color{orangeii}
          \bf
          cocycle
        }
      }{
        (c, \,a)
      }
    }
    \ar[rr]|-{\;\tau \;}_-{
      \underset{
        \mathclap{
        \raisebox{-3pt}{
          \tiny
          \color{greenii}
          \bf
          $\mathrm{Sp}(2)$-structure
        }
        }
      }{
      }
    }
    &&
    B \mathrm{Sp}(2)
  }
  }
 \;  \right\}_{\!\!\!\Big/\sim}
\end{equation}
for the set of homotopy classes (relative $X$) of such sections,
and call this the \emph{cohomology set of Twistorial Cohomotopy},
when evaluated on spin-8 manifolds with tangential
$\mathrm{Sp}(2)$-structure $\tau$ \eqref{The8Manifold}.

\medskip

\noindent {\bf Twistor fibration as cohomology operation.}
Notice the direct analogy of Twistorial Cohomotopy theory \eqref{JTwistedTCohomology}
to J-twisted Cohomotopy theory \cite[2.1]{FSS19b}:
\vspace{-2mm}
\begin{equation}
  \label{JTwistedCohomotopy}
      \mathclap{
    \raisebox{0pt}{
      \tiny
      \color{darkblue}
      \bf
      {\begin{tabular}{c}
        J-twisted
                4-Cohomotopy
          \\
                of tangentially
            \\
            $\mathrm{Sp}(2)$-structured
            \\
            manifold
          \end{tabular}}
}
      \phantom{AAAAAAAAAAAAAAA}
      }
          \pi^\tau(X)
   \;\;:=\;\;
  \left\{\!\!\!\!
  \raisebox{20pt}{
  \xymatrix@C=4em@R=1.5em{
    &&
    \overset{
      \mathllap{
      \;\;\;
      \mbox{
        \tiny
        \color{darkblue}
        \bf
        \begin{tabular}{c}
        universally parametrized
        \\
        4-sphere
        \end{tabular}
      }
      }
    }{
      S^4 \!\sslash\! \mathrm{Sp}(2)
    }
    \ar[d]
    \\
    X
    \ar@{-->}[urr]^-{
      \overset{
        \raisebox{2pt}{
          \tiny
          \color{orangeii}
          \bf
          cocycle
        }
      }{
        c
      }
    }
    \ar[rr]|-{\;\tau \;}_-{
      \underset{
        \mathclap{
        \raisebox{-3pt}{
          \tiny
          \color{greenii}
          \bf
          $\mathrm{Sp}(2)$-structure
        }
        }
      }{
      }
    }
    &&
    B \mathrm{Sp}(2)
  }
  }
 \;\; \right\}_{\!\!\!\Big/\sim}
\end{equation}

\vspace{-2mm}
\noindent
and the fact that postcomposition with the parametrized twistor fibration
(Def. \ref{ParametrizedHopfTwistorFibration})
constitutes a cohomology operation (a natural transformation of
cohomology sets) between the two:
\vspace{-2mm}
\begin{equation}
  \label{CohomologyOperationInducedByTwistorFibation}
  \xymatrix@C=3em{
    \overset{
      \mathclap{
      \raisebox{3pt}{
        \tiny
        \color{darkblue}
        \bf
        \begin{tabular}{c}
          Twistorial
          \\
          Cohomotopy
        \end{tabular}
      }
      }
    }{
      \mathcal{T}^\tau
    }
  \; \;\; \ar[rrrr]_{\scalebox{.7}{$
      \big(
        t_{\mathbb{G}} \sslash \mathrm{Sp}(2)
      \big)_\ast
      $}
      }^{
         \raisebox{3pt}{
          \tiny
          \color{greenii}
          \bf
          \begin{tabular}{c}
            cohomology operation by
            \\
            parametrized
            twistor fibration
          \end{tabular}
        }
      }
       &&&&
 \;\;\;\;\;   \overset{
      \mathclap{
      \raisebox{3pt}{
        \tiny
        \color{darkblue}
        \bf
        \begin{tabular}{c}
          J-twisted
          \\
          Cohomotopy
        \end{tabular}
      }
      }
    }{
      \pi^\tau
    }
  }
\end{equation}

\noindent {\bf Chern-Dold character in Twistorial Cohomotopy.}
The Chern-Dold character \eqref{ChernCharacterCohomologyOperation}
in J-twisted 4-Cohomotopy \eqref{JTwistedCohomotopy} is discussed in some detail
\cite{FSS19b}. The following Prop. \ref{ChernCharacterInJTwistedCohomotopy}
is its generalization to Twistorial Cohomotopy:

\begin{prop}[Character map in Twistorial Cohomotopy theory]
  \label{ChernCharacterInJTwistedCohomotopy}
The twisted non-abelian character \eqref{ChernCharacterCohomologyOperation}
in Twistorial Cohomotopy \eqref{JTwistedTCohomology} is of the following form:
\vspace{-2mm}
$$
\hspace{3mm}
  \xymatrix{
    \overset{
      \mathclap{
      \raisebox{3pt}{
        \tiny
        \color{darkblue}
        \bf
        \begin{tabular}{c}
          Twistorial
          \\
          Cohomology
        \end{tabular}
      }
      }
    }{
      \mathcal{T}^\tau(X)
    }
    \;\;\;\;\;
    \ar[rr]^-{
      \mathrm{ch}_{\mathcal T}
    }_-{
      \mbox{
        \tiny
        \color{darkblue}
        \bf
        \begin{tabular}{c}
          twisted non-abelian
          \\
          Chern-Dold character
        \end{tabular}
      }
    }
    &&
        \left\{
    \!\!
    {\begin{array}{l}
      F_2,
      \\
      H_3,
      \\
      G_4,
      \\
      G_7
    \end{array}}
    \!\!
    \in
    \Omega^\bullet(X)
    \left\vert\;
    {\begin{aligned}
      d\,F_2 & = 0
      \\
      d\, H_3 & = G_4 - \tfrac{1}{4}p_1(\omega) - F_2 \wedge F_2
      \\
      d\, G_4 & = 0
      \\
      d \, G_7 & =
      -
      \tfrac{1}{2}
      \big(
        G_4 - \tfrac{1}{4}p_1(\omega)
      \big)
      \wedge
      \big(
        G_4 + \tfrac{1}{4}p_1(\omega)
      \big)
      -
      \tfrac{1}{4}
      \big(
        p_2 - (\tfrac{1}{2}p_1(\omega))^2
      \big)
    \end{aligned}}
    \right.
    \!\!
    \right\}_{\!\!\!\!\!\!\Big/\sim}
  }
$$
\vspace{-.6cm}
\begin{equation}
  \label{TwistorialChernCharacter}
  \hspace{-12.2cm}
  \xymatrix{
    {\phantom{AAAAAAAA}} (c,a) \;\;\;\;\;\;\;
    \ar@{|->}[rr]
    &&
    {\phantom{A}}
    (c,a)^\ast
    \left(
    \!\!\!
    {\begin{array}{c}
      f_2,
      \\
      h_3,
      \\
      \omega_4,
      \\
      \omega_7
    \end{array}}
    \!\!\!
    \right)
  }
\end{equation}
\end{prop}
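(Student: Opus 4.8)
The plan is to obtain \eqref{TwistorialChernCharacter} by unwinding the definition of the twisted non-abelian Chern--Dold character \eqref{ChernCharacterCohomologyOperation} applied to the coefficient fibration $\mathbb{C}P^3 \sslash \mathrm{Sp}(2) \to B\mathrm{Sp}(2)$, and then substituting the relative minimal Sullivan model computed in Theorem \ref{SullivanModelOfParametrizedHopfTwistorFibrations}. By that definition, a Twistorial-Cohomotopy cocycle $(c,a)$ on $X$ --- equivalently, by \eqref{JTwistedTCohomology}, a lift of the tangential $\mathrm{Sp}(2)$-structure $\tau$ through $\mathbb{C}P^3 \sslash \mathrm{Sp}(2)$ --- is first rationalized and then, via the (twisted) non-abelian de Rham theorem (Prop. \ref{NonAbelianDeRhamTheorem}), is represented by a dg-algebra homomorphism \[ (c,a)^\ast \,:\, \mathrm{CE}\big( \mathfrak{l}\big( \mathbb{C}P^3 \sslash \mathrm{Sp}(2) \big) \big) \longrightarrow \Omega^\bullet_{\mathrm{dR}}(X) \] over $\mathrm{CE}\big( \mathfrak{l} B\mathrm{Sp}(2) \big)$, two such being identified iff they are dg-homotopic.

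First I would make the slice condition concrete. By Lemma \ref{SullivanModelForBSp2} the Sullivan model $\mathrm{CE}(\mathfrak{l} B\mathrm{Sp}(2)) = \mathbb{R}[\tfrac12 p_1, \rchi_8]$ has vanishing differential, so --- using Prop. \ref{NonAbelianDeRhamTheorem} for the base --- the bottom leg of the lift, namely the composite $X \xrightarrow{\tau} B\mathrm{Sp}(2) \to L_{\mathbb{R}} B\mathrm{Sp}(2)$, is precisely the dg-map sending the generators $\tfrac12 p_1$ and $\rchi_8$ to closed differential forms representing $\tfrac12 p_1(\omega)$ and $\rchi_8(\omega)$ for any chosen connection $\omega$ on $TX$. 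Hence a rationalized Twistorial-Cohomotopy cocycle is exactly such a dg-homomorphism on the \emph{relative} model whose restriction to the basic subalgebra is fixed to be this Chern--Weil map.

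Then I would read off the system of equations by feeding the explicit model \eqref{SullivanModelForParametrizedHopfTwistorFibrations} into this description. Setting $F_2 := (c,a)^\ast f_2$, $H_3 := (c,a)^\ast h_3$, $G_4 := (c,a)^\ast \omega_4$, and $G_7 := \tfrac12 (c,a)^\ast \omega_7$ (the factor $\tfrac12$ being the normalization matching \eqref{ShiftedFluxQuantizationFromChernCharacterInJTwistedCohomotopy}), the chain-map property applied to the structure relations $d f_2 = 0$, $d h_3 = \omega_4 - \tfrac14 p_1 - f_2 \wedge f_2$, $d\omega_4 = 0$, $d\omega_7 = -\omega_4 \wedge \omega_4 + (\tfrac14 p_1)^2 - \rchi_8$ becomes precisely the displayed differential equations, after substituting the relation \eqref{chi8RelationOnBSp2} to rewrite $\rchi_8(\omega)$ as $\tfrac12\big( p_2(\omega) - (\tfrac12 p_1(\omega))^2 \big)$ and factoring $- G_4 \wedge G_4 + (\tfrac14 p_1(\omega))^2 = -\big( G_4 - \tfrac14 p_1(\omega)\big) \wedge \big( G_4 + \tfrac14 p_1(\omega) \big)$. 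The quotient by homotopy classes of sections on the left of the character map corresponds, again by Prop. \ref{NonAbelianDeRhamTheorem}, to the quotient by dg-homotopies on the right, which is the stated equivalence relation on the quadruple $(F_2, H_3, G_4, G_7)$; naturality in the slice over $B\mathrm{Sp}(2)$ ensures the $\tau$-dependent terms stay fixed throughout.

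I do not expect any step to present a genuine obstacle: the proposition is essentially the transport of Theorem \ref{SullivanModelOfParametrizedHopfTwistorFibrations} through Def. \ref{ChernDoldCharacterInNonabelianCohomology} and its twisted version \eqref{ChernCharacterCohomologyOperation}. The only points that need care are bookkeeping ones --- (a) verifying that the twist genuinely forces the basic generators of the relative model onto the Chern--Weil forms of the spin connection rather than letting them vary, and (b) fixing the overall normalization constant relating $\omega_7$ and $G_7$ so that the $G_7$-equation takes the symmetric factored form shown.
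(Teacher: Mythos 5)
Your proposal is correct and follows essentially the same route as the paper's proof: represent a rationalized section by a dg-algebra homomorphism out of the relative Sullivan model of Theorem \ref{SullivanModelOfParametrizedHopfTwistorFibrations} over $\mathrm{CE}(\mathfrak{l}B\mathrm{Sp}(2))$ (whose generators are pinned to the Chern--Weil forms $\tfrac12 p_1(\omega)$, $\rchi_8(\omega)$), read off the differential relations on the pulled-back generators, and rewrite $\rchi_8$ via \eqref{chi8RelationOnBSp2}. Your normalization $G_7 := \tfrac12 (c,a)^\ast\omega_7$ is exactly the paper's assignment $2G_7 \mapsfrom \omega_7$, and the only point the paper makes slightly more explicit is that cofibrancy of the Sullivan model guarantees every homotopy class of rational sections is represented by such a dg-homomorphism, which you use implicitly.
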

\begin{proof}
By Theorem \ref{SullivanModelOfParametrizedHopfTwistorFibrations}
the class of a section of the parametrized twistor fibration
in rational homotopy theory is given
equivalently by a dg-algebra homomorphism
shown as the dashed arrow in the following diagram:
\vspace{-2mm}
\begin{equation}
  \label{SullivanModelOfSectionOfParametrizedTwistorSpaceFibration}
  \hspace{-1mm}
  \xymatrix@C=3pt{
    &&
    \mathbb{C}P^3 \!\sslash\! \mathrm{Sp}(2)
    \ar[dd]^-{ t_{\mathbb{H}} \sslash \mathrm{Sp}(2) }
    &&
    &&
    \mathrm{CE}\big(
      \mathfrak{l}
      B \mathrm{Sp}(2)
    \big)
    \!\!
    \left[
      \!\!\!
      {\begin{array}{c}
        f_2,
        \\
        h_3,
        \\
        \omega_4,
        \\
        \omega_7
      \end{array}}
      \!\!\!
    \right]
    \!\!\Big/\!\!
    \left(
    \!\!
    {\begin{aligned}
      d\, f_2 & = 0
      \\
      d \, h_3 & =
      \omega_4  - \tfrac{1}{4}p_1 - f_2 \wedge f_2
      \\
      d\,\omega_4 & = 0
      \\
      d\,\omega_7
        &
        =
        -\big(
          \omega_4 - \tfrac{1}{4}p_1
        \big)
        \wedge
        \big(
          \omega_4 + \tfrac{1}{4}p_1
        \big)
        \\
        &
        \phantom{=} \; -
        \tfrac{1}{2}
        \big(
          p_2 - (\tfrac{1}{2}p_1)^2
        \big)
    \end{aligned}}
    \!\!\!
    \right)
    \ar@{<-}[dd]|-{
      \scalebox{.7}{$
        \arraycolsep=3pt
        \begin{array}{cc}
          \omega_4 & \omega_7
          \\
          \mapsup & \mapsup
          \\
          \omega_4 & \omega_7
        \end{array}
      $}
    }
    \ar@{-->}[ddll]|-{
      \overset{
        \raisebox{3pt}{
          \tiny
          \color{orangeii}
          \bf
          \begin{tabular}{c}
            rational dg-model
            \\
            for cocycle in
            \\
            twistorial Cohomotopy
          \end{tabular}
        }
      }{
      \scalebox{.7}{$
        \color{darkblue}
        \begin{aligned}
          F_2 & \mapsfrom f_2
          \\
          H_3 & \mapsfrom h_3
          \\
          G_4 & \mapsfrom \omega_4
          \\
          2G_7 & \mapsfrom \omega_7
          \\
          \phantom{A}
              \\
          \phantom{A}
              \\
          \phantom{A}
              \\
          \phantom{A}
        \end{aligned}
      $}
      }
    }
    \\
    \\
    X
    \ar@{-->}[rr]^-{
             \raisebox{3pt}{
          \tiny
          \color{greenii}
          \bf
          \begin{tabular}{c}
            cocycle in
            \\
            twisted Cohomotopy
          \end{tabular}
        }
       }_-{
       c
      }
    \ar@{-->}[uurr]^-{
      \overset{
        \mathllap{
          \raisebox{3pt}{
            \tiny
            \color{orangeii}
            \bf
            \begin{tabular}{c}
              cocycle in
              \\
              twistorial Cohomotopy
            \end{tabular}
          }
          \;\;\;\;\;\;
        }
      }{
        (c,a)
      }
    }
    \ar[dr]^-{ \; \tau \; }
    \ar[d]_-{ T X }
    &&
    S^4 \!\sslash\! \mathrm{Sp}(2)
    \ar[dl]
    &&
    \Omega^\bullet(X)
    \ar@{<-}[rr]
    \ar@{<-}[dr]_-{
         \scalebox{.6}{$
        \arraycolsep=3pt
        \begin{array}{cc}
             \tfrac{1}{2}p_1(\omega)
          &
          \rchi_8(\omega)
          \\
          \mapsup & \mapsup
          \\
          \tfrac{1}{2}p_1
          &
          \rchi_8
             \end{array}
      $}
    }
        &&
   \mathrm{CE}\big(
      \mathfrak{l}
      B \mathrm{Sp}(2)
    \big)
    \!\!
    \left[
      \!\!\!
      {\begin{array}{c}
        \omega_4,
        \\
        \omega_7
      \end{array}}
      \!\!\!
    \right]
    \!\!\Big/\!\!
    \left(
    \!
    {\begin{aligned}
      d\,\omega_4 & = 0
      \\
      d\,\omega_7
        &
        =
        -\big(
          \omega_4 - \tfrac{1}{4}p_1
        \big)
        \wedge
        \big(
          \omega_4 + \tfrac{1}{4}p_1
        \big)
        \\
        &
        \phantom{=} \; -
        \tfrac{1}{2}
        \big(
          p_2 - (\tfrac{1}{2}p_1)^2
        \big)
    \end{aligned}}
    \!\!\!
    \right)
    \ar@{<-_{)}}[dl]
    \\
    B \mathrm{Spin}(8)
    \ar@{}[r]|{\;\;\leftarrow}
    &
  B \mathrm{Sp}(2)
    &
    &&
    &
   \mathclap{
    \mathrm{CE}
    \big(
      \mathfrak{l}
      B \mathrm{Sp}(2)
    \big)
    }
    {\phantom{\vert^{\vert}\vert^{\vert}\vert^{\vert}\vert^{\vert}\vert^{\vert}}}
  }
\end{equation}
Here the dg-algebras on the right are the Sullivan model for the
Borel-equivariant twistor fibration \eqref{SullivanModelForParametrizedHopfTwistorFibrations}
from Theorem \ref{SullivanModelOfParametrizedHopfTwistorFibrations}.
These being Sullivan models means that they are cofibrant as
dg-algebras, which implies that all homotopy classes of rational
sections are indeed represented this way. Therefore, a rational section
is specified by the differential forms on $X$ to which it pulls back
the generators on the right. The condition for any such set of
differential forms to arise this way is that it satisfies
the same differential relations as the generators, now in the de Rham
dg-algebra $\Omega^\bullet(X)$. This way the relation $d f_2 = 0$
in the Sullivan model pulls back to the relation
$d F_2 = 0$ in $\Omega^\bullet(X)$ in \eqref{SullivanModelOfSectionOfParametrizedTwistorSpaceFibration},
etc.
\hfill \end{proof}

For use below, we record the de Rham-cohomological relations
implied by the differential relations \eqref{TwistorialChernCharacter}:
\begin{cor}[Cohomological relations in Twistorial Cohomotopy]
\label{CohomologicalRelationInTwistorialCohomotopy}
For $X$ an 8-manifold with tangential $\mathrm{Sp}(2)$-structure \eqref{The8Manifold},
let $F_2, H_3, G_4, G_7 \in \Omega^\bullet(X)$ be differential form components
in the image of the Chern-Dold character in Twistorial Cohomotopy on $X$
(Def. \ref{ChernCharacterInJTwistedCohomotopy}). Then the real/de Rham
cohomology classes these represent satisfy the following relations:
\vspace{-3mm}
\begin{align}
  \label{CohomologyRelationInDegree4}
  [G_4]
  -
  \tfrac{1}{4}p_1
  \;=\;
  [F_2 \wedge F_2]
  \;\;
  &
  \in \;\;
  H^4
  (
    X, \mathbb{R}
  )\;, \phantom{AAAA}
  \\
  \label{CohomologyRelationInDegree8}
  0
  \;=\;
  \big(
    [F_2 \wedge F_2]
    +
    \tfrac{1}{2}p_1
  \big)
  \cup
  [F_2 \wedge F_2]
  \;+\;
  \tfrac{1}{2}
  \big(
    p_2  - \tfrac{1}{4} p_1 \cup p_1
  \big)
  \;\;
  &
  \in \;\;
  H^8
  (
    X, \mathbb{R}
  )\;. \phantom{AAAA}
\end{align}
\end{cor}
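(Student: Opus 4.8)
The plan is to read both cohomological relations directly off the differential equations satisfied by the character forms in Proposition~\ref{ChernCharacterInJTwistedCohomotopy}, using only the two elementary facts that an exact differential form represents the zero de Rham class and that the Pontrjagin forms $\tfrac14 p_1(\omega)$ and $p_2(\omega)$ represent the real characteristic classes $\tfrac14 p_1$ and $p_2$.

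First I would establish \eqref{CohomologyRelationInDegree4}. By Proposition~\ref{ChernCharacterInJTwistedCohomotopy} we have $d F_2 = 0$, $d G_4 = 0$, and $p_1(\omega)$ is closed, so $G_4 - \tfrac14 p_1(\omega) - F_2\wedge F_2$ is a closed $4$-form; the relation $d H_3 = G_4 - \tfrac14 p_1(\omega) - F_2 \wedge F_2$ exhibits it as exact, hence it represents $0$ in $H^4(X;\mathbb{R})$. Rearranging and using $[F_2\wedge F_2] = [F_2]\cup[F_2]$ gives $[G_4] - \tfrac14 p_1 = [F_2\wedge F_2]$, which is \eqref{CohomologyRelationInDegree4}.

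Next I would derive \eqref{CohomologyRelationInDegree8}. The last relation of Proposition~\ref{ChernCharacterInJTwistedCohomotopy}, $d G_7 = -\tfrac12\big(G_4 - \tfrac14 p_1(\omega)\big)\wedge\big(G_4 + \tfrac14 p_1(\omega)\big) - \tfrac14\big(p_2(\omega) - (\tfrac12 p_1(\omega))^2\big)$, shows that its right-hand side is exact, so its class in $H^8(X;\mathbb{R})$ vanishes: $0 = \tfrac12\big([G_4]-\tfrac14 p_1\big)\cup\big([G_4]+\tfrac14 p_1\big) + \tfrac14\big(p_2 - \tfrac14 p_1\cup p_1\big)$. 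Substituting $[G_4]-\tfrac14 p_1 = [F_2\wedge F_2]$ from the previous step, so that $[G_4]+\tfrac14 p_1 = [F_2\wedge F_2] + \tfrac12 p_1$, and clearing the overall factor $\tfrac12$ gives $0 = [F_2\wedge F_2]\cup\big([F_2\wedge F_2] + \tfrac12 p_1\big) + \tfrac12\big(p_2 - \tfrac14 p_1\cup p_1\big)$, which is \eqref{CohomologyRelationInDegree8} once graded-commutativity of the cup product in even degree is used to reorder the first summand.

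The only point that needs a moment's care — and thus the "main obstacle", modest as it is — is the bookkeeping of the factor $\tfrac12$: it enters because the Chern-Dold character of Proposition~\ref{ChernCharacterInJTwistedCohomotopy} sends the Sullivan generator $\omega_7$ of Theorem~\ref{SullivanModelOfParametrizedHopfTwistorFibrations} to $2 G_7$ rather than to $G_7$, and because $\rchi_8 = \tfrac12\big(p_2 - (\tfrac12 p_1)^2\big)$ by \eqref{chi8RelationOnBSp2}. Once these normalizations are tracked, both relations fall out with no further input; no closedness or exactness beyond what is already recorded in Proposition~\ref{ChernCharacterInJTwistedCohomotopy} is required, and the $8$-manifold hypothesis is not actually used in the argument (it merely ensures these are the top relevant degrees).
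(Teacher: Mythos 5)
Your proposal is correct and follows essentially the same route as the paper's own proof: read off the degree-4 relation from the exactness of $d H_3$, read off a degree-8 relation from the exactness of $d G_7$, and substitute the former (together with the $\rchi_8$ normalization \eqref{chi8RelationOnBSp2}) into the latter. The only cosmetic difference is that the paper records the intermediate degree-8 identity in the form $-[G_4\wedge G_4]+\tfrac{1}{16}p_1\cup p_1-\rchi_8=0$ before substituting, while you substitute directly; the content is identical.
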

\begin{proof}
Equation \eqref{CohomologyRelationInDegree4} is the direct consequence of the second line in
\eqref{TwistorialChernCharacter}.
From the fourth line of \eqref{TwistorialChernCharacter} we similarly get the
relation
\begin{equation}
  \label{Degree8CohomologyRelation}
  - [G_4 \wedge G_4]
  + \tfrac{1}{16} p_1 \cup p_1
  - \rchi_8
  \;= 0\;
\end{equation}
Plugging \eqref{CohomologyRelationInDegree4}
and \eqref{chi8RelationOnBSp2} into \eqref{Degree8CohomologyRelation}
yields \eqref{CohomologyRelationInDegree8}.
\hfill \end{proof}

\noindent {\bf Charge quantization in Twistorial Cohomotopy.}
Finally we obtain the claimed result \eqref{HeteroticFluxQuantizationFromTwistorialCohomotopy}:
\begin{cor}[Shifted integrality of $G_4$, $F_2$ in Twistorial Cohomotopy]
  \label{ShiftedIntegralityUnderTTheoreticChernCharacter}
  Let $X$ be a spin 8-manifold with tangential
  $\mathrm{Sp}(2)$-structure $\tau$ \eqref{The8Manifold}.
  Then differential form data $(F_2, H_3, G_4, G_7) \in \Omega^\bullet(X)$
  which is in the image \eqref{TwistorialChernCharacter}
  of the Chern-Dold character
  from Prop. \ref{ChernCharacterInJTwistedCohomotopy}, hence which is
  charge-quantized (Remark \ref{ChargeQuantizationByLiftThroughChernCharacter})
  in Twistorial Cohomotopy \eqref{JTwistedTCohomology},
  satisfies the following integrality conditions:

  \noindent
  {\bf (i)} The class of $G_4$ shifted by $\tfrac{1}{4}p_1(\omega)$ is integral,
  hence is the image in real cohomology of a class in integral cohomology:
  \vspace{-2mm}
  \begin{equation}
    \label{ShiftedQuantizationConditionOnG4InJTwistedTTheory}
    [G_4 + \tfrac{1}{4}p_1(\omega)]
    \;\in\;
    \xymatrix{
      H^4
      \big(
        X, \mathbb{Z}
      \big)
      \ar[r]
      &
      H^4
      (
        X, \mathbb{R}
      )
    }.
  \end{equation}

  \noindent {\bf (ii)} The class of $F_2$ is integral:
  \begin{equation}
    [F_2]
    \;\in\;
    \xymatrix{
      H^2
      (
        X, \mathbb{Z}
      )
      \ar[r]
      &
      H^2
      (
        X, \mathbb{R}
      )\;.
    }
  \end{equation}

  \noindent {\bf (iii)} Hence the relation \eqref{CohomologyRelationInDegree4}
  is the image
  of such a relation in integral cohomology:
  \begin{equation}
    \label{HeteroticShiftedFluxQuantizationInTTheory}
    [
      G_4
      -
      \tfrac{1}{4} p_1(\omega)
    ]
    \;=\;
    [F_2 \wedge F_2]
    \;
    \in
    \xymatrix{
      H^4
      \big(
        X, \mathbb{Z}
      \big)
      \ar[r]
      &
      H^4
      (
        X, \mathbb{R}
      )\;.
    }
  \end{equation}
\end{cor}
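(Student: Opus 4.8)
The plan is to trace the free integral-cohomology generators of the Borel-equivariant twistor space back along the cocycle. By Remark~\ref{ChargeQuantizationByLiftThroughChernCharacter} and the definition of the character map in Prop.~\ref{ChernCharacterInJTwistedCohomotopy}, a form quadruple $(F_2, H_3, G_4, G_7)$ in the image of $\mathrm{ch}_{\mathcal T}$ is, by construction, the non-abelian Chern--Dold character of a genuine (not merely rational) Twistorial Cohomotopy cocycle, i.e.\ of an actual lift $(c,a)\colon X \to \mathbb{C}P^3 \sslash \mathrm{Sp}(2)$ of the tangential structure $\tau$ through the Borel-equivariant twistor space. First I would fix such a lift and pull back along it the two free integral generators $c_1^R \in H^2$ and $c_2^L \in H^4$ of $H^\bullet\big(\mathbb{C}P^3 \sslash \mathrm{Sp}(2);\mathbb{Z}\big) \simeq \mathbb{Z}[c_2^L, c_1^R]$ provided by Theorem~\ref{CohomologyOfBorelEquivariantTwistorFibration}(ii). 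This manufactures, on the nose, integral classes $(c,a)^\ast c_1^R \in H^2(X;\mathbb{Z})$ and $(c,a)^\ast c_2^L \in H^4(X;\mathbb{Z})$; the remaining task is to identify their real images with the asserted combinations of $[F_2]$, $[G_4]$ and $[p_1(\omega)]$.

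For that identification I would invoke the normalization of the Sullivan-model generators fixed in Theorem~\ref{SullivanModelOfParametrizedHopfTwistorFibrations} (and read off of Figure~I): rationally $[f_2] = c_1^R$ and $[\omega_4] = \tfrac{1}{2}\rchi_4$ by \eqref{IdentiftingRationlGeneratorsOnSP3ModSp2}, while the integral refinements are $\widetilde\Gamma_4 = \tfrac{1}{2}\rchi_4 + \tfrac{1}{4}p_1 = c_2^L$ — so that $\tfrac{1}{4}p_1$, though not integral alone, shifts $\tfrac{1}{2}\rchi_4$ to an integral class — and $\widetilde\Gamma_4^{\mathrm{vac}} = \tfrac{1}{2}p_1$, which on $S^4 \sslash \mathrm{Sp}(2) \simeq B\mathrm{Spin}(5)$ is the integral spin characteristic class $\lambda$. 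Since $(c,a)$ covers $\tau$, one has $(c,a)^\ast p_1 = \tau^\ast p_1 = [p_1(\omega)]$ in $H^4(X;\mathbb{R})$. From $F_2 \mapsfrom f_2$ and $G_4 \mapsfrom \omega_4$ in \eqref{TwistorialChernCharacter} it then follows that $[F_2] = (c,a)^\ast c_1^R$ and $[G_4] + \tfrac{1}{4}[p_1(\omega)] = (c,a)^\ast\big(\tfrac{1}{2}\rchi_4 + \tfrac{1}{4}p_1\big) = (c,a)^\ast c_2^L$, which are precisely claims (ii) and (i).

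For claim (iii), I would compose $(c,a)$ with the Borel-equivariant twistor fibration to recover the underlying J-twisted Cohomotopy cocycle $c := \big(t_{\mathbb{H}} \sslash \mathrm{Sp}(2)\big)\circ (c,a)$, and then push the integral identity of \eqref{PullbackInIntegralCohomologyAlongEquivariantTwistorFibration} forward along $(c,a)^\ast$. That identity says $\big(t_{\mathbb{H}} \sslash \mathrm{Sp}(2)\big)^\ast\big(\widetilde\Gamma_4 - \widetilde\Gamma_4^{\mathrm{vac}}\big) = c_1^R \cup c_1^R$, equivalently $\tfrac{1}{2}p_1 = c_2^L - c_1^R\cup c_1^R$ in $H^4\big(\mathbb{C}P^3 \sslash \mathrm{Sp}(2);\mathbb{Z}\big)$. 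Applying $(c,a)^\ast$ and using $(c,a)^\ast c_1^R = [F_2]$ together with $(c,a)^\ast\big(\tfrac{1}{2}p_1\big) = \tfrac{1}{2}p_1(TX) = \lambda(TX)$ (here the spin hypothesis on $X$ is used, so that $\tfrac{1}{2}p_1(TX)$ lifts integrally) gives $(c,a)^\ast c_2^L - \lambda(TX) = [F_2]\cup[F_2]$ in $H^4(X;\mathbb{Z})$; substituting $(c,a)^\ast c_2^L = [G_4 + \tfrac{1}{4}p_1(\omega)]$ from step two yields $[G_4 - \tfrac{1}{4}p_1(\omega)] = [F_2 \wedge F_2]$ in $H^4(X;\mathbb{Z})$, refining the de Rham relation \eqref{CohomologyRelationInDegree4}.

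\textbf{Main obstacle.} The argument above is essentially a diagram chase once the bookkeeping is set up, so the one step deserving genuine care is ensuring the compatibility of the three identifications: that the rational generator $\omega_4$ which the Chern--Dold character literally records is the rationalization of \emph{the} integral class entering $\widetilde\Gamma_4 = c_2^L$ (not some other lift), i.e.\ that the normalization chosen in Theorem~\ref{SullivanModelOfParametrizedHopfTwistorFibrations} is exactly the one pinned down integrally in Theorem~\ref{CohomologyOfBorelEquivariantTwistorFibration} and transported through $\big(t_{\mathbb{H}} \sslash \mathrm{Sp}(2)\big)^\ast$ as in \eqref{PullbackInIntegralCohomologyAlongEquivariantTwistorFibration}. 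Once this matching is in place, no further computation is needed.
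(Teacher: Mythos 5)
Your proposal is correct and takes essentially the same route as the paper's own (much terser) proof: there, too, the point is that by Prop.~\ref{ChernCharacterInJTwistedCohomotopy} the classes $[F_2]$, $[G_4]$ are pullbacks along the genuine cocycle of the rational generators normalized in Theorem~\ref{SullivanModelOfParametrizedHopfTwistorFibrations} via \eqref{IdentiftingRationlGeneratorsOnSP3ModSp2}, so that the integral generators $c_1^R$, $c_2^L=\widetilde\Gamma_4$ and the integral relation \eqref{PullbackInIntegralCohomologyAlongEquivariantTwistorFibration} of Theorem~\ref{CohomologyOfBorelEquivariantTwistorFibration} pull back to give (i)--(iii). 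Your elaboration, including the identification of the "main obstacle" as the matching of the rational normalization with the integral generators, is exactly the content the paper compresses into its three-line argument.
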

\begin{proof}
  By Prop. \ref{ChernCharacterInJTwistedCohomotopy}
  these de Rham classes are pullbacks of the generators
  in the Sullivan model from Theorem \ref{SullivanModelOfParametrizedHopfTwistorFibrations}.
  By the normalization \eqref{IdentiftingRationlGeneratorsOnSP3ModSp2}
  there, the statement hence follows with Theorem \ref{CohomologyOfBorelEquivariantTwistorFibration}.
\hfill \end{proof}

In fact, we have a stronger statement:
\begin{remark}[Cochain-level model of the C-field]
While Corollary \ref{ShiftedIntegralityUnderTTheoreticChernCharacter}
produces Ho{\v r}ava-Witten's identity \eqref{HeteroticFluxQuantizationFromTwistorialCohomotopy}
between the cohomology classes related to the
C-field in heterotic M-theory,
the twistorial character map from Prop. \ref{ChernCharacterInJTwistedCohomotopy}
gives a little more
information, namely an explicit differential form (cochain) model
for these cohomology classes. Incidentally,
this cochain expression for the C-field,
$$
  G_4 \;=\;  \tfrac{1}{4}p_1(\omega) - c_2(A) + d H_3
$$
as obtained from twistorial Cohomotopy in the second line of
\eqref{TwistorialChernCharacter}
(and from differential twistorial Cohomotopy in \cite[(296)]{FSS20c}), coincides with the
proposed model for the C-field in \cite[(3.9)]{DFM03}
(under identifying our $H_3$ with minus their $c$ and
our $G_4$ with minus their $G$).
\end{remark}

\medskip

\appendix

\section{Quaternion-linear groups}

For reference, we record some basics of
quaternion-linear groups:

\begin{defn}[Special quaternion-linear group]
  \label{SpecialQuaternionLinearGroup}
  The special quaternion-linear group
  \vspace{-2mm}
  \begin{equation}
    \label{DefSL2H}
    \mathrm{SL}(2,\mathbb{H})
    \;:=\;
    \big\{
      \left.
      \!\!
      G
      \;\in\;
      \mathrm{Mat}
      (2 \times 2, \mathbb{H})
    \;\right\vert\;
    \mathrm{det}_{\mathrm{Di}}(G)
    \;=\;
    1
    \big\}
  \end{equation}

    \vspace{-2mm}
\noindent  is the group of $2 \times 2$ quaternionic matrices
  with unit \emph{Dieudonn{\'e} determinant}
  \cite{Dieudonne43} (review in \cite{Aslaken96}\cite[\S 1]{VenanciaBatista20}).
\end{defn}
\begin{remark}[Size of $\mathrm{SL}(2,\mathbb{H})$]
  When restricted along the inclusion of complex matrices into quaternionic matrices
    \vspace{-2mm}
  $$
    \xymatrix{
      \mathrm{Mat}(2 \times 2, \mathbb{C})
     \; \ar@{^{(}->}[r]^-{i_{\mathbb{C}}}
      &
      \mathrm{Mat}(2 \times 2, \mathbb{H})
    }
  $$
  the Dieudonn{\'e} determinant does \emph{not} reduce to the
  ordinary determinant, but to its absolute value:
    \vspace{-2mm}
  \begin{equation}
    \label{DieudonneDeterminantOnComplexMatrices}
    \mathrm{det}_{\mathrm{Di}}
    \big(
      i_{\mathbb{C}}(A)
    \big)
    \;=\;
    \left\Vert
      \mathrm{det}(A)
    \right\Vert
    \,.
  \end{equation}

    \vspace{-2mm}
\noindent  Accordingly, $\mathrm{SL}(2,\mathbb{H})$ (Def. \ref{SpecialQuaternionLinearGroup})
  is larger than the notation might suggest:
  For instance, it follows immediately
  from \eqref{DieudonneDeterminantOnComplexMatrices} that
  all complex unitary matrices have unit
  Dieudonn{\'e} determinant. In fact,
  Example \ref{SubgroupsOfQuaternionLinearGroups} says that
  the full quaternion-unitary group (Def. \ref{QuaternionicGroups})
  is contained in $\mathrm{SL}(2,\mathbb{H})$ \eqref{Sp2InSL2H}
  (and hence coincides with what would otherwise be called
  $\mathrm{SU}(2,\mathbb{H})$).
\end{remark}
\begin{defn}[Unitary quaternion-linear groups]
  \label{QuaternionicGroups}
  Let $n \in \mathbb{N}$.
$\,$
\vspace{-2mm}
\item  {\bf (i) } The $n \times n$ \emph{quaternionic unitary group} is
\vspace{-2mm}
  \begin{equation}
    \label{Spn}
    \mathrm{Sp}(n)
    \;:=\;
    \mathrm{U}(n,\mathbb{H})
    \;:=\;
    \big\{
      \left.
      G \in \mathrm{GL}(n, \mathbb{H})
      \,\right\vert\,
      G \cdot G^\dagger \;=\; 1
    \big\}
    \,,
  \end{equation}

  \vspace{-2mm}
\noindent  where $(-)^\dagger$ denotes matrix transpose combined with quaternionic
  conjugation.

\item  {\bf (ii)} The \emph{central product group} of $\mathrm{Sp}(n_1)$
  with $\mathrm{Sp}(n_2)$ is
  \vspace{-2mm}
  \begin{equation}
    \label{CentralProductOfQuaternionUnitaryGroups}
    \mathrm{Sp}(n_1)
    \cdot
    \mathrm{Sp}(n_2)
    \;:=\;
    \big( \mathrm{Sp}(n_1) \times \mathrm{Sp}(n_2) \big)/
    \underset{
      \simeq \, \mathbb{Z}_2
    }{
      \underbrace{
        \{(1,1), (-1,-1)\}
      }
    }
  \end{equation}

\vspace{-4mm}
\end{defn}
\begin{example}[Subgroups of quaternion-linear groups]
  \label{SubgroupsOfQuaternionLinearGroups}
  We have the following canonical subgroup inclusions into
  special quaternion-linear (Def. \ref{SpecialQuaternionLinearGroup})
  and unitary quaternion-linear groups (Def. \ref{QuaternionicGroups}):

  \noindent
  {\bf (i)} The algebra inclusion of the complex numbers into the
  quaternions induces:
  \vspace{-2mm}
\begin{equation}
  \xymatrix@R=2pt{
    {\phantom{\mathrm{U}(n,}}
    \mathbb{C}
    \;\;\;
        \ar@{^{(}->}[r]
    &
    \!\!{\phantom{\mathrm{U}(n,}}
    \mathbb{H}
    \\
        \mathrm{U}(n,\mathbb{C})
    \;
    \ar@{}[dd]|-{
      \rotatebox[origin=c]{-90}{
        $=$
      }
    }
    \ar@{^{(}->}[r]
    &
    \mathrm{U}(n, \mathbb{H})
     \ar@{}[dd]|-{
      \rotatebox[origin=c]{-90}{
        $=$
      }
    }
   \\
   \\
    \mathrm{U}(n)
    \;
    \ar@{^{(}->}[r]
    &
    \mathrm{Sp}(n)
  }
\end{equation}

\vspace{-2mm}
\noindent {\bf (ii)}
We write
\begin{equation}
  \label{LeftRightSp1Inclusion}
  \xymatrix@R=-2pt{
    \mathrm{Sp}(1)_L
    \times
    \mathrm{Sp}(1)_R
    \;
    \ar@{^{(}->}[r]
    &
    \mathrm{Sp}(2)
    \\
    (q_L, q_R)
    \ar@{|->}[r]
    &
    \mathrm{diag}(q_L, q_R)
  }
\end{equation}
for the subgroup of $\mathrm{Sp}(2)$ given by the diagonal
matrices with coefficients in unit-norm quaternions $q$, hence the
direct product group of two copies of $\mathrm{Sp}(1)$,
equipped with their left and right factor embedding, as indicated.

  \noindent
  {\bf (iii)}
  The unitary quaternion-linear $2 \times 2$-matrices
  (Def. \ref{QuaternionicGroups}) have Dieudonn{\'e}-determinant
  (Def. \ref{SpecialQuaternionLinearGroup}) equal to 1 {\cite[6.4]{CohenDeLeo00}}
  and hence include into the special quaternion-linear group:
    \vspace{-2mm}
  \begin{equation}
    \label{Sp2InSL2H}
    \mathrm{Sp}(2)
    \;=\;
    \mathrm{U}(2,\mathbb{H})
    \;\subset\;
    \mathrm{SL}(2, \mathbb{H})\;.
  \end{equation}

\vspace{-2mm}
\noindent  {\bf (iv)} There is the canonical subgroup inclusion
of symplectic-unitary groups into their central product groups
\eqref{CentralProductOfQuaternionUnitaryGroups}
\vspace{-2mm}
  \begin{equation}
    \label{CanonicalInclusionIntoCentralProductGroup}
    \xymatrix@R=-2pt{
      \mathrm{Sp}(n_1)
      \; \ar@{^{(}->}[rr]
      &&
      \mathrm{Sp}(n_1) \cdot \mathrm{Sp}(n_2)
      \\
      A
        \ar@{|->}[rr]
      &&
      [A,1]
    }
  \end{equation}

\end{example}

\end{document}